\newtheorem{theorem}{Theorem}[section]
\newtheorem{lemma}[theorem]{Lemma}
\newtheorem{definition}[theorem]{Definition}
\newtheorem{proposition}[theorem]{Proposition}
\newtheorem{claim}[theorem]{Claim}
\newtheorem{corollary}[theorem]{Corollary}
\newcommand{\bbC}{\mathbb{C}}
\newcommand{\bbF}{\mathbb{F}}
\newcommand{\bbN}{\mathbb{N}}
\newcommand{\bbR}{\mathbb{R}}
\newcommand{\bbT}{\mathbb{T}}
\newcommand{\bbU}{\mathbb{U}}
\newcommand{\bbZ}{\mathbb{Z}}
\newcommand{\bfa}{\mathbf{a}}
\newcommand{\bfb}{\mathbf{b}}
\newcommand{\bfd}{\mathbf{d}}
\newcommand{\bfh}{\mathbf{h}}
\newcommand{\bfx}{\mathbf{x}}
\newcommand{\bfL}{\mathbf{L}}
\newcommand{\bfP}{\mathbf{P}}
\newcommand{\bfQ}{\mathbf{Q}}
\newcommand{\bfR}{\mathbf{R}}
\newcommand{\bfS}{\mathbf{S}}
\newcommand{\bit}{\{0,1\}}
\newcommand{\Fp}{\mathbb{F}_p}
\newcommand{\calB}{\mathcal{B}}
\newcommand{\calI}{\mathcal{I}}
\newcommand{\calP}{\mathcal{P}}
\newcommand{\calR}{\mathcal{R}}
\newcommand{\calV}{\mathcal{V}}
\newcommand{\sfe}{\textsf{e}}
\newcommand{\rank}{\mathrm{rank}}
\newcommand{\depth}{\mathrm{depth}}
\newcommand{\E}{\mathop{\mathbf{E}}}
\newcommand{\Var}{\mathop{\mathbf{Var}}}
\newcommand{\dtv}{\mathrm{d_{TV}}}
\begin{document}
\title{A Characterization of Locally Testable Affine-Invariant Properties via Decomposition Theorems}
\author{
  Yuichi Yoshida\thanks{
    Supported by JSPS Grant-in-Aid for Research Activity Start-up (24800082), MEXT Grant-in-Aid for Scientific Research on Innovative Areas (24106001), and JST, ERATO, Kawarabayashi Large Graph Project.
  }\\
  National Institute of Informatics and Preferred Infrastructure, Inc.\\
  \texttt{yyoshida@nii.ac.jp}
}

\maketitle

\begin{abstract}
  Let $\calP$ be a property of function $\Fp^n \to \bit$ for a fixed prime $p$.
  An algorithm is called a tester for $\calP$ if, 
  given a query access to the input function $f$, 
  with high probability,
  it accepts when $f$ satisfies $\calP$ and rejects when $f$ is ``far'' from satisfying $\calP$.
  In this paper, 
  we give a characterization of affine-invariant properties that are (two-sided error) testable with a constant number of queries.
  The characterization is stated in terms of decomposition theorems, 
  which roughly claim that any function can be decomposed into a structured part that is a function of a constant number of polynomials, and a pseudo-random part whose Gowers norm is small.
  We first give an algorithm that tests whether the structured part of the input function has a specific form.
  Then we show that an affine-invariant property is testable with a constant number of queries if and only if 
  it can be reduced to the problem of testing whether the structured part of the input function is close to one of a constant number of candidates.
\end{abstract}

\section{Introduction}
In \emph{property testing}, 
we want to distinguish objects that satisfy a predetermined property $\calP$ from objects that are ``far'' from satisfying $\calP$.
Intuitively, we say that an object is far from satisfying $\calP$ if we must modify a constant fraction of the object to make it satisfy $\calP$.
By ignoring objects that do not satisfy $\calP$ but are close to satisfying $\calP$,
sometimes we can design very efficient algorithms for testing $\calP$ that run even in constant time, which is independent of the object size.
For an overview of recent developments in this area, we refer the reader to surveys~\cite{Ron:2010ua,Rubinfeld:2011ik} and a book~\cite{Goldreich:2011cg}.

In this paper, we consider testing properties of functions $f:\Fp^n \to \bit$, where $p$ is a fixed prime.
We say that a function $f$ is \emph{$\epsilon$-far} from a property $\calP$ if we must modify an $\epsilon$-fraction of values of $f$ to make it satisfy $\calP$.
In other words, for any function $g$ that satisfies $\calP$, we have $\Pr_{x \in \Fp^n}[f(x) \neq g(x)] \geq \epsilon$, where $x$ is chosen uniformly at random.
Otherwise, the function $f$ is called \emph{$\epsilon$-close} to $\calP$.
We formally define testers as follows.
\begin{definition}[Tester]
  An algorithm is called an $\epsilon$-tester for a property $\calP$ if,
  given a query access to a function $f:\Fp^n \to \bit$,
  with probability at least $2/3$,
  it accepts when $f$ satisfies $\calP$ and rejects when $f$ is $\epsilon$-far from $\calP$.
\end{definition}
The parameter $\epsilon$ is called the \emph{proximity parameter}.
The probability threshold $2/3$ is not so important since we can make it $1-\delta$ for any $\delta >0$ by running the tester $O(\log 1/\delta)$ times and take the majority of outputs.
If a property is testable with query complexity that depends only on $\epsilon$ (and $\calP$) but not on $n$, 
it is called \emph{locally testable} or \emph{strongly testable}.
A tester is called a \emph{one-sided error tester} for a property $\calP$ if it always accepts functions satisfying $\calP$, and is called a \emph{two-sided error tester} otherwise.

In this paper, we consider the (two-sided error) testability of affine-invariant properties.
For a matrix $L \in \Fp^{m \times n}$ and a vector $c \in \Fp^m$,
the pair $A=(L,c)$ is called an \emph{affine transformation}, and it maps $x \in \Fp^n$ to $Lx + c \in \Fp^m$.
We say that an affine transformation $A=(L,c)$ is \emph{non-singular} if $L$ is non-singular.
A property $\calP$ of functions is called \emph{affine-invariant} if,
for any function $f:\Fp^n \to \bit$ satisfying $\calP$ and any non-singular affine transformation $A:\Fp^n \to \Fp^n$, the function $f \circ A:\Fp^n \to \bit$ also satisfies $\calP$.
Many affine-invariant properties are known to be locally testable, including linearity~\cite{Blum:1993cn}, the property of being a low-degree polynomial~\cite{Alon:2005jl}, and Fourier sparsity~\cite{Gopalan:2011jj}.
Kaufman and Sudan~\cite{Kaufman:2008jm} made explicit that these properties are affine-invariant and initiated a general study of the testability of affine-invariant properties.
In particular, they asked for necessary and sufficient conditions of local testability of affine-invariant properties.
The main contribution of this paper is answering their question by giving a characterization of locally testable affine-invariant properties.

Alon et al.~\cite{Alon:2009gn} showed a combinatorial characterization of locally testable properties for (dense) graphs.
The characterization is based on Szemer{\'e}di's regularity lemma~\cite{Szemeredi:1975tr}, 
which roughly claims that any graph can be partitioned into a constant number of parts so that every pair of parts forms a random bipartite graph.
Their characterization indicates that a graph property is locally testable if and only if densities of these bipartite graphs determine whether the property holds.
A point here is that, if a graph property is locally testable, then whether it holds only depends on a constant-size sketch of the input graph, namely the set of densities.

When studying affine-invariant properties,
higher-order Fourier analysis provides us a way to extract such a constant-size sketch from a function.
The main technical tools we exploit here are the decomposition theorems shown in~\cite{Bhattacharyya:2013ii},
which roughly claim that any function $f:\Fp^n \to \bit$ can be decomposed as $f = f' + f''$, where $f'$ is a ``structured'' part of $f$ and $f'$ is a ``pseudo-random'' part of $f$.
Here $f'$ is structured in the sense that it can be expressed as $f' = \Gamma(P_1,\ldots,P_C)$ for some function $\Gamma$ and non-classical polynomials $P_1,\ldots,P_C$ of constant degrees.
The precise definition of a non-classical polynomial is given later (Section~\ref{sec:preliminaries}).
Here we only have to understand that, besides degree, 
a non-classical polynomial $P$ has a parameter called \emph{depth}, which is less than the degree of $P$.
In this paper, if we refer to a polynomial, it is always a non-classical polynomial.
We can assume that the range of a non-classical polynomial $P$ of depth $h$ is $\bbU_{h+1} = \frac{1}{p^{h+1}}\bbZ/\bbZ$, the set of multiples of $\frac{1}{p^{h+1}}$ in $[0,1]$.
% A standard polynomial has depth $0$.
Hence, $\Gamma$ is a function from $\prod_{i=1}^C \bbU_{h_i+1} $ to $[0,1]$, where $h_i$ is the depth of the polynomial $P_i$ for each $i \in \{1,\ldots,C\}$.

In our setting, $\Gamma(P_1,\ldots,P_C)$ will be used as a sketch of a function $f$.
An issue here is that the polynomials $P_1,\ldots,P_C$ depend on $n$ values,
and thus they may not have constant-size representations.
However, we can ensure in the decomposition that the polynomial sequence $\bfP = (P_1,\ldots,P_C)$ has a \emph{high rank}.
We give a precise definition of rank later (Section~\ref{sec:preliminaries}).
What we need to know here is that,
if a polynomial sequence $\bfP = (P_1,\ldots,P_C)$ has a high rank and we sample $x \in \Fp^n$ uniformly at random,
then the distribution of the tuple $\bfP(x) = (P_1(x),\ldots,P_C(x))$ looks almost random in $\prod_{i=1}^C \bbU_{h_i+1}$.
Hence, provided that the rank is high,
many properties of $f' = \Gamma(P_1,\ldots,P_C)$ are determined only by the function $\Gamma$, degrees of $P_1,\ldots,P_C$, and depths of $P_1,\ldots,P_C$.

The function $f''$ is pseudo-random in the sense that its Gowers norm of order $d$, denoted $\|f''\|_{U^d}$, is small,
where $d$ is more than the maximum degree of $P_1,\ldots,P_C$.
The \emph{Gowers norm} of order $d$ measures correlation with polynomials of degree less than $d$ (See Section~\ref{sec:preliminaries} for further details).
We can show that, if $\|f''\|_{U^d}$ is small, 
then it does not significantly affect the distribution of $f$ restricted to a random affine subspace of a constant dimension in $\Fp^n$.
This distribution is very important since it is known that,
if an affine-invariant property $\calP$ is locally testable,
then there is a \emph{canonical tester} for $\calP$ with a constant query complexity, whose answer only depends on the distribution~\cite{Bhattacharyya:2010gb}.
Hence, when studying the testability of affine-invariant properties,
it turns out that we only have to look at $\Gamma$, degrees, and depths.

%We will show that a property $\calP$ is locally testable if and only if $\calP$ is determined by the structured part of a function.
To explicitly express the form of a structured part, we define \emph{regularity-instances} as follows.
Here, $\bbN$ denotes the set of non-negative integers.
\begin{definition}[Regularity-instance]\label{def:regularity-instance}
  A regularity-instance $I$ is a tuple of
  \begin{itemize}
  \item an error parameter $\gamma \in \bbR$ with $\gamma > 0$,
  \item a structure function $\Gamma:\prod_{i=1}^C \bbU_{h_{i+1}} \to [0,1]$,
  \item a complexity parameter $C \in \bbN$,
  \item a degree-bound parameter $d \in \bbN$,
  \item a degree parameter $\bfd = (d_1,\ldots,d_C) \in \bbN^C$ with $d_i < d$ for each $i \in \{1,\ldots,C\}$,
  \item a depth parameter $\bfh = (h_1,\ldots,h_C) \in \bbN^C$ with $h_i < d_i$ for each $i \in \{1,\ldots,C\}$, and
  \item a rank parameter $r \in \bbN$.
  \end{itemize}
  The \emph{complexity} of the regularity-instance is $\max(1/\gamma,C,d,r)$.
\end{definition}
Here, the name ``regularity-instance'' is taken from~\cite{Alon:2009gn}.
We define the property of satisfying a regularity-instance as follows.
\begin{definition}[Satisfying a regularity-instance]\label{def:satisfy-regularity-instance}
  A function $f:\Fp^n \to \bit$ is said to satisfy a regularity-instance $I = (\gamma,\Gamma,C,d,\bfd,\bfh,r)$ if 
  there is a function $\Upsilon:\Fp^n \to [-1,1]$ and a polynomial sequence $\bfP = (P_1,\ldots,P_C)$ on $n$ variables such that
  \begin{itemize}
  \item $f(x) = \Gamma(\bfP(x)) + \Upsilon(x)$ for any $x \in \Fp^n$,
  \item $P_i$ has degree exactly $d_i$ and depth exactly $h_i$ for each $i \in \{1,\ldots,C\}$,
  \item the rank of the polynomial sequence $\bfP$ is at least $r$, and
  \item $\|\Upsilon\|_{U^d} \leq \gamma$.
  \end{itemize}
\end{definition}

% The simplest property that only depends 
% is the property of having a structured part of some specific form.
% We first show a tester for this property.

The first requirement we need in order to obtain our characterization of locally testable properties is that the property of satisfying a regularity-instance is locally testable, provided that the rank parameter is chosen to be sufficiently high.
%If we say that the degree parameter of a regularity instance $I = (\gamma,\Gamma,C,\bfd,\bfh,r)$ is at most $d$, it means $\max \bfd \leq d$.
% \begin{theorem}\label{the:regularity-instance->testable}
%   For any $\gamma,\epsilon >0$ and $C,d \in \bbN$, there exists $r_{\ref{the:regularity-instance->testable}}(\epsilon,\gamma,C,d)$ such that,
%   for any regularity-instance $I$ with the error parameter $\gamma$,
%   the complexity parameter $C$, the degree-bound parameter $d$, and rank parameter at least $r_{\ref{the:regularity-instance->testable}}(\epsilon,C,d)$,
%   there is an $\epsilon$-tester for the property of satisfying $I$ with a constant query complexity.
% \end{theorem}
\begin{theorem}\label{the:regularity-instance->testable}
  For any $\epsilon > 0$ and any regularity-instance $I = (\gamma,\Gamma,C,d,\bfd,\bfh,r)$ with $r \geq r_{\ref{the:regularity-instance->testable}}(\gamma,\epsilon,C,d)$,
  there is an $\epsilon$-tester for the property of satisfying $I$ with a constant query complexity.
\end{theorem}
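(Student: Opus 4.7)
The plan is to employ the canonical tester: sample a random affine subspace $S \subseteq \Fp^n$ of some constant dimension $k$ (depending on $I$ and $\epsilon$), query $f$ at each of the $p^k$ points of $S$, and compare the empirical distribution of the pattern $f|_S \in \bit^{p^k}$ to a reference distribution $\calD_I$ induced by the regularity-instance $I$. Running this sampling procedure a constant number of times and accepting iff the empirical distribution of patterns is within $o_\epsilon(1)$ of $\calD_I$ in total variation will be our $\epsilon$-tester. The reference distribution $\calD_I$ is defined by sampling a polynomial sequence $\bfP^* = (P^*_1,\ldots,P^*_C)$ on $\Fp^k$ whose $P^*_i$ has degree $d_i$ and depth $h_i$ and whose rank is at least some sufficiently large $r' = r'(\gamma,\epsilon,C,d,k)$, and then recording the ``random rounding'' pattern that $\Gamma(\bfP^*(\cdot))$ induces on $\bit^{p^k}$ (the $[0,1]$-valued $\Gamma$ is interpreted as the mean of a Bernoulli, consistent with $f - \Gamma(\bfP) = \Upsilon$ being $U^d$-small).

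For completeness, suppose $f$ satisfies $I$ via $f = \Gamma(\bfP) + \Upsilon$. Because $\bfP$ has rank at least $r \gg 0$, the high-rank equidistribution lemma (a standard consequence of the rank definition in higher-order Fourier analysis used throughout \cite{Bhattacharyya:2013ii}) implies that the joint distribution of $\bfP|_S$ over a random affine subspace $S$ of dimension $k$ is $o(1)$-close (in total variation) to the distribution of a random high-rank polynomial sequence on $\Fp^k$ of the prescribed type. Consequently the distribution of the pattern $\Gamma(\bfP|_S)$ is close to $\calD_I$. Finally, because $\|\Upsilon\|_{U^d} \le \gamma$ and $d$ exceeds every $d_i$, known bounds relating the Gowers norm of order $d$ to the statistics of a function restricted to random affine subspaces of dimension $k < d$ show that the perturbation $\Upsilon|_S$ contributes only $O(\gamma)$-error to the induced pattern distribution, so the overall distribution is close to $\calD_I$ and the tester accepts.

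For soundness we argue by contrapositive: if $f$ causes our tester to accept with significant probability, then $f$ is $\epsilon$-close to satisfying $I$. Apply the decomposition theorem of \cite{Bhattacharyya:2013ii} to $f$ with parameters matched to $I$ to obtain $f = \Gamma'(\bfQ) + \Upsilon'$ with $\bfQ$ high-rank and $\|\Upsilon'\|_{U^d}$ small, where $\bfQ$ can be arranged to contain a subsequence of the exact type $(\bfd,\bfh)$ prescribed by $I$ (possibly after refining $\bfQ$ further). Because $\bfQ$ is high-rank, the distribution of $f|_S$ is determined, up to $o(1)$ error, by $\Gamma'$ together with the degrees/depths of $\bfQ$; matching that distribution to $\calD_I$ therefore forces $\Gamma'$ to agree with $\Gamma$ (on the relevant coordinates) up to average $L^1$-error. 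Substituting this agreement into the decomposition $f = \Gamma'(\bfQ) + \Upsilon'$ yields a function $\tilde f = \Gamma(\bfP)$ satisfying $I$ with $\Pr_x[f(x) \ne \tilde f(x)] < \epsilon$, completing the contrapositive.

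The main obstacle will be the rigidity step in soundness: converting the near-equality of pattern distributions into an actual pointwise closeness of the decomposition $\Gamma'(\bfQ)$ to $\Gamma(\bfP)$ for some polynomial sequence of the exact specified type. This requires (i) carefully choosing which high-rank decomposition of $f$ to use, so that its ``type'' can be aligned with $(\bfd,\bfh)$ by selecting a subsequence of $\bfQ$, and (ii) converting a match in the $k$-local distribution into a match in the structure functions themselves, which again relies on the high-rank regime behaving like a genuine product distribution. Keeping the rank parameter $r$, the subspace dimension $k$, and the various approximation errors consistent across the completeness and soundness arguments (so that a single choice of $r_{\ref{the:regularity-instance->testable}}(\gamma,\epsilon,C,d)$ works) is where the bookkeeping will concentrate.
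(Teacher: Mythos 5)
Your tester and the paper's tester are both ``canonical'' in the Bhattacharyya--Grigorescu--Shapira sense, but they make genuinely different decisions: the paper restricts $f$ to a single random affine embedding $A:\Fp^m \to \Fp^n$ (for a constant $m$) and accepts iff $f\circ A$ is $\delta$-close to \emph{satisfying $I$ as a function on $\Fp^m$}; you instead estimate the distribution of $f$ restricted to random $k$-dimensional affine subspaces and compare it to a reference distribution $\calD_I$. Your completeness argument roughly parallels the paper's (Lemmas~\ref{lem:bad-affine-embedding}, \ref{lem:gowers-norm-after-affine-embedding}, and~\ref{lem:small-perturbation} do the work). The problem is soundness.

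You flag the ``rigidity step'' --- converting near-equality of pattern distributions into pointwise closeness to some $\Gamma(\bfP)$ of the prescribed type --- as a bookkeeping concern, but it is actually a genuine gap, and your sketch of it is not correct as stated. The claim that matching the pattern distribution ``forces $\Gamma'$ to agree with $\Gamma$'' does not obviously hold: the distribution of $\Gamma\circ\bfP$ over random affine subspaces can be invariant under nontrivial relabelings of atoms (permutations of $\prod_i\bbU_{h_i+1}$ consistent with the dependency constraints), so two structure functions that are far in $L^1$ could induce the same pattern distribution. More fundamentally, even if you pin down $\Gamma$, you have not exhibited a polynomial sequence $\bfP$ \emph{on $\Fp^n$} with exactly the degrees, depths, and rank demanded by $I$ such that $\|f-\Gamma\circ\bfP\|_{U^d}$ is small --- which is what satisfying $I$ requires. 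The paper's soundness proof (Lemma~\ref{lem:regularity-instance-soundness}) does this constructively: starting from the assumption that $f\circ A$ is $\delta$-close to satisfying $I$ via a concrete $\bfP'$ on $\Fp^m$, it decomposes $f$ on $\Fp^n$ via the regularity lemma as $f=\Sigma\circ\bfR+f_2+f_3$, forms a common refinement $\overline{\bfR'}=\bfR'\cup\bfS'$ of $\calB(\bfR\circ A)$ and $\calB(\bfP')$, and then \emph{lifts} $\bfS'$ back to $\Fp^n$ via a left-inverse $A^+$ with $A^+A=I_m$. This produces $\overline{\bfR}$ and hence $\bfP=(\Gamma_i\circ\overline{\bfR})_i$ on $\Fp^n$ whose degree, depth, and rank are verified to match $I$ exactly (Claims~\ref{cla:property-of-overline-bfR} and~\ref{cla:property-of-bfP}, using Lemma~\ref{lem:degree-non-increase}), and whose Gowers-norm deviation from $f$ is controlled (Claims~\ref{cla:gowers-norm-of-phi-overline-bfR'-small} and~\ref{cla:diff-of-gowers-norms-small} via Lemma~\ref{lem:Gamma-decides-gowers-norm}). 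Your proposal has no analogue of this lifting step, and without something replacing it, the soundness direction does not close.
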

What we must be careful about here is that,
in order to satisfy $I = (\gamma,\Gamma,C,d,\bfd,\bfh,r)$, 
the input function $f(x)$ should be close to $\Gamma(\bfP(x))$ for a polynomial sequence $\bfP = (P_1,\ldots,P_C)$ such that the polynomial $P_i$ has degree \emph{exactly} $d_i$ and depth \emph{exactly} $h_i$ for each $i \in \{1,\ldots,C\}$.
These conditions are important when studying locally testable properties since the distribution of a function restricted to a random affine subspace is determined by exact degrees and depths, but not by their upper bounds.
To ensure that these conditions are satisfied, we need the rank condition in Theorem~\ref{the:regularity-instance->testable}.
We note that the property of satisfying a regularity-instance $I$ is affine-invariant, that is, closed under non-singular affine transformations, but is \emph{not} closed under all affine transformations since the degree and the depth of a polynomial and the rank of a polynomial sequence may decrease through affine transformations.
This means that we can only test the property with two-sided error since we can only look at the restriction of the input function to an affine subspace.
%\footnote{This is the reason why we have used non-singular affine transformations to define affine-invariant properties. If they were defined using all affine transformations, then for any function satisfying an affine-invariant property, its restriction to an affine subspace also satisfies the property. This means that we would get a one-sided error tester for any locally testable property, and we would exclude many interesting properties such as the property of having a distance $\delta$ to linear functions for a fixed $\delta>0$.}

Suppose that we replace the condition ``exactly'' by ``at most'' and drop the rank condition in Definition~\ref{def:satisfy-regularity-instance}.
Under this definition, the property of satisfying a regularity-instance is closed under all affine transformations.
Indeed, 
if we further require that the function $\Upsilon$ is constantly zero,
this property is called a \emph{degree-structural property},\footnote{In~\cite{Bhattacharyya:2013ii}, degree-structural properties are defined using a constant number of regularity-instances.}
and known to be locally testable with one-sided error~\cite{Bhattacharyya:2013ii}.

One might be skeptical about the usefulness of Theorem~\ref{the:regularity-instance->testable} since it is unclear whether there is indeed a polynomial sequence $\bfP$ that has the required rank as it depends on the size of $\bfP$.
To clarify this problem, we recall the polynomial regularity lemma~\cite{Tao:2011dw,Bhattacharyya:2013ii},
which claims that, given any sequence of $C$ polynomials with degrees at most $d$ and a function $r:\bbN \to \bbN$,
we can ``refine'' the sequence to make a new sequence of $C'$ polynomials of degrees at most $d$ with rank at least $r(C')$ for some constant $C'$ that depends on $C$, $d$, and $r$.
We also note that we cannot remove the dependency to $\gamma$ and $\epsilon$ from the rank condition in Theorem~\ref{the:regularity-instance->testable} since,
the smaller they are,
the more we want the polynomial sequence to behave randomly in order to achieve a local tester.

For a parameter $\delta > 0$, we say that a regularity-instance with complexity $C$ and degree parameter $d$ has a \emph{high rank with respect to $\delta$} if its rank parameter is at least $r_{\ref{the:regularity-instance->testable}}(\gamma,\delta/8,C,d)$.
The reason we use $\delta/8$ instead of $\delta$ is technical and will be discussed in Section~\ref{sec:reducible->testable}.
The following definition aims to capture function properties that are locally testable via testing a certain set of regularity-instances.
\begin{definition}[Regular-reducible]\label{def:regular-reducible}
  A property $\calP$ is regular-reducible if,
  for any $\delta>0$,
  there exists $s$ such that,
  for any $n \in \bbN$, there is a family $\calI$ of at most $s$ regularity-instances each with a complexity at most $s$ and high rank with respect to $\delta$ with the following properties.
  For every $\epsilon > \delta$ and a function $f:\Fp^n \to \bit$,
  \begin{itemize}
  \item if $f$ satisfies $\calP$, then for some $I \in \calI$, $f$ is $\delta$-close to satisfying $I$, and
  \item if $f$ is $\epsilon$-far from satisfying $\calP$, then for any $I \in \calI$, 
    $f$ is $(\epsilon-\delta)$-far from satisfying $I$.
  \end{itemize}
\end{definition}
Now we are ready to state our characterization of locally testable affine-invariant properties.
\begin{theorem}\label{the:testable->regular-reducible}
  If an affine-invariant property is locally testable, then it is regular-reducible.
\end{theorem}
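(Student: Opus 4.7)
\emph{Proof plan.} My approach combines three inputs: the canonical tester reduction of Bhattacharyya--Chen--Sudan~\cite{Bhattacharyya:2010gb}, the strong polynomial decomposition theorem of~\cite{Bhattacharyya:2013ii}, and the near-equidistribution of high-rank polynomial sequences on random affine subspaces (which underlies Theorem~\ref{the:regularity-instance->testable}). Fix $\delta \in (0,1]$ (larger $\delta$ trivially satisfy Definition~\ref{def:regular-reducible} as distances lie in $[0,1]$). Since $\calP$ is affine-invariant and locally testable, for each $\delta' > 0$ it admits a \emph{canonical} $\delta'$-tester $T_{\delta'}$~\cite{Bhattacharyya:2010gb} of constant query complexity $q(\delta')$, which queries $f$ on a uniformly random $k$-dimensional affine subspace $A \subseteq \Fp^n$ (for some $k = k(\delta')$) and decides based only on $f|_A$. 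Hence $\Pr[T_{\delta'}\text{ accepts }f]$ depends only on the distribution $\mathcal{D}(f)$ of $f|_A$ induced by a uniformly random $A$.

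\emph{Constructing $\calI$.} Set $\delta' := \delta/4$, $k := k(\delta')$, pick $d > k$, and choose $\gamma = \gamma(\delta,k) > 0$ small enough that $\|\Upsilon\|_{U^d} \le \gamma$ implies $\dtv(\mathcal{D}(h+\Upsilon),\mathcal{D}(h)) \le \delta/30$ for every $h$. Let $\eta := 1/(100\,q(\delta'))$. Enumerate all tuples $(\Gamma, C, \bfd, \bfh)$ with $C$ bounded by the polynomial regularity lemma's complexity guarantee at the target rank, $\bfd, \bfh \in \{0,\ldots,d-1\}^C$, and $\Gamma$ drawn from a $(\gamma/2)$-net in the space of maps $\prod_{i=1}^C \bbU_{h_i+1} \to [0,1]$; for each tuple set $r := r_{\ref{the:regularity-instance->testable}}(\gamma,\delta/8,C,d)$. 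This yields a finite pool $\calI_0$ of regularity-instances whose size and individual complexities depend only on $\delta$. Define
\[
\calI := \{\, I \in \calI_0 \,:\, \exists\, g:\Fp^n\to\bit \text{ satisfying }\calP \text{ that is } \eta\text{-close to satisfying } I \,\}.
\]

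\emph{Completeness.} If $f \in \calP$, apply the strong decomposition of~\cite{Bhattacharyya:2013ii} with tolerance $\gamma/2$ to obtain $f = \Gamma_f(\bfP_f) + \Upsilon_f$, with $\bfP_f$ of rank $\ge r$, bounded degree/depth, and $\|\Upsilon_f\|_{U^d} \le \gamma/2$. Round $\Gamma_f$ coordinatewise to a representative $\Gamma$ in the $(\gamma/2)$-net; the shift added to $\Upsilon$ is pointwise bounded by $\gamma/2$, hence its $U^d$-norm is at most $\gamma/2$, so the new remainder has $U^d$-norm $\le \gamma$ and $f$ \emph{exactly} satisfies the resulting instance $I \in \calI_0$ (with $\bfP := \bfP_f$). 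Since $f$ itself is then an $\eta$-witness, $I \in \calI$ and $f$ is $0$- (hence $\delta$-)close to satisfying $I$.

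\emph{Soundness and main obstacle.} Suppose $f$ is $(\epsilon-\delta)$-close to satisfying some $I \in \calI$, witnessed by $f'$; by construction some $g \in \calP$ is $\eta$-close to satisfying $I$, witnessed by $g'$. The main obstacle, and the technical heart of the proof, is to establish $\dtv(\mathcal{D}(f'),\mathcal{D}(g')) \le \delta/10$. Both $f'$ and $g'$ decompose as $\Gamma(\bfP) + \Upsilon$ with the \emph{same} $\Gamma$, matching degree/depth profile $(C,\bfd,\bfh)$, $\|\Upsilon\|_{U^d} \le \gamma$, and $\mathrm{rank}(\bfP) \ge r$. Each Gowers remainder contributes at most $\delta/30$ by the choice of $\gamma$; the key structural input is that, when $r \ge r_{\ref{the:regularity-instance->testable}}(\gamma,\delta/8,C,d)$, the joint distribution of $\bfP|_A$ over a random $k$-dimensional affine subspace depends on $\bfP$ only through $(C,\bfd,\bfh)$ up to TV-error $\le \delta/30$. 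Three triangle inequalities then yield the claim. Combining with $\dist(g,g') \le \eta$, we get $\Pr[T_{\delta'}\text{ accepts }f'] \ge \Pr[T_{\delta'}\text{ accepts }g] - q(\delta')\eta - \delta/10 \ge 2/3 - 1/100 - 1/10 > 1/3$, so $f'$ is $\delta'$-close to $\calP$ by soundness of $T_{\delta'}$. The triangle inequality then gives $\dist(f,\calP) \le (\epsilon-\delta) + \delta/4 < \epsilon$, contradicting $\epsilon$-farness of $f$.
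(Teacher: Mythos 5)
Your proof is correct in essence and uses the same tool-chest as the paper (the canonical tester of Proposition~\ref{pro:canonical-tester}, the decomposition of Theorem~\ref{the:regularity-lemma}, and the high-rank equidistribution/stability lemmas), but the way you \emph{select} the regularity-instances is genuinely different. The paper puts a regularity-instance $I$ into $\calI$ based on a criterion \emph{intrinsic to $I$}: it first builds the idealized distribution $\mu_{I,m}$ (well-defined by Corollary~\ref{cor:only-gamma-matters-combined}) and admits $I$ iff the canonical tester's acceptance set gets mass $\sum_{v\in\calV}\mu_{I,m}[v]\ge 1/2$. You instead admit $I$ iff some function $g$ satisfying $\calP$ is $\eta$-close to satisfying $I$. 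Your version makes completeness nearly trivial ($f$ is its own $\eta$-witness after rounding $\Gamma_f$), at the cost of a slightly longer soundness step: you must travel through the intermediate witness $g'$ and stitch together three TV-distance bounds before invoking the tester's soundness, whereas the paper can read off the acceptance probability directly from $\sum_{v\in\calV}\mu_{I,m}[v]\ge 1/2$. Both architectures carry the load; yours moves the work from the completeness side to the soundness side and references the ambient function space in defining $\calI$, while the paper's $\calI$-membership test is local to $I$.

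Two technicalities to tighten. First, you set $r=r_{\ref{the:regularity-instance->testable}}(\gamma,\delta/8,C,d)$, but then in the soundness step you assert that this very rank makes the distribution of $\bfP|_A$ depend only on $(C,\bfd,\bfh)$ up to TV error $\delta/30$; that equidistribution/stability property actually needs $\mathrm{rank}(\bfP)\ge r_{\ref{lem:only-gamma-matters}}^{(\delta/30,d,k)}(C)$, which is a priori unrelated to $r_{\ref{the:regularity-instance->testable}}$. You should take $r$ to be the maximum of the two (the paper's choice of $r$ does exactly that); Definition~\ref{def:regular-reducible} only demands rank at least $r_{\ref{the:regularity-instance->testable}}(\gamma,\delta/8,C,d)$, so inflating $r$ is harmless. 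Second, your intermediate step bounds $\dtv(\mathcal{D}(f'),\mathcal{D}(\Gamma\circ\bfP_{f'}))$ where $\Gamma\circ\bfP_{f'}$ is $[0,1]$-valued rather than $\bit$-valued; this needs the random-rounding interpretation of $\mu_{h,m}$ introduced before Lemma~\ref{lem:small-gowers-norm->small-statistical-distance}, which you implicitly use but should make explicit.
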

\begin{theorem}\label{the:regular-reducible->testable}
  If an affine-invariant property $\calP$ is regular-reducible,
  then it is locally testable.
\end{theorem}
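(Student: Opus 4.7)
The plan is to construct an $\epsilon$-tester for $\calP$ by running, in parallel, the regularity-instance tester of Theorem~\ref{the:regularity-instance->testable} on each member of the family $\calI$ guaranteed by regular-reducibility. Given $\epsilon>0$, first pick $\delta>0$ small enough that $Q\delta < 1/12$, where $Q$ is the total number of queries made by the overall algorithm described below; since $Q$ depends only on $\epsilon$ and $\delta$ (via $s$ and the query complexity of each invoked tester), a valid $\delta$ exists and may be fixed as a function of $\epsilon$ alone. Applying regular-reducibility with this $\delta$ supplies, for each $n$, a family $\calI$ of at most $s$ regularity-instances, each of complexity at most $s$ and of high rank with respect to $\delta$; since testers are allowed to be non-uniform in $n$, the algorithm may simply hard-code $\calI(n)$.

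For each $I \in \calI$, run the tester $T_I$ from Theorem~\ref{the:regularity-instance->testable} with proximity parameter $\epsilon-2\delta$, amplified to error probability at most $1/(10s)$ by $O(\log s)$ independent repetitions, and output \emph{accept} iff at least one $T_I$ accepts. Soundness is immediate: if $f$ is $\epsilon$-far from $\calP$, then for every $I \in \calI$, $f$ is $(\epsilon-\delta)$-far, hence also $(\epsilon-2\delta)$-far, from satisfying $I$, so each $T_I$ rejects with probability at least $1-1/(10s)$, and a union bound over the at most $s$ members of $\calI$ makes the algorithm reject with probability at least $9/10$. For completeness, if $f$ satisfies $\calP$, there is some $I_0 \in \calI$ such that $f$ is $\delta$-close to a function $g$ satisfying $I_0$. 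Taking $T_{I_0}$ to be canonical (which is available for affine-invariant properties by the canonical-tester reduction of~\cite{Bhattacharyya:2010gb}), a union bound over the at most $Q$ queries gives $\Pr[T_{I_0}(f)\neq T_{I_0}(g)] \leq Q\delta \leq 1/12$, so $T_{I_0}$ accepts $f$ with probability at least $1-1/(10s)-1/12 > 2/3$, and hence so does the overall tester.

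The main subtlety I anticipate is justifying the use of a canonical tester in the completeness case, since this is what supplies the $\delta$-robustness that bridges ``$\delta$-close to satisfying $I_0$'' (the guarantee produced by regular-reducibility) and ``satisfies $I_0$'' (what Theorem~\ref{the:regularity-instance->testable} actually tests); the canonical-tester reduction of~\cite{Bhattacharyya:2010gb} should deliver this for the affine-invariant property of satisfying $I_0$, but it must be invoked explicitly and one should verify it goes through even though satisfying $I$ is closed only under \emph{non-singular} affine transformations. A secondary subtlety is the implicit fixed point in the choice of $\delta$: since $Q$ depends on $s$, which in turn depends on $\delta$, one must verify that a valid $\delta$ exists, which is routine because the relevant parameters are ultimately computable monotone functions of $\epsilon$ and $\delta$.
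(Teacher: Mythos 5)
Your approach genuinely departs from the paper's, and it has a gap that cannot be repaired without introducing the paper's key ingredient. Where the paper invokes the Hatami--Lovett estimation result (Theorem~\ref{the:testable->estimable}, via Corollary~\ref{cor:regularity-instance-estimable}) to get a \emph{tolerant} $(\delta,\epsilon-\delta)$-tester for each regularity-instance, you try to get by with the plain $\epsilon$-tester of Theorem~\ref{the:regularity-instance->testable} plus a robustness argument: since $T_{I_0}$ is canonical with $Q$ queries and $f$ is $\delta$-close to some $g$ satisfying $I_0$, you bound $\Pr[T_{I_0}(f)\neq T_{I_0}(g)]\le Q\delta$. That union-bound step is fine in isolation, and your side worry about non-singular affine transformations is a red herring --- the tester of Theorem~\ref{the:regularity-instance->testable} is already canonical by construction. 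The real problem is the one you flag as ``routine'' and wave away: you must choose $\delta$ so that $Q\delta<1/12$, but $Q$ is itself a function of $\delta$. Concretely, $Q$ depends on the query complexity of the canonical testers $T_I$, hence on the dimension $m$ of the affine subspaces they sample, hence on the complexity bound $s$ of the instances in $\calI$; and $s=s(\delta)$ is supplied by Definition~\ref{def:regular-reducible}, which places \emph{no} bound on how fast $s$ may grow as $\delta\to 0$. If, say, $s(\delta)$ grows like $2^{1/\delta}$, then $Q(\delta)\cdot\delta\to\infty$ and no admissible $\delta$ exists. So the fixed point you are relying on need not exist, and your ``computable monotone functions'' remark does not rescue it --- monotonicity is exactly what makes the circularity fatal.

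The paper avoids this circularity precisely by not needing $\delta\ll 1/Q$. It fixes $\delta=\epsilon/4$ outright, takes $\calI$ for that $\delta$, and then applies the tolerant tester of Corollary~\ref{cor:regularity-instance-estimable} to distinguish ``$\delta$-close to $I$'' from ``$(\epsilon-\delta)$-far from $I$''; the query complexity of that tolerant tester depends only on $\epsilon_2-\epsilon_1=\epsilon-2\delta=\epsilon/2$ and on the complexity of $I$ (at most $s(\epsilon/4)$), both functions of $\epsilon$ alone, so the parameter chain terminates cleanly. The tolerant tester also handles the requirement that the rank condition be ``high rank with respect to $\delta$,'' which is exactly calibrated so that $r\ge r_{\ref{the:regularity-instance->testable}}(\gamma,(\epsilon_2-\epsilon_1)/8,C,d)$ for the pair $(\epsilon_1,\epsilon_2)=(\delta,\epsilon-\delta)$. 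To repair your proof you would need to replace the plain tester by the $(\delta,\epsilon-\delta)$-tester of Corollary~\ref{cor:regularity-instance-estimable} in the completeness case; once you do, your soundness argument survives essentially unchanged and the proof collapses back into the paper's.
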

These theorems give a complete answer to the main question in the study of the (two-sided error) testability of affine-invariant properties.
On the other hand, they are interesting only qualitatively since the query complexity of the tester given by Theorem~\ref{the:regular-reducible->testable} is rather horrible -- Ackermann-like function that depends on $1/\epsilon$.
We note though that recent works by Kalyanasundaram and Shapira~\cite{Kalyanasundaram:2013gv} and by Conlon and Fox~\cite{Conlon:2012es} suggest that the very rapid growth of the query complexity function is in fact inherent in the nature of the problem.

As is evident from Definition~\ref{def:regular-reducible},
our characterization is not a quick recipe for inferring whether a given property is locally testable.
In particular, the rank condition may be an obstacle to obtain a set of regularity-instances to which a property is regular-reducible.
Hence, we provide a variant of regular-reducibility that is more accessible.
A \emph{rank-oblivious regularity-instance} is a regularity-instance without the rank parameter.
We define the property of satisfying rank-oblivious regularity-instances as follows.
\begin{definition}
  A function $f:\Fp^n \to \bit$ is said to satisfy a rank-oblivious regularity-instance $I = (\gamma,\Gamma,C,d,\bfd,\bfh)$ if there is a function $\Upsilon: \Fp^n \to [-1,1]$ and a polynomial sequence $\bfP = (P_1,\ldots,P_C)$ such that
  \begin{itemize}
  \item $f(x) = \Gamma(\bfP(x)) + \Upsilon(x)$,
  \item $P_i$ has degree $d_i$ and depth $h_i$ for each $i \in \{1,\ldots,C\}$.
  \item $\|\Upsilon\|_{U^d} \leq \gamma$.
  \end{itemize}
\end{definition}

Now we define a variant of regular-reducibility using rank-oblivious regularity-instances.
\begin{definition}[rank-obliviously regular-reducible]\label{def:rank-obliviously-regular-reducible}
  A property $\calP$ is rank-obliviously regular-reducible if,
  for any $\delta > 0$,
  there exists $s$ such that,
  for any $n \in \bbN$,
  there is a family $\calI$ of at most $s$ rank-oblivious regularity-instances each of complexity at most $s$ with the following properties.
  For every $\epsilon > \delta$ and a function $f:\Fp^n \to \bit$,
  \begin{itemize}
  \item if $f$ satisfies $\calP$, then for some $I \in \calI$,
  $f$ is $\delta$-close to satisfying $\calI$, and
  \item if $f$ is $\epsilon$-far from satisfying $\calP$,
  then for any $I \in \calI$,
  $f$ is $(\epsilon-\delta)$-far from satisfying $\calI$.
  \end{itemize}
\end{definition}

From the following theorem, we can also show local testability using rank-oblivious regular-reducibility.
\begin{theorem}\label{the:rank-obliviously-reducible->reducible}
  If a property is rank-obliviously regular-reducible,
  then the property is regular-reducible.
  In particular, the property is locally testable.
\end{theorem}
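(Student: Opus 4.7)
The strategy is to convert each rank-oblivious regularity-instance into a finite collection of full regularity-instances of high rank, using the polynomial regularity lemma (mentioned in the discussion following Theorem~\ref{the:regularity-instance->testable}) to upgrade the underlying polynomial sequences. Given $\delta > 0$, first apply rank-oblivious regular-reducibility with parameter $\delta$ to obtain a family $\calI_0$ of at most $s$ rank-oblivious regularity-instances $I_0 = (\gamma,\Gamma,C,d,\bfd,\bfh)$, each of complexity at most $s$. For each such $I_0$, set the target rank function $r(C') := r_{\ref{the:regularity-instance->testable}}(\gamma,\delta/8,C',d)$; the polynomial regularity lemma then yields a constant $C'_{\max}$ (depending only on $C$, $d$, and $r$) such that any polynomial sequence of length $C$ and degrees bounded by $d$ can be refined to one of length at most $C'_{\max}$ and rank at least $r(C')$.

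Enumerate all tuples $(C',\bfd',\bfh',F)$ with $C' \le C'_{\max}$, $\bfd' \in \{1,\ldots,d-1\}^{C'}$, $\bfh' \in \{0,\ldots,d-2\}^{C'}$, and $F = (F_1,\ldots,F_C)$ in the finite class of refinement maps $\prod_j \bbU_{h'_j+1} \to \prod_i \bbU_{h_i+1}$ produced by the regularity lemma. For each such tuple include in $\calI$ the full regularity-instance
\[
\hat I := (\gamma,\, \Gamma \circ F,\, C',\, d,\, \bfd',\, \bfh',\, r(C')).
\]
The enumeration is finite and the number of $I_0$'s is at most $s$, so $\calI$ has size and complexity bounded in terms of $\delta$ alone, meeting the requirements of Definition~\ref{def:regular-reducible}.

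Completeness follows directly: if $f$ satisfies $\calP$, then $f$ is $\delta$-close to some witness $g = \Gamma(\bfP) + \Upsilon$ for $I_0 \in \calI_0$; applying the polynomial regularity lemma to $\bfP$ yields a refinement $\bfP'$ (of rank $\ge r(C')$) and a refinement map $F$ with $\bfP = F(\bfP')$, so $g = (\Gamma \circ F)(\bfP') + \Upsilon$ witnesses that $g$ satisfies the corresponding $\hat I \in \calI$, whence $f$ is $\delta$-close to $\hat I$. For soundness, suppose for contradiction that $f$ is $\epsilon$-far from $\calP$ yet $(\epsilon-\delta)$-close to some $\hat I \in \calI$ derived from $I_0$ via $F$; a witness $g = (\Gamma \circ F)(\bfQ') + \Upsilon$ rewrites as $g = \Gamma(F(\bfQ')) + \Upsilon$, and the key property guaranteed by the refinement class is that $F(\bfQ')$ is a polynomial sequence of shape $(\bfd,\bfh)$ whenever $\bfQ'$ has shape $(\bfd',\bfh')$ and sufficiently high rank. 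Hence $g$ witnesses that $f$ is $(\epsilon-\delta)$-close to $I_0 \in \calI_0$, contradicting rank-oblivious regular-reducibility.

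The principal technical obstacle is the last point: showing that the refinement maps produced by the polynomial regularity lemma behave uniformly, in the sense that substituting any high-rank polynomial sequence of shape $(\bfd',\bfh')$ into $F$ again yields a polynomial sequence of the original shape $(\bfd,\bfh)$. This uniformity is not immediate from the standard statement of the regularity lemma and requires exploiting the near-equidistribution of high-rank polynomial sequences, so that polynomial identities satisfied by one realization of $\bfP'$ transfer to any other realization of the same shape. Once this uniformity is in hand, the local-testability conclusion in the ``in particular'' clause is immediate from Theorem~\ref{the:regular-reducible->testable}.
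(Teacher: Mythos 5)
Your approach is genuinely different from the paper's, and it contains a gap you acknowledge but materially underestimate.

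\paragraph{What the paper does.}
Rather than enumerating refinement maps, the paper constructs $\calI$ by applying the regularity lemma directly to each witness function $f_L = \Gamma_L\circ\bfP_L + \Upsilon_L$ of the rank-oblivious instance, producing a regularity-instance $I_H = (\gamma,\Gamma_H,C_H,d,\deg(\bfP_H),\depth(\bfP_H),\rank(\bfP_H))$. Soundness then never tries to ``undo'' the refinement: if $f$ is $(\epsilon-\delta)$-close to $f_H = \Gamma_H\circ\bfP_H + \Upsilon_H$ for some $I_H\in\calI$, the paper locates a function $g$ satisfying $\calP$ that is $\delta'$-close to $g_H = \Gamma_H\circ\bfQ_H + \Upsilon'_H$ with $\bfQ_H$ of the \emph{same} shape as $\bfP_H$. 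Lemma~\ref{lem:Gamma-decides-gowers-norm} then gives $\|\Gamma_H\circ\bfP_H - \Gamma_H\circ\bfQ_H\|_{U^d}\le\gamma$, hence $\|f_H-g_H\|_{U^d}\le 3\gamma=\rho$, and Lemma~\ref{lem:small-gowers-norm->small-statistical-distance} (with $m=1$) converts this to $\|f_H-g_H\|_1\le\delta'$, so $f$ is $\epsilon$-close to $g\in\calP$, a contradiction. All comparisons stay at the level of $\Gamma_H$ applied to two high-rank polynomial sequences of identical shape, which is exactly what Lemma~\ref{lem:Gamma-decides-gowers-norm} handles.

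\paragraph{Where your argument has a gap.}
Your soundness step requires that, for \emph{any} high-rank polynomial sequence $\bfQ'$ of shape $(\bfd',\bfh')$, the composition $F(\bfQ')$ is again a polynomial sequence of shape exactly $(\bfd,\bfh)$. You flag this as a ``technical obstacle'' that ``requires exploiting near-equidistribution'', but this is precisely the claim the paper is careful to avoid needing. Three separate things must be checked: (i) $F_i\circ\bfQ'$ is a non-classical polynomial at all --- this is non-trivial and requires Lemma~\ref{lem:degree-non-increase} (Theorem~4.1 of~\cite{Bhattacharyya:2013ii}); (ii) the degree of $F_i\circ\bfQ'$ is \emph{exactly} $d_i$, not merely $\le d_i$ --- Lemma~\ref{lem:degree-non-increase} only gives an upper bound, and the equality argument used in Claim~\ref{cla:property-of-bfP} relies crucially on $\bfP'=\bfP\circ A$ being an affine image of $\bfP$, a relationship your $\bfQ'$ does not have (applying Lemma~\ref{lem:degree-non-increase} symmetrically does close this for degree when both sequences have high enough rank, but you do not say so, and your rank function $r(C')$ is not chosen to guarantee $r_{\ref{lem:degree-non-increase}}(d)$); (iii) the \emph{depth} of $F_i\circ\bfQ'$ is exactly $h_i$ --- Lemma~\ref{lem:degree-non-increase} says nothing about depth, and the range of $F_i$ being contained in $\bbU_{h_i+1}$ only gives depth $\le h_i$. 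Definition~\ref{def:satisfy-regularity-instance} requires \emph{exact} degrees and depths, so the inequalities you would naturally obtain are not enough to conclude that $g$ witnesses $I_0$. Until (i)--(iii) are established, the soundness step ``$g$ witnesses that $f$ is $(\epsilon-\delta)$-close to $I_0$'' does not follow.

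\paragraph{What each approach buys.}
Your plan, if completed, would give a cleaner syntactic reduction: each rank-oblivious instance is mapped to an explicit finite set of rank-aware instances, with a transparent witness correspondence. The paper's argument is less constructive (the family $\calI$ is built from witness functions) but sidesteps the delicate shape-preservation question entirely by comparing distributions of restrictions to random affine subspaces, which is exactly what Lemmas~\ref{lem:Gamma-decides-gowers-norm} and~\ref{lem:small-gowers-norm->small-statistical-distance} are designed to do. If you want to salvage your route, the key missing ingredient is a uniform version of Lemma~\ref{lem:degree-non-increase} that also controls depth and gives equalities rather than inequalities when both sequences have high rank.
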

Unfortunately, it seems that the converse does not hold in general.
Nonetheless, Theorem~\ref{the:rank-obliviously-reducible->reducible} is useful to show local testability of interesting properties such as 
degree-structural properties~\cite{Bhattacharyya:2013ii}.
As an illustrative example,
using Theorem~\ref{the:rank-obliviously-reducible->reducible},
we show that the property of being a (classical) low-degree polynomial is locally testable in Section~\ref{sec:rank-oblivious-reducibility}.
We note that,
for this particular property,
an almost tight result is already known~\cite{Alon:2005jl}.

\subsection{Related work}
This work is a part of a sequence of works investigating the relationship between affine-invariance and testability of properties. 
As described, Kaufman and Sudan~\cite{Kaufman:2008jm} initiated the program.
There have been a number of studies on one-sided error testability of affine-invariant properties~\cite{Bhattacharyya:2011bf,Kral:2012va,Shapira:2009eo,Bhattacharyya:2010gb,Bhattacharyya:2012ud,Bhattacharyya:2013ii}.
In particular, Bhattacharyya~et~al.~\cite{Bhattacharyya:2010gb} conjectured that every subspace hereditary property is locally testable with one-sided error,
where a property $\calP$ is \emph{subspace hereditary} if,
for any function $f:\Fp^n \to \bit$ satisfying $\calP$,
its restriction to any affine subspace of $\Fp^n$ also satisfies $\calP$.
Resolving this conjecture would yield a combinatorial characterization of affine-invariant properties that are locally testable with one-sided error.
Although the conjecture has not yet been confirmed or refuted, 
Bhattacharyya~et~al.~\cite{Bhattacharyya:2013ii} showed that any subspace hereditary property of ``bounded complexity'' is locally testable with one-sided error.
The precise definition of complexity is technical and we omit here;
however, it is an integer associated with a property,
and all natural affine-invariant properties that we know of have bounded complexity.
Recently, Hatami and Lovett~\cite{Hatami:2013ux} showed that,
if an affine property $\calP$ is locally testable,
then the distance to $\calP$ can be estimated with a constant number of queries.
The main technical tool used to achieve these general results is higher-order Fourier analysis and especially decomposition theorems developed in~\cite{Bhattacharyya:2012ud,Bhattacharyya:2013ii}.
Among many works on higher-order Fourier analysis, we refer the reader to a book~\cite{Tao:2012ue} for an overview of the contemporary theory related to this topic.

These studies of the testability of affine-invariant properties parallel work in testability of graph properties.
In the \emph{adjacency graph model}~\cite{Goldreich:1998wa},
a graph $G=(V,E)$ is given as a query access to its adjacency matrix.
That is, if we specify two vertices, then the oracle returns whether there is an edge between them in $G$.
We say that a graph $G$ is \emph{$\epsilon$-far} from a property if we must add or remove at least $\epsilon |V|^2$ edges to make $G$ satisfy the property.
In this model, we can also locally test many properties such as $3$-colorability~\cite{Goldreich:1998wa} and triangle-freeness~\cite{Alon:2000dx}.
Alon and Shapira~\cite{Alon:2008ck} showed that a (natural) graph property is locally testable with one-sided error if and only if the property is hereditary,
where a graph property $\calP$ is \emph{hereditary} if,
for any graph satisfying $\calP$, its any induced subgraph also satisfies $\calP$.
Fischer and Newman~\cite{Fischer:2007bq} showed that if a graph property is locally testable, then the distance to the property can be estimated with a constant number of queries.
Based on this result, Alon~et~al.~\cite{Alon:2009gn} finally obtained a combinatorial characterization of locally testable properties.
Our work can be seen as an analogue of~\cite{Alon:2009gn} for affine-invariant properties.
Similarly to~\cite{Alon:2009gn}, our proof also uses the result of the estimation of distances to affine-invariant properties~\cite{Hatami:2013ux}.

Finally, we mention that a characterization of locally testable properties is known in a very different setting.
In the \emph{assignment testing of constraint satisfaction problems (CSPs)}, 
we are given an instance of a CSP and a query access to an assignment for the instance,
and we want to test whether the assignment is a satisfying assignment or far from being so.
Depending on the constraints we are allowed to use,
CSPs can express many different problems and the query complexity to test drastically changes from constant to linear (in the number of variables).
Recently, Bhattacharyya and Yoshida~\cite{Bhattacharyya:2013fa} completely classified Boolean constraints in terms of query complexity.

\subsection{Proof sketch}
We now give proof sketches of our main theorems.

We start by discussing Theorem~\ref{the:regularity-instance->testable}.
Fix a proximity parameter $\epsilon > 0$ and a regularity-instance $I = (\gamma,\Gamma,C,d,\bfd,\bfh,r)$ satisfying the rank condition.
Our tester is very simple.
That is,
for $\delta = \delta(\gamma,\epsilon,C,d)$ and $m = m(\gamma,\epsilon,C,d)$,
we choose a random affine embedding $A:\Fp^m \to \Fp^n$,
and accept if $f \circ A$ is $\delta$-close to satisfying $I$ and reject otherwise.
Here, an \emph{affine embedding} is an injective affine transformation.

Suppose that $f$ satisfies the regularity-instance $I$.
That is, $f(x) = \Gamma(\bfP(x)) + \Upsilon(x)$ for a polynomial sequence $\bfP$ on $n$ variables with degree $\bfd$, depth $\bfh$, and rank at least $r$, and a function $\Upsilon:\Fp^n \to [-1,1]$ with $\|\Upsilon\|_{U^d} \leq \gamma$.
Then, $f \circ A$ can be written as $f(Ax) = \Gamma((\bfP \circ A)(x)) + \Upsilon(Ax)$.
It is not difficult to show that,
with high probability over the choice of $A$,
$\bfP \circ A$ has the same degree, depth, and rank as $\bfP$,
and $\|\Upsilon \circ A\|_{U^d}$ is only slightly larger than $\gamma$.
We can then show that,
by perturbing $f \circ A$ up to a $\delta$-fraction,
we can decrease the Gowers norm of $\Upsilon \circ A$ to $\gamma$.
Hence, $f \circ A$ is $\delta$-close to satisfying $I$.

Now suppose that $f$ is $\epsilon$-far from satisfying $I$.
Assume that (with high probability) $f \circ A$ is $\delta$-close to satisfying $I$.
In such a case, $f \circ A$ can be written as 
\[
  f(Ax) = \Gamma(\bfP'(x)) + \Upsilon'(x) + \Delta'(x)
\] 
for a polynomial sequence $\bfP' = (P'_1,\ldots,P'_C)$ on $m$ variables with degree $\bfd$, depth $\bfh$, and rank at least $r$, a function $\Upsilon':\Fp^m \to [-1,1]$ with $\|\Upsilon'\|_{U^d} \leq \gamma$, and a function $\Delta':\Fp^m \to \{-1,0,1\}$ with $\|\Delta\|_1 \leq \delta$.
Our strategy is to construct a polynomial sequence $\bfP$ on $n$ variables from $\bfP'$ with degree $\bfd$, depth $\bfh$, and rank at least $r$ so that $\|f - \Gamma \circ \bfP\|_{U^d}$ is slightly larger than $\gamma$. %, where $(\Gamma \circ \bfP)(x) = \Gamma(\bfP(x))$.
Then, by slightly perturbing $f$ (up to an $\epsilon$-fraction),
we can decrease its Gowers norm to $\gamma$.
Hence, $f$ is $\epsilon$-close to satisfying $I$ and we reach a contradiction.

Using the regularity lemma, we can decompose the input function $f:\Fp^n \to \bit$ as $f = f_1+f_2+f_3$.
Here, $f_1(x) = \Sigma(\bfR(x))$ for some function $\Sigma:\bbT^{|\bfR|} \to [0,1]$ and a high-rank polynomial sequence $\bfR$, where $\bbT = \bbR / \bbZ$ is the circle group.
Also, $f_2:\Fp^n \to [-1,1]$ has small $L_2$-norm\footnote{We did not mention $f_2$ when discussing the polynomial regularity lemma earlier in this introduction. However it turns out that the Gowers norm of $f_2$ can be made small. Thus $f_2$ can also be seen as a pseudo-random part of $f$.}, and $f_3:\Fp^n \to [-1,1]$ has a small Gowers norm.
With this decomposition,
by letting $\bfR' = \bfR \circ A$,
we can express $f \circ A$ as 
\[
  f(Ax) = \Sigma(\bfR'(x)) + f_2(Ax) + f_3(Ax).
\]
Hence, we have obtained two ways of expressing the function $f \circ A$.

Now we introduce the notion of a factor.
Note that a polynomial sequence $\bfQ = (Q_1,\ldots,Q_D)$ on $m$ variables of depth $(h_1,\ldots,h_D)$ defines a partition of the space $\prod_{i=1}^D \bbU_{h_i+1}$.
That is, for any tuple $(b_1,\ldots,b_D)$ with $b_i \in \bbU_{h_i+1}$ for each $i \in \{1,\ldots,D\}$,
there is a corresponding part, called an \emph{atom}, $\{x \in \Fp^m \mid (Q_1(x),\ldots,Q_D(x) = (b_1,\ldots,b_D)\}$.
We call the partition the \emph{factor} defined by $\bfQ$ and denote it by $\calB(\bfQ)$.

Now we come back to our argument on Theorem~\ref{the:regularity-instance->testable}.
Using a variant of the polynomial regularity lemma, 
given two polynomial sequences $\bfR'$ and $\bfP'$,
we can find a polynomial sequence $\bfS'$ of degree less than $d$ with the following property:
By letting $\overline{\bfR'} = \bfR' \cup \bfS'$,
the factor $\calB(\overline{\bfR'})$ is a refinement of both the factor $\calB(\bfR')$ and the factor $\calB(\bfP')$.
Hence, for each $i \in \{1,\ldots,C\}$,
we can find a function $\Gamma_i:\bbT^{|\overline{\bfR'}|} \to \bbT$ such that 
$P'_i(x) = \Gamma_i(\overline{\bfR'}(x))$.
Then we can write 
\[
  f(Ax) = \Sigma(\bfR'(x)) + f_2(Ax) + f_3(Ax) = \Gamma(\Gamma_1(\overline{\bfR'}(x)),\ldots,\Gamma_C(\overline{\bfR'}(x))) + \Upsilon'(x) + \Delta'(x).
\]
Let $\Phi(x) = f_2(Ax) + f_3(Ax) - \Upsilon'(x) - \Delta'(x)$.
Since the factor $\calB(\overline{\bfR'})$ is a refinement of the factor $\calB(\bfR')$,
$\Phi(x)$ is a function that is constant on each atom of $\calB(\overline{\bfR'})$.
Hence, we can express
\[
  \Sigma(\bfR'(x)) + \Phi(\overline{\bfR'}(x)) = \Gamma(\Gamma_1(\overline{\bfR'}(x)),\ldots,\Gamma_C(\overline{\bfR'}(x))).
\]
Here, we reuse the symbol $\Phi$.

Using the condition that $\overline{\bfR'}$ has a high rank,
we can show that, for every $b \in \bbT^{|\overline{\bfR'}|}$ in the range of $\overline{\bfR'}$ (determined by the depth of $\overline{\bfR'}$),
by choosing $x \in \Fp^m$ uniformly at random,
there is a positive probability that $\overline{\bfR'}(x)$ takes the value $b$.
Hence, for every $a \in \bbT^{|\bfR'|}$ in the range of $\bfR'$ and every $b \in \bbT^{|\bfS'|}$ in the range of $\bfS'$,
we have
\[
  \Sigma(a) + \Phi(a,b) = \Gamma(\Gamma_1(a,b),\ldots,\Gamma_C(a,b)).
\]

Now we define $\overline{\bfR} = \bfR \cup (\bfS' \circ A^+)$, where $A^+:\Fp^n \to \Fp^m$ is an affine transformation with $A^+ A = I_m$.
Further we define $P_i = \Gamma_i \circ \overline{\bfR}$ for each $i \in \{1,\ldots,C\}$ and $\bfP = (P_1,\ldots,P_C)$.
From the observation above,
we have for any $x \in \Fp^n$,
\[
  \Sigma(\bfR(x)) + \Phi(\overline{\bfR}(x)) = \Gamma(\bfP(x)).
\]
Since $\bfP'$ can be obtained from $\bfP$ by applying an affine transformation,
for each $i \in \{1,\ldots,C\}$,
the degree and the depth of $P_i$ is at least those of $P'_i$,
and the rank of $\bfP$ is at least that of $\bfP'$.
Using the fact that $\bfP$ is of high rank,
we can indeed show that,
for each $i \in \{1,\ldots,C\}$,
the degree and the depth of $P_i$ are exactly the same as those of $P'_i$.
Hence, $\Gamma \circ \bfP$ satisfies conditions required by the regularity-instance $I$.

Recalling that  $f = f_1+f_2+f_3$ with $f_1 = \Sigma \circ \bfR$,
we have for any $x \in \Fp^n$,
\[
  f(x) - \Gamma(\bfP(x)) = f_2(x) + f_3(x) - \Phi(\overline{\bfR}(x)).
\]
Then we want to show that $\|f_2 + f_3 - \Phi(\overline{\bfR})\|_{U^d} \leq \gamma + o(\gamma)$.
It is clear that the Gowers norms of $f_2$ and $f_3$ are $o(\gamma)$ from the property of the decomposition.
We show that $\|\Phi \circ \overline{\bfR}\|_{U^d} \leq \gamma + o(\gamma)$ by showing that $\|\Phi \circ \overline{\bfR'}\|_{U^d} \leq \gamma + o(\gamma)$ and $|\|\Phi \circ \overline{\bfR}\|_{U^d} - \|\Phi \circ \overline{\bfR'}\|_{U^d}| = o(\gamma)$.
To show the former, recall that $\Phi \circ \overline{\bfR'} = f_2 \circ A + f_3 \circ A - \Upsilon' - \Delta'$.
It is not difficult to show that, with high probability over the choice of $A$, $\|f_2 \circ A\|_{U^d}$ and $\|f_3 \circ A\|_{U^d}$ are small as $\|f_2\|_{U^d}$ and $\|f_3\|_{U^d}$ are small.
We have $\|\Upsilon'\|_{U^d} \leq \gamma$ from the assumption.
Since $\|\Delta'\|_1 \leq \delta$,
by choosing $\delta$ small enough and using the relation between the $L_1$ norm and the Gowers norm, we can also bound $\|\Delta'\|_{U^d}$.
Showing the latter is technical,
but basically it holds since the restrictions of $\overline{\bfR}$ and $\overline{\bfR'}$ to a random affine subspace of dimension $d$ look similar because of their high ranks and the Gowers norm only depends on these restrictions.

Now that the Gowers norm of $f(x) - \Gamma(\bfP(x))$ is at most $\gamma + o(\gamma)$,
we can show that,
by perturbing $f$ up to an $\epsilon$-fraction, we can obtain a function that satisfies the regularity-instance $I$, which contradicts the assumption that $f$ is $\epsilon$-far from satisfying $I$.
% (Indeed, the $o(\gamma)$ part must depend on $\epsilon$.)

The proof of Theorem~\ref{the:testable->regular-reducible} is similar to~\cite{Alon:2009gn}.
From the whole high-rank regularity-instances of complexity bounded by some constant,  
we take a ``$\tau$-net'' $\calR$ for suitably chosen $\tau$.
Hence, any function $f:\Fp^n \to \bit$ is $\tau$-close to some instance $I \in \calR$ in a certain sense.
From the argument of canonical testers,
if a property $\calP$ is locally testable,
then whether or not $f$ satisfies $\calP$ should depend only on the distribution of the restriction of $f$ to a random affine subspace.
Since $f$ is $\tau$-close to a regularity-instance $I$,
we can approximate the distribution using the structure function of $I$.
Hence, as $\calI$ in Definition~\ref{def:regular-reducible},
we choose regularity-instances $I$ for which the canonical tester accepts under the distribution associated with $I$.

The proof of Theorem~\ref{the:regular-reducible->testable} is almost immediate once we have Theorem~\ref{the:regularity-instance->testable}.
However, note that we want to test whether the input function is \emph{close to} satisfying some regularity-instance in $\calI$.
Hence, we use a recent result by Hatami and Lovett~\cite{Hatami:2013ux},
which states that,
if an affine-invariant property is locally testable,
then we can estimate the distance to the property with a constant number of queries.
One issue with which we must be careful is that Theorem~\ref{the:regularity-instance->testable} does not claim that a fixed regularity-instance is locally testable since the rank condition depends on $\epsilon$.
Nonetheless we can apply the result by Hatami and Lovett since,
when we want to distinguish the case that a function is $\epsilon_1$-close to a property $\calP$ from the case that it is $\epsilon_2$-far from $\calP$ for $0 < \epsilon_1 < \epsilon_2 < 1$,
we only require the testability of $\calP$ with the proximity parameter $O(\epsilon_2 - \epsilon_1)$.
This condition indeed holds since we have included the high-rank condition in Definition~\ref{def:regular-reducible}.

\subsection{Discussion}
We have obtained a characterization of locally testable affine-invariant properties using decomposition theorems.
A natural remaining problem in testability of affine-invariant properties is obtaining a characterization of properties that are locally testable with one-sided error.
As we have mentioned, Bhattacharyya~et~al.~\cite{Bhattacharyya:2010gb} conjectured that such properties are essentially subspace hereditary properties.
Another natural open problem is to find characterizations or at least sufficient conditions of properties that are testable with query complexity that is polynomial in $1/\epsilon$.
% Our characterization only guarantees gives testers with query complexity that is only a huge function of $1/\epsilon$.

Permutation-invariant properties have also been well studied in the literature,
where a function property $\calP$ is called \emph{permutation-invariant} if,
for any function $f:\bit^n \to \bit$ satisfying $\calP$,
any function obtained from $f$ by relabeling input bits also satisfies $\calP$.
It seems that a permutation-invariant property $\calP$ is locally testable whenever any function satisfying $\calP$ has a ``concise'' representation~\cite{Blais:2009kt,Diakonikolas:2007gm,Yoshida:2012dm}.
However, how to formalize the idea to obtain a full characterization of locally testable properties in this setting remains unclear.
% One natural question may be finding useful canonical testers or decomposition theorems.

\subsection{Organization}
In Section~\ref{sec:preliminaries},
we review definitions and basic results in higher-order Fourier analysis.
In Section~\ref{sec:small-perturbation},
we show that,
if a function $f$ has a structure part and a pseudo-random part similar to a regularity-instance $I$,
then $f$ is indeed close to satisfying $I$.
In Section~\ref{sec:regularity-instance->testable},
we show that any regularity-instance is locally testable, provided that the rank parameter is sufficiently high (Theorem~\ref{the:regularity-instance->testable}).
In Section~\ref{sec:testable->regular-reducible} we show that any locally testable property is regular-reducible (Theorem~\ref{the:testable->regular-reducible}) and in Section~\ref{sec:reducible->testable} we show that any regular-reducible property is locally testable (Theorem~\ref{the:regular-reducible->testable}).
In Section~\ref{sec:rank-oblivious-reducibility}, we show that rank-oblivious regular-reducibility implies regular-reducibility (Theorem~\ref{the:rank-obliviously-reducible->reducible}) and provide one of its applications.

\section{Preliminaries}\label{sec:preliminaries}
For $f : \Fp^n \to \bbC$ we denote $\|f\|_1= \E_x[|f(x)|]$, $\|f\|_2= \E_x[|f(x)|^2]$ where $x \in \Fp^n$ is chosen uniformly at random and $\|f\|_\infty = \max_x|f(x)|$.
Note that $\|f\|_1 \leq  \|f\|_2 \leq \|f\|_\infty$.
The expression $o_m(1)$ denotes quantities which approach zero as $m$ grows.
We shorthand $x \pm \epsilon$ for any quantity in $[x - \epsilon, x+\epsilon]$.
For probability distributions $\mu$ and $\mu'$ over the domain $A$,
we define the \emph{statistical distance} $\dtv(\mu,\mu')$ between $\mu$ and $\mu'$ by 
$\dtv(\mu,\mu') = \frac{1}{2}\sum_{a \in D}|\mu(a) - \mu(a')| = \frac{1}{2}\|\mu - \mu'\|_1$.
In this paper, bold symbols indicate sets (of integers, polynomials, etc).

In what follows, we introduce definitions and results about higher-order Fourier analysis.
Most of the material in this section is directly quoted from~\cite{Bhattacharyya:2013ii,Hatami:2013ux}.

\subsection{Uniformity norms and non-classical polynomials}

\begin{definition}[Multiplicative Derivative]
  Given a function $f:\Fp^n \to \bbC$, and an element $h \in \Fp^n$,
  define the multiplicative derivative in direction $h$ of $f$ to be the function $\Delta_hf:\Fp^n \to \bbC$ satisfying $\Delta_h f(x) = f(x+h)\overline{f(x)}$ for all $x \in \Fp^n$.
\end{definition}

\begin{definition}[Gowers norm]
  Given a function $f:\Fp^n \to \bbC$ and an integer $d \geq 1$, the Gowers norm of order $d$ for $f$ is given by
  \begin{align*}
    \|f\|_{U^d} = \left|\E_{x,y_1,\ldots,y_d \in \Fp^n}[(\Delta_{y_1}\Delta_{y_2} \cdots \Delta_{y_d}f)(x)] \right|^{1/2^d}.
  \end{align*}
\end{definition}

Note that,
as $\|f\|_{U^1} = |\E[f]|$, the Gowers norm of order $1$ is only a semi-norm.
However for $d > 1$, it is not difficult to show that $\|\cdot\|_{U^d}$ is indeed a norm.

The following lemma connects the Gowers norm and the $L_1$ norm.
\begin{lemma}[Claim 2.21 of~\cite{Hatami:2013ux}]\label{lem:bound-gowers-norm-by-l1-norm}
  Let $f:\Fp^n \to [-1,1]$. For any $d \in \bbN$,
  \[
    \|f\|_{U^d} \leq \|f\|_1^{1/2^d}.
  \]
\end{lemma}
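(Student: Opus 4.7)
The plan is to prove the bound by induction on $d$, exploiting the standard recursive identity
\[
  \|f\|_{U^d}^{2^d} = \E_{h \in \Fp^n}\bigl[\|\Delta_h f\|_{U^{d-1}}^{2^{d-1}}\bigr]
\]
which follows immediately from the definition of the Gowers norm by pulling the $y_d$ average out of the expectation (here, because $f$ is real-valued, we may ignore complex conjugates). I would in fact prove the slightly stronger-looking inequality $\|f\|_{U^d}^{2^d} \leq \|f\|_1$ for any $f:\Fp^n \to [-1,1]$, from which the claim is just taking $2^d$-th roots.

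For the base case $d=1$, we have $\|f\|_{U^1}^{2} = |\E_x[f(x)]|^2 \leq (\E_x[|f(x)|])^2 = \|f\|_1^2$ by Jensen, and since $|f| \leq 1$ we have $\|f\|_1 \leq 1$, so $\|f\|_1^2 \leq \|f\|_1$. Thus $\|f\|_{U^1}^{2} \leq \|f\|_1$.

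For the inductive step, suppose $\|g\|_{U^{d-1}}^{2^{d-1}} \leq \|g\|_1$ for every $g:\Fp^n \to [-1,1]$. Fix $f:\Fp^n \to [-1,1]$ and observe that for every $h \in \Fp^n$ the derivative $\Delta_h f(x) = f(x+h)f(x)$ also takes values in $[-1,1]$, so the inductive hypothesis applies to $\Delta_h f$. Using the recursive identity and then the inductive hypothesis pointwise in $h$, followed by $|f(x)| \leq 1$:
\[
  \|f\|_{U^d}^{2^d} = \E_{h}\bigl[\|\Delta_h f\|_{U^{d-1}}^{2^{d-1}}\bigr] \leq \E_h \|\Delta_h f\|_1 = \E_{x,h}\bigl[|f(x+h)|\,|f(x)|\bigr] \leq \E_{x,h}|f(x+h)| = \|f\|_1.
\]
Taking $2^d$-th roots completes the induction.

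The only step that requires care is the recursive identity for the Gowers norm, but it is a direct manipulation of the defining expectation (split the average over $(x,y_1,\ldots,y_d)$ as an average over $y_d$ of an average over $(x, y_1,\ldots,y_{d-1})$, and observe that the inner expression is precisely $(\Delta_{y_1}\cdots\Delta_{y_{d-1}}\Delta_{y_d}f)(x)$ whose average over the first $d-1$ directions is $\|\Delta_{y_d}f\|_{U^{d-1}}^{2^{d-1}}$). I do not anticipate a genuine obstacle; the main thing to be careful about is that the induction is carried out with the $2^d$-th power form of the bound so that the $\|f\|_\infty \leq 1$ hypothesis can be used to absorb one factor of $f$ at each step.
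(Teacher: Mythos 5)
Your proof is correct; note, though, that the paper does not prove this lemma itself but merely cites it as Claim~2.21 of~\cite{Hatami:2013ux}, so there is no internal argument to compare against. Your inductive route through the recursion $\|f\|_{U^d}^{2^d} = \E_h\|\Delta_h f\|_{U^{d-1}}^{2^{d-1}}$ is a standard and clean way to do it, and the two details that need care (that the un-absolute-valued expectation defining $\|\cdot\|_{U^{d-1}}^{2^{d-1}}$ is nonnegative so the identity is legitimate, and that $\Delta_h f$ again takes values in $[-1,1]$ so the hypothesis applies) are both handled. As a small remark, an even shorter non-inductive argument is available: writing out $\|f\|_{U^d}^{2^d} = \bigl|\E_{x,y_1,\ldots,y_d}\prod_{\omega \in \{0,1\}^d} f\bigl(x + \sum_{i} \omega_i y_i\bigr)\bigr|$, one bounds the integrand in absolute value by $|f(x)|$ alone (discarding the other $2^d-1$ factors since each is at most $1$ in magnitude), and $\E_{x,y_1,\ldots,y_d}|f(x)| = \|f\|_1$; this avoids the recursion entirely, though it encodes essentially the same idea of peeling off bounded factors.
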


If $f = e^{2\pi i P/p}$ where $P: \Fp^n \to \Fp$ is a polynomial of degree less than $d$,
then $\|f\|_{U^d} = 1$.
If $d < p$ and $\|f\|_{\infty} \leq 1$,
then in fact,
the converse holds,
meaning that any function $f: \Fp^n \to \bbC$ satisfying $\|f\|_\infty \leq 1$ and $\|f\|_{U^d} = 1$  is of this form.
But when $d \geq p$,
the converse is no longer true.
In order to characterize functions $f:\Fp^n \to \bbC$ with $\|f\|_\infty \leq 1$ and $\|f\|_{U^d} = 1$, we define the notion of non-classical polynomials.

Non-classical polynomials might not be necessarily $\Fp$-valued.
We need to introduce some notation.
Let $\bbT$ denote the circle group $\bbR / \bbZ$.
This is an abelian group with group operation denoted $+$.
For an integer $k \geq 0$,
let $\bbU_k$ denote $\frac{1}{p^k}\bbZ / \bbZ$, a subgroup of $\bbT$.
Let $\iota : \Fp \to \bbU_1$ be the injection $x \mapsto \frac{|x|}{p} \bmod 1$,
where $|x|$ is the standard map from $\Fp$ to $\{0,1,\ldots,p-1\}$.
Let $\sfe:\bbT \to \bbC$ denote the character $\sfe(x) = e^{2\pi i x}$.

\begin{definition}[Additive Derivative]
  Given a function $P : \Fp^n \to \bbT$ and an element $h \in \Fp^n$,
  define the additive derivative in direction $h$ of $f$ to be the function $D_hP : \Fp^n \to \bbT$ satisfying $D_hP(x)=P(x+h)-P(x)$ for all $x \in \Fp^n$.
\end{definition}
\begin{definition}[Non-classical polynomials]
  For an integer $d \geq 0$,
  a function $P : \Fp^n \to \bbT$ is said to be a non-classical polynomial of degree at most $d$ (or simply a polynomial of degree at most $d$) if for all $x,y_1,\ldots,y_{d+1} \in \Fp^n$, it holds that
  \begin{align*}
    (D_{y_1} \cdots D_{y_{d+1}}P)(x) = 0. 
  \end{align*}
  The degree of $P$ is the smallest $d$ for which the above holds.
  A function $P : \Fp^n \to \bbT$ is said to be a classical polynomial of degree at most $d$ if it is a non-classical polynomial of degree at most $d$ whose image is contained in $\iota(\Fp)$.
\end{definition}
It is a direct consequence that a function $f : \Fp^n \to \bbC$ with $\|f\|_\infty \leq 1$ satisfies $\|f\|_{U^{d+1}} = 1$ if and only if $f = \sfe(P)$ for a (non-classical) polynomial $P : \Fp^n \to \bbT$ of degree at most $d$.

\begin{lemma}[Lemma 1.7 in~\cite{Tao:2011dw}]
  A function $P : \Fp^n \to \bbT$ is a polynomial of degree at most $d$ if and only if $P$ can be represented as
  \begin{align*}
    P(x_1,\ldots,x_n) = \alpha + \sum_{\substack{0\leq d_1,\ldots,d_n < p; h \geq 0: \\ 0 < \sum_i d_i \leq d - h(p-1)}} \frac{c_{d_1,\ldots,d_n,h}|x_1|^{d_1} \cdots |x_n|^{d_n}}{p^{h+1}} \bmod 1,
  \end{align*}
  for a unique choice of $c_{d_1,\ldots,dn,h} \in \{0,1,\ldots,p-1\}$ and $\alpha \in \bbT$.
  The element $\alpha$ is called the \emph{shift} of $P$,
  and the largest integer $h$ such that there exist $d_1,\ldots,d_n$ for which $c_{d_1,\ldots,d_n,h} \neq 0$ is called the \emph{depth} of $P$.
  Classical polynomials correspond to polynomials with 0 shift and 0 depth.
\end{lemma}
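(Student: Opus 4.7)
The plan is to reduce the claim to a one-variable statement and then induct on the depth. For any two real-valued (or $\bbT$-valued) functions $f,g$ on $\Fp^n$, the iterated additive derivative obeys a Leibniz-type expansion from which one sees that the non-classical degree is subadditive under products: if $f$ has degree $\leq d_f$ and $g$ has degree $\leq d_g$, then $fg$ has degree $\leq d_f+d_g$. Since each monomial in the proposed representation is a product of single-variable factors $|x_i|^{d_i}$ and a constant $1/p^{h+1}$ (viewed mod $1$), it suffices to prove the following one-variable lemma: for each $k \geq 0$ and $h \geq 0$, the function $M_{k,h}(x) := |x|^k/p^{h+1} \bmod 1$, viewed as $\Fp\to\bbT$, has non-classical degree at most $k+h(p-1)$.

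The ``if'' direction then follows once $M_{k,h}$ is analyzed. I would prove the single-variable lemma by induction on $h$. The base case $h=0$ writes $M_{k,0}=\iota\circ(x\mapsto x^k)$, which is a classical $\Fp$-valued polynomial of degree $k$ and hence a non-classical polynomial of degree $k$. For the inductive step, compute $D_y M_{k,h}(x)=(|x+y|^k-|x|^k)/p^{h+1}\bmod1$ using the identity $|x+y|=|x|+|y|-p\,\chi(x,y)$, where $\chi(x,y)\in\{0,1\}$ indicates the carry. Expanding $(|x|+|y|-p\chi)^k$ by the binomial theorem and collecting: the terms involving no power of $p\chi$ are polynomials of degree $\leq k-1$ in $|x|$ with denominator $p^{h+1}$, handled by an inner induction on $k$; the terms involving $(p\chi)^j$ with $j\geq 1$ absorb one factor of $p$ into the denominator, producing contributions of the form $\chi^j|x|^{k-j}|y|^{\ast}/p^{h}$, and since $\chi(\cdot,y)$ is itself a classical polynomial of degree $\leq p-1$ in $|x|$ (by Lagrange interpolation on $\Fp$), the outer inductive hypothesis on $h$ bounds their degree by $(p-1)+(k-j)+(h-1)(p-1)\leq k+h(p-1)-1$. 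This shows $D_y M_{k,h}$ has degree $\leq k+h(p-1)-1$, completing the induction.

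For the ``only if'' and uniqueness directions, I would use a dimension-counting argument. Any function $P:\Fp^n\to\bbT$ admits a unique Lagrange representation $P(\bfx)=\sum_{0\leq d_i<p}\alpha_\bfd|x_1|^{d_1}\cdots|x_n|^{d_n}\bmod 1$ with $\alpha_\bfd\in\bbT$; the question is which tuples $(\alpha_\bfd)$ correspond to non-classical polynomials of degree $\leq d$. The ``if'' direction already shows that every tuple with $\alpha_\bfd\in\bbU_{h+1}$ for $h\leq\lfloor(d-\sum d_i)/(p-1)\rfloor$ yields a degree-$\leq d$ polynomial. Conversely, letting $W_d$ denote the set of non-classical polynomials $\Fp^n\to\bbT$ of degree $\leq d$, one can compute $|W_d|$ recursively (e.g., by noting that multiplication by $p$ sends $W_d$ onto $W_{d-(p-1)}$ with kernel the classical-polynomial part, allowing induction on $d$), and check that it equals the number of tuples allowed by the representation; combined with injectivity from Lagrange, this forces equality of the two families. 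Uniqueness of the coefficients $c_{d_1,\ldots,d_n,h}\in\{0,\ldots,p-1\}$ and $\alpha\in\bbT$ then follows by matching Lagrange coefficients and using the unique $p$-adic expansion in $\bbU_{h+1}$.

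The main obstacle is the carry calculation in the inductive step of the single-variable lemma: one must track that $\chi(x,y)$, despite being a combinatorially defined step function, admits a polynomial description in $|x|$ of the right degree, and that the factors of $p$ it produces combine cleanly with the denominator $p^{h+1}$ to land exactly at depth $h$. The cardinality matching for $W_d$ versus the parameter space is the other delicate point; the cleanest route is an induction using the fact that $p\cdot P$ has strictly smaller depth whenever $P$ is non-classical with $h\geq 1$, and is a classical polynomial precisely when $h=0$.
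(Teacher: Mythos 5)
This lemma is cited from Tao--Ziegler and is not proved in the paper, so there is no in-paper argument to compare against; I assess your sketch on its own merits.

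Your ``if'' direction is workable in outline, but two points need tightening. First, one cannot multiply two $\bbT$-valued functions, so the ``subadditivity of degree under products'' must be applied to a factorization in which at most one factor is $\bbT$-valued and the remaining factors (the $|x_i|^{d_i}$) are $\bbZ$-valued; then the Leibniz expansion makes sense, but the statement as you wrote it (``real-valued or $\bbT$-valued'') does not. Second, writing $\chi(\cdot,y)$ as a degree-$(p-1)$ polynomial in $|x|$ by Lagrange interpolation produces \emph{rational} coefficients, which wrecks the bookkeeping of powers of $p$ in the denominator; what you actually need is to expand the integer-valued function $\chi(\cdot,y)\cdot|x|^{a}$ in the binomial basis $\binom{|x|}{0},\ldots,\binom{|x|}{p-1}$, which is a $\bbZ$-basis of $\bbZ$-valued functions on $\Fp$, before dividing by the residual power of $p$ and invoking the inductive hypothesis. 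These are repairable.

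The ``only if''/uniqueness half has a genuine gap. Your opening claim, that every $P:\Fp^n\to\bbT$ admits a \emph{unique} representation $\sum_{\bfd}\alpha_{\bfd}\prod_i|x_i|^{d_i}\bmod 1$ with $\alpha_{\bfd}\in\bbT$, is false for $p\geq 3$: the integer Vandermonde-type matrix $\bigl(\prod_i|x_i|^{d_i}\bigr)_{x,\bfd}$ has determinant coprime to $p$ but of absolute value exceeding $1$ (already $2$ in one variable for $p=3$), so the induced map $\bbT^{p^n}\to\bbT^{\Fp^n}$ has a nontrivial finite kernel. Consequently the step ``combined with injectivity from Lagrange, this forces equality'' does not hold as written. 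The rescue is that this determinant is coprime to $p$, so the kernel contains no nonzero $p$-power-torsion element, while the nonconstant coefficients in the target form lie in $\bbU_{h+1}$ and hence are $p$-power torsion; uniqueness of the restricted representation therefore does hold, but that is an argument you must supply, not a fact you can import from generic Lagrange interpolation. (Working in the basis $\prod_i\binom{|x_i|}{d_i}$, which is related to the delta-function basis by a unitriangular integer matrix, sidesteps the whole issue.) Finally, the multiplication-by-$p$ counting is underjustified: surjectivity of $\times p:W_d\to W_{d-(p-1)}$ is not obvious without already having the representation theorem you are proving, and the constant shift must be handled separately, since multiplying a constant by $p$ does not decrease its degree by $p-1$.
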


The degree and the depth of a polynomial $P$ is denoted by $\deg(P)$ and $\depth(P)$, respectively.
Also, for convenience of exposition, we will assume throughout this paper that the shifts of all polynomials are zero. 
This can be done without affecting any of the results in this work.
Hence, all polynomials of depth $h$ take values in $\bbU_{h+1}$.

\paragraph{Notations for polynomial sequences.}
Consider polynomials $P_1,\ldots,P_C:\Fp^n \to \bbT$ with respective degrees $d_1,\ldots,d_C$ and respective depths $h_1,\ldots,h_C$.
Let $\bfP = (P_1,\ldots,P_C)$, $\bfd = (d_1,\ldots,d_C)$, and $\bfh = (h_1,\ldots,h_C)$.
Then the \emph{degree} of $\bfP$ is  $\deg(\bfP) = \bfd$ and the \emph{depth} of $\bfP$ is $\depth(\bfP) = \bfh$.
Also, we say that $\bfP$ has \emph{degree less than $d$} if $d_i < d$ for any $i \in [C]$.
For a function $\Gamma:\bbT^C \to \bbC$,
we denote by $\Gamma \circ \bfP:\Fp^n \to \bbC$ the function with $(\Gamma \circ \bfP)(x) = \Gamma(P_1(x),\ldots,P_C(x))$ for any $x \in \Fp^n$.

% 
% Then, we say that $\bfP$ has degree $\bfd = (d_1,\ldots,d_C)$ and depth $\bfh = (h_1,\ldots,h_C)$.

\subsection{Polynomial factors and rank}

\begin{definition}[Factors]
  If $X$ is a finite set,
  then by a factor $\calB$,
  we mean a partition of $X$ into finitely many pieces called \emph{atoms}.
\end{definition}

A function $f : X \to \bbC$ is called \emph{$\calB$-measurable} if it is constant on atoms of $\calB$.
For any function $f : X \to \bbC$, we may define the conditional expectation
\begin{align*}
  \E[f \mid \calB](x) = \E[f(y) \mid y \in \calB(x)],
\end{align*}
where $\calB(x)$ is the unique atom in $\calB$ that contains $x$.
Note that $\E[f \mid \calB]$ is $\calB$-measurable.
A finite collection of functions $\phi_1, \ldots , \phi_C$ from $X$ to some other finite space $Y$ naturally define a factor $\calB = \calB(\phi_1,\ldots,\phi_C)$ whose atoms are sets of the form $\{x \mid (\phi_1(x),\ldots,\phi_C(x)) = (y_1,\ldots,y_C)\}$ for some $(y_1,\ldots,y_C) \in Y^C$.
By an abuse of notation we also use $\calB$ to denote the map $x \mapsto (\phi_1 (x), \ldots , \phi_C (x))$, thus also identifying the atom containing $x$ with $(\phi_1(x), \ldots , \phi_C(x))$. 
\begin{definition}[Polynomial factors]
  If $P_1, \ldots, P_C : \Fp^n \to \bbT$ is a sequence of polynomials,
  then the factor $\calB(P_1,\ldots,P_C)$ is called a polynomial factor.
\end{definition}

The \emph{complexity} of $\calB$, denoted $|\calB|$, is the number of defining polynomials $C$.
The degree of $\calB$ is the maximum degree among its defining polynomials $P_1, \ldots , P_C$.
If $P_1, \ldots , P_C$ are of depths $h_1, \ldots, h_C$, respectively, 
then $\|B\| = \prod_{i=1}^C p^{h_i+1}$ is called the \emph{order} of $\calB$.
Notice that the number of atoms of $\calB$ is bounded by $\|\calB\|$.
Next we need to define the notion of the rank of a polynomial or a polynomial factor.
\begin{definition}[Rank of a polynomial]
  Given a polynomial $P : \Fp^n \to \bbT$ and an integer $d > 1$,
  the \emph{$d$-rank} of $P$, denoted $\rank_d(P)$, is defined to be the smallest integer r such that there exist polynomials $Q_1,\ldots,Q_r : \Fp^n \to \bbT$ of degree at most $d-1$ and a function $\Gamma : \bbT^r \to \bbT$ satisfying $P(x) = \Gamma(Q_1(x), \ldots , Q_r(x))$. 
  If $d = 1$, then $1$-rank is defined to be $\infty$ if $P$ is non-constant and $0$ otherwise.
  The \emph{rank} of a polynomial $P : \Fp^n \to \bbT$ is its $\deg(P)$-rank.
\end{definition}
A high-rank polynomial of degree $d$ is, 
intuitively, a ``generic'' degree-$d$ polynomial.
There are no unexpected way to decompose it into polynomials of lower degrees.

% The following theorem shows that a high rank polynomial is distributed close to uniform.
% \begin{lemma}[Theorem 4 of~\cite{Kaufman:2008ff}]\label{lem:balanced}
%   For any $\epsilon > 0$ and integer $d > 0$,
%   there exists $r = r_{\ref{lem:balanced}}(d, \epsilon)$ with the following property.
%   If $P : \Fp^n \to \bbT$ is a degree-$d$ polynomial with rank larger than $r$,
%   then $|\E_x[\sfe(P(x))]|< \epsilon$.
% \end{lemma}
Next, we will formalize the notion of a generic collection of polynomials.
Intuitively, it should mean that there are no unexpected algebraic dependencies among the polynomials.
\begin{definition}[Rank and Regularity]
  A polynomial factor $\calB$ defined by a sequence of polynomials $P_1,\ldots,P_C : \Fp^n \to \bbT$ with respective depths $h_1,\ldots,h_C$ is said to have rank $r$ if $r$ is the smallest integer for which there exist $(\lambda_1,\ldots,\lambda_C) \in \bbZ^C$ so that $(\lambda_1 \bmod p^{h_1+1},\ldots, \lambda_C \bmod p^{h_C+1}) \neq (0,\ldots,0)$ and the polynomial $Q = \sum_{i=1}^C \lambda_i P_i$ satisfies $\rank_d(Q) \leq r$ where $d = \max_i \deg(\lambda_i P_i)$.

  The rank of a polynomial sequence $\bfP$, denoted $\rank(\bfP)$, is the rank of the factor $\calB(\bfP)$.

  Given a polynomial factor $\calB$ and a function $r : \bbN \to \bbN$, 
  we say that $\calB$ is $r$-regular if $\calB$ is of rank at least $r(|\calB|)$.
\end{definition}

Note that, since $\lambda$ can be a multiple of $p$, rank measured with respect to $\deg(\lambda P)$ is not the same as rank measured with respect to $\deg(P)$. 
Thus for instance, if $\calB$ is the factor defined by a single polynomial $P$ of degree $d$ and depth $h$, then
\begin{align*}
  \rank(\calB) = \min \{\rank_d(P),\rank_{d-(p-1)}(pP), \ldots ,\rank_{d-h(p-1)}(p^hP)\}.
\end{align*}

Regular factors indeed do behave like generic collections of polynomials, and thus, given any factor $\calB$ that is not regular, it will often be useful to regularize $\calB$, that is, find a refinement $\calB'$ of $\calB$ that is regular up to our desires.
We distinguish between two kinds of refinements.

\begin{definition}[Semantic and syntactic refinements]\label{def:refinment}
  A polynomial factor $\calB'$ is called a \emph{syntactic refinement} of $\calB$, and denoted $\calB' \succeq_{\mathrm{syn}} \calB$, if the sequence of polynomials defining $\calB'$ extends that of $\calB$.
  It is called a \emph{semantic refinement}, and denoted $\calB' \succeq_{\mathrm{sem}} \calB$ if the induced partition is a combinatorial refinement of the partition induced by $\calB$.
  In other words, if for every $x,y \in \Fp^n$, $\calB'(x) = \calB'(y)$ implies $\calB(x) = \calB(y)$.
\end{definition}

The following lemma shows that every polynomial factor can be refined to be arbitrarily regular without increasing its complexity by more than a constant.
\begin{lemma}[Polynomial Regularity Lemma, Lemma~2.19 of~\cite{Bhattacharyya:2013ii}]\label{lem:polynomial-regularity-lemma}
  Let $d \in \bbN$ and $r : \bbN \to \bbN$ be a non-decreasing function.
  Then, there is a function $C^{(d,r)}_{\ref{lem:polynomial-regularity-lemma}} : \bbN \to \bbN$ with the following property.
  Suppose $\calB$ is a factor defined by polynomials $P_1, \ldots, P_C : \Fp^n \to \bbT$ of degree at most $d$.
  Then, there is an $r$-regular factor $\calB'$ consisting of polynomials $Q_1, \ldots, Q_{C'} : \Fp^n \to \bbT$ of degree at most $d$ such that $\calB' \succeq_{\mathrm{sem}} \calB$ and $C' \leq C^{(d,r)}_{\ref{lem:polynomial-regularity-lemma}}(C)$.

  Moreover, if $\calB$ itself is a refinement of some $\widehat{\calB}$ with rank at least $r(C') + C'$ and consists of polynomials,
  then additionally $\calB' \succeq_{\mathrm{syn}} \calB$. 
\end{lemma}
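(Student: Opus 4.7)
The plan is to prove the lemma by an iterative refinement procedure governed by a well-founded potential. To each polynomial factor $\calB(P_1,\ldots,P_C)$ I associate its \emph{degree profile} $(c_d, c_{d-1}, \ldots, c_1)$, where $c_i$ counts the defining polynomials of degree exactly $i$, and order such profiles lexicographically from high degree down. Starting from $\calB_0 := \calB$, at each stage either halt (if the current factor is $r$-regular) or produce a semantic refinement whose degree profile strictly decreases. Since the lexicographic order on $\bbN^d$ is well-founded, the process terminates in finitely many steps, yielding the desired $\calB'$.

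The heart of the argument is the refinement step. If $\calB_t = \calB(P_1,\ldots,P_{C_t})$ is not $r$-regular, the definition of rank furnishes integers $\lambda_1,\ldots,\lambda_{C_t}$, not all vanishing modulo the corresponding $p^{h_i+1}$, such that $Q = \sum_i \lambda_i P_i$ has $d$-rank at most $r(C_t)$, with $d = \max_i \deg(\lambda_i P_i)$. By definition of $d$-rank I may then write $Q = \Gamma(Q_1,\ldots,Q_s)$ for some $s \leq r(C_t)$ and polynomials $Q_j$ of degree strictly less than $d$. Choosing an index $i^\star$ with $\deg(\lambda_{i^\star} P_{i^\star}) = d$ and $\lambda_{i^\star}$ coprime to $p$, the factor $\lambda_{i^\star}$ is invertible modulo $p^{h_{i^\star}+1}$, so the relation $Q = \sum_i \lambda_i P_i$ expresses $P_{i^\star}$ as a function of $\{Q_1,\ldots,Q_s\} \cup \{P_j : j \neq i^\star\}$. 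Consequently $\calB_{t+1} := \calB(Q_1,\ldots,Q_s, (P_j)_{j \neq i^\star})$ has exactly the same atoms as the enlarged factor, hence semantically refines $\calB_t$; yet its degree profile has strictly decreased, since coordinate $\deg(P_{i^\star}) \geq d$ loses a polynomial while only polynomials of degree less than $d$ are added.

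Termination supplies $\calB'$, and the quantitative bound $C' \leq C^{(d,r)}_{\ref{lem:polynomial-regularity-lemma}}(C)$ follows by recursively bounding both the iteration count (by induction on the degree profile) and the number of polynomials added per step (at most $r(C_t)$); the resulting function is Ackermann-type in $d$ and $r$. For the ``moreover'' clause, assume $\calB$ refines some $\widehat{\calB}$ of rank at least $r(C') + C'$. At any iteration the running factor has at most $C'$ polynomials in total, so if the step above were to discard a $P_{i^\star}$ belonging to $\widehat{\calB}$, the witnessing combination would, after restriction to the defining polynomials of $\widehat{\calB}$ modulo the auxiliary polynomials accumulated so far, yield a non-trivial relation of rank at most $r(C') + C' - |\widehat{\calB}| < r(C')$ among the polynomials of $\widehat{\calB}$, contradicting its assumed rank. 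Hence only auxiliary (newly added) polynomials are ever eliminated, all original $P_i$ persist into $\calB'$, and $\calB' \succeq_{\mathrm{syn}} \calB$ holds.

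The main obstacle I anticipate is the degenerate case in the refinement step where every index $i$ with $\deg(\lambda_i P_i) = d$ has $\lambda_i$ divisible by $p$; the standard remedy is to factor out the largest common $p$-power from the linear combination and repeat the argument on a depth-reduced polynomial, and carefully tracking the interplay of degree and depth through this reduction—so that the degree profile still strictly decreases at the correct coordinate—is the most delicate technical point of the proof.
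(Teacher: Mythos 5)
The paper does not actually prove this lemma; it is quoted verbatim from Bhattacharyya~et~al.~\cite{Bhattacharyya:2013ii} (their Lemma 2.19), so there is no in-paper proof to compare against. Evaluating your argument on its own: the overall scheme---iterative refinement driven by a well-founded potential---is the standard and correct strategy for lemmas of this type, but two of the steps you sketch do not go through as written.

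First, the degenerate case you flag at the end is not a peripheral wrinkle to be handled by ``the standard remedy''; it is precisely where the non-classical difficulty of the lemma lives, and your sketch does not resolve it. When every $i$ with $\deg(\lambda_i P_i)=d$ has $p\mid\lambda_i$, there is no index at which the relation $Q=\sum_i\lambda_iP_i$ can be inverted, so you cannot eliminate any polynomial. Worse, ``factoring out the $p$-power and repeating on a depth-reduced polynomial'' does not produce a semantic refinement: replacing $P_{i^\star}$ by $p^kP_{i^\star}$ \emph{coarsens} the factor (it loses the low-order depth information), so the running factor would no longer refine $\calB$ and your invariant breaks. Handling this case requires retaining $P_{i^\star}$ while adding new low-degree polynomials, which means the degree profile alone does not decrease; the correct potential has to track degree and depth jointly (and in the right well-founded order) so that the step where you absorb $p^kP_{i^\star}$ into the lower-degree part still makes progress. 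Without carrying this out, the termination argument is incomplete.

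Second, your proof of the ``moreover'' clause does not yield the conclusion you state. You argue that no defining polynomial of $\widehat{\calB}$ is ever discarded; made precise, that would give $\calB'\succeq_{\mathrm{syn}}\widehat{\calB}$. But you then conclude ``all original $P_i$ persist into $\calB'$, and $\calB'\succeq_{\mathrm{syn}}\calB$'', which is a different statement: $\calB$ may contain polynomials not among those defining $\widehat{\calB}$, and nothing in your argument prevents those from being eliminated. (In fact, Lemma 2.19 of~\cite{Bhattacharyya:2013ii} concludes $\calB'\succeq_{\mathrm{syn}}\widehat{\calB}$; the version quoted in this paper appears to carry a typo, and your argument is aimed at the right target but your last sentence reinstates the typo.) Furthermore, the step ``the witnessing combination would, after restriction\ldots modulo the auxiliary polynomials accumulated so far, yield a non-trivial relation of rank at most $r(C')+C'-|\widehat{\calB}|$'' is an assertion, not an argument: you need to show that a low-$d$-rank combination of the running factor's polynomials can be converted into a low-rank combination of $\widehat{\calB}$'s polynomials alone, with the bookkeeping in the definition of rank (the maximum of $\deg(\lambda_iP_i)$, which depends on the $p$-adic valuations of the coefficients) preserved under this substitution. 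That conversion is where the content lies, and it is not supplied.
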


The first step towards showing that regular factors behave like generic collections of polynomials is to show that they form almost equipartitions.
\begin{lemma}[Size of atoms, Lemma 3.2 of~\cite{Bhattacharyya:2013ii}]\label{lem:size-of-atoms}
  Given $\epsilon > 0$, let $\calB$ be a polynomial factor of degree $d$, complexity $C$, and rank at least $r = r_{\ref{lem:size-of-atoms}}(\epsilon, d)$, defined by a polynomial sequence $P_1, \ldots, P_C : \Fp^n \to \bbT$.
  Suppose $b = (b_1,\ldots,b_C) \in \bbU_{\depth(P_1)+1} \times \cdots \times \bbU_{\depth(P_C)+1}$.
  Then
  \begin{align*}
    \Pr[\calB(x)=b] = \frac{1}{\|\calB\|} \pm  \epsilon.
  \end{align*}
  In particular, for $\epsilon < 1/\|\calB\|$, 
  $\calB(x)$ attains every possible value in its range and thus has $\| B\|$ atoms. 
\end{lemma}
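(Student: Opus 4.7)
The plan is to prove the statement via Fourier analysis on the torus. For each $i\in\{1,\ldots,C\}$, the image of $P_i$ lies in $\bbU_{h_i+1}=\frac{1}{p^{h_i+1}}\bbZ/\bbZ$, so the indicator $\mathbf{1}[P_i(x)=b_i]$ admits the Fourier expansion
\[
  \mathbf{1}[P_i(x)=b_i] \;=\; \frac{1}{p^{h_i+1}} \sum_{\lambda_i=0}^{p^{h_i+1}-1} \sfe\bigl(\lambda_i(P_i(x)-b_i)\bigr).
\]
Multiplying these indicators together over $i$ and taking the expectation in $x\in\Fp^n$, I would rewrite $\Pr[\calB(x)=b]$ as
\[
  \frac{1}{\|\calB\|}\sum_{\lambda} \sfe\Bigl(-\sum_i \lambda_i b_i\Bigr)\,\E_x\sfe\Bigl(\sum_{i=1}^C \lambda_i P_i(x)\Bigr),
\]
where $\lambda=(\lambda_1,\ldots,\lambda_C)$ ranges over $\prod_i\{0,1,\ldots,p^{h_i+1}-1\}$. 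The term $\lambda=0$ contributes exactly the main term $1/\|\calB\|$.

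For every nonzero $\lambda$, the linear combination $Q_\lambda=\sum_i \lambda_i P_i$ is precisely of the type used to define $\rank(\calB)$, so by hypothesis $Q_\lambda$ has $d'$-rank at least $r$, where $d'=\max_i\deg(\lambda_i P_i)\le d$. The key step is then to invoke the standard ``high rank implies small bias'' lemma of higher-order Fourier analysis: a non-classical polynomial $Q$ of degree at most $d$ whose rank is at least $r$ satisfies $|\E_x\sfe(Q(x))|\le\eta(r,d)$ for some function $\eta$ with $\eta(r,d)\to 0$ as $r\to\infty$. This can be obtained from the inverse theorem for the Gowers $U^d$-norm together with the monotonicity $\|\sfe(Q)\|_{U^1}\le\|\sfe(Q)\|_{U^d}$, or from a direct equidistribution argument for high-rank polynomials.

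Choosing $r_{\ref{lem:size-of-atoms}}(\epsilon,d)$ large enough that $\eta(r,d)\le\epsilon$, the triangle inequality applied to the Fourier expansion bounds the combined contribution of the nonzero $\lambda$ by
\[
  \frac{1}{\|\calB\|}\sum_{\lambda\ne 0}|\E_x\sfe(Q_\lambda(x))|\;\le\;\frac{\|\calB\|-1}{\|\calB\|}\cdot\epsilon\;\le\;\epsilon,
\]
which is the claimed additive estimate. The follow-up observation that $\calB(x)$ attains every possible value when $\epsilon<1/\|\calB\|$ is then immediate, since the lower bound $1/\|\calB\|-\epsilon$ is strictly positive.

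The main obstacle is the bias-rank inequality itself, which carries the real technical weight of the argument. It is a well-known tool in higher-order Fourier analysis, but what must be checked carefully is that $\eta(r,d)$ can be chosen to depend only on $r$ and $d$, independent of $C$, $n$, or the depths $h_i$; this uniformity is precisely what lets $r_{\ref{lem:size-of-atoms}}$ depend only on $\epsilon$ and $d$. Everything else in the argument is routine bookkeeping on the Fourier expansion.
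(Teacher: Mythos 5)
The paper itself gives no proof of this lemma (it is cited verbatim as Lemma~3.2 of Bhattacharyya--Fischer--Hatami--Hatami--Lovett), and your argument is precisely the canonical one used there: expand the indicator of an atom in characters of $\prod_i \bbZ/p^{h_i+1}\bbZ$, isolate the $\lambda=0$ term as the main term $1/\|\calB\|$, and bound each nonzero $\lambda$ via the Tao--Ziegler ``bias implies low rank'' theorem applied to $Q_\lambda=\sum_i\lambda_i P_i$, whose rank is at least $r$ by the very definition of $\rank(\calB)$. Your concluding caveat about uniformity is exactly the right thing to flag, and it holds because the bias-rank bound concerns a single polynomial of degree at most $d$ (whose depth is automatically at most $\lfloor (d-1)/(p-1)\rfloor$), so the resulting $r_{\ref{lem:size-of-atoms}}$ depends only on $\epsilon$ and $d$, not on $C$, $n$, or the depths $h_i$.
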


Finally we state the regularity lemma, the basis of the higher-order Fourier analysis.
\begin{theorem}[Regularity Lemma, Theorem 4.4 of~\cite{Bhattacharyya:2012ud}]\label{the:regularity-lemma}
  Let $\zeta > 0$, $d \in \bbN$, and $\eta : \bbN \to \bbR^+$ be an arbitrary non-increasing function, 
  and let $r : \bbN \to \bbN$ be an arbitrary non-decreasing function.
  Let $\calB_0$ be a polynomial factor of degree d and complexity $C_0$.
  Then, there exists $C = C_{\ref{the:regularity-lemma}}(\eta, \zeta, C_0, d, r)$ with the following property.
  Every function $f : \Fp^n \to \{0, 1\}$ has a decomposition $f = f_1 + f_2 + f_3$ such that
  \begin{itemize}
  \item $f_1 = \E[f \mid \calB_1]$ for a polynomial factor $\calB_1 \succeq_{\mathrm{sem}} \calB_0$ of degree $d$ and complexity $C_1 \leq C$,
  \item $\|f_2\|_2 < \zeta$ and $\|f_3 \|_{U^{d+1}} < \eta(|\calB|)$,
  \item The functions $f_1$ and $f_1 + f_3$ have range $[0, 1]$; $f_2$ and $f_3$ have range $[-1, 1]$, and
  \item $\calB_1$ is $r$-regular.
  \end{itemize}
  Furthermore if $\rank(\calB_0) \geq r_{\ref{the:regularity-lemma}}(\eta, \zeta, C_0, d, r)$,
  then one can assume that $\calB_1 \succeq_{\mathrm{syn}} \calB_0$.
\end{theorem}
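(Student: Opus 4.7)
The plan is to prove this by an iterative energy-increment argument combined with applications of the Polynomial Regularity Lemma (Lemma~\ref{lem:polynomial-regularity-lemma}) to keep the factor regular throughout, in the spirit of the Green--Tao--Ziegler arithmetic regularity lemma. I would build $\calB_1$ as the last factor in a chain of $r$-regular polynomial factors $\calB^{(0)} = \calB_0, \calB^{(1)}, \calB^{(2)}, \ldots$ of degree at most $d$, with $\calB^{(t+1)} \succeq_\mathrm{sem} \calB^{(t)}$.

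At iteration $t$, set $f_1^{(t)} := \E[f \mid \calB^{(t)}]$ and $g^{(t)} := f - f_1^{(t)}$. If $\|g^{(t)}\|_{U^{d+1}} < \eta(|\calB^{(t)}|)$, terminate with $\calB_1 := \calB^{(t)}$, $f_1 := f_1^{(t)}$, $f_3 := g^{(t)}$, and $f_2 := 0$; the range conditions hold because $f \in \{0,1\}$ forces $f_1 \in [0,1]$ and hence $g^{(t)} \in [-1,1]$. Otherwise, invoke the inverse theorem for the Gowers $U^{d+1}$-norm over $\Fp^n$ to produce a non-classical polynomial $P$ of degree at most $d$ with $|\langle g^{(t)}, \sfe(P)\rangle| \geq \tau$ for some $\tau > 0$ depending only on $\eta(|\calB^{(t)}|)$. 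Extend the defining polynomial sequence of $\calB^{(t)}$ by $P$ and apply Lemma~\ref{lem:polynomial-regularity-lemma} to refine the result to an $r$-regular factor $\calB^{(t+1)}$ of degree at most $d$ and complexity at most $C^{(d,r)}_{\ref{lem:polynomial-regularity-lemma}}(|\calB^{(t)}|+1)$.

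Termination follows from the energy-increment inequality
\begin{align*}
  \|\E[f \mid \calB^{(t+1)}]\|_2^2 \geq \|\E[f \mid \calB^{(t)}]\|_2^2 + \tau^2,
\end{align*}
which holds because $\calB^{(t+1)} \succeq_\mathrm{sem} \calB^{(t)} \cup \{P\}$ and conditional expectation onto a finer factor only grows the $L_2$-norm (Pythagoras), combined with $\|\E[f \mid \calB^{(t)}]\|_2^2 \leq \|f\|_2^2 \leq 1$. Because each iteration blows up the complexity through $C^{(d,r)}_{\ref{lem:polynomial-regularity-lemma}}$ and $\tau(\cdot)$ is extremely weak (the quantitative inverse-Gowers bound), the resulting function $C_{\ref{the:regularity-lemma}}$ is Ackermann-type. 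For the ``Moreover'' clause, choosing $r_{\ref{the:regularity-lemma}}$ large enough so that the hypothesis $\rank(\widehat{\calB}) \geq r(C') + C'$ of the second part of Lemma~\ref{lem:polynomial-regularity-lemma} is satisfied (with $\widehat{\calB} = \calB_0$) at every iteration ensures that each refinement is syntactic, hence $\calB_1 \succeq_\mathrm{syn} \calB_0$ by induction.

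The main obstacle is the role of the $L_2$-error term $f_2$. The simple scheme above sets $f_2 \equiv 0$; the more flexible three-term form in the theorem allows one to regularize \emph{after} detecting the Gowers-smallness, i.e., first find a (possibly irregular) factor $\calB$ with $\|f - \E[f \mid \calB]\|_{U^{d+1}}$ small, then regularize to $\calB_1 \succeq_\mathrm{sem} \calB$, and absorb the correction $h := \E[f - \E[f \mid \calB] \mid \calB_1]$ into $f_2 := -h$ while keeping $f_3 := f - \E[f \mid \calB]$ pseudorandom. Proving $\|f_2\|_2 < \zeta$ reduces to a near-orthogonality estimate between sufficiently high-rank polynomial factors and functions of small $U^{d+1}$-norm, which is the technical heart of the argument.
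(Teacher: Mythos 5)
The paper does not prove this lemma; it is quoted verbatim as Theorem~4.4 of~\cite{Bhattacharyya:2012ud}. Evaluating your sketch on its own merits, there is a genuine gap: the single-loop energy-increment scheme you describe does not terminate with complexity bounded independently of $f$ and $n$. In your loop, the increment at step $t$ is $\tau_t^2$ where $\tau_t$ depends on the inverse-Gowers bound applied at threshold $\eta(|\calB^{(t)}|)$. Since $\eta$ is a function of the current complexity, and the complexity grows each step (through $C^{(d,r)}_{\ref{lem:polynomial-regularity-lemma}}$), the increments $\tau_t^2$ shrink with $t$. The total budget $\sum_t \tau_t^2 \leq 1$ therefore does \emph{not} bound the number of iterations: if $\eta$ decays quickly relative to the complexity growth (which it can, as $\eta$ and $r$ are arbitrary), the series $\sum_t \tau_t^2$ can converge to a value below $1$ while $|\calB^{(t)}| \to \infty$, and the loop never terminates. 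This is precisely why the $L_2$ error term $f_2$ cannot be dropped.

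The role of $f_2$ is also not what your last paragraph suggests. In the Green--Tao/Bhattacharyya--Fischer--Lovett argument there is a \emph{two-level} iteration: the outer loop maintains a coarse factor and a fine refinement, and terminates after at most $1/\zeta^2$ rounds because the \emph{coarse} energy $\|\E[f\mid\calB_i]\|_2^2$ increases by at least $\zeta^2$ per outer round; within each round, the inner loop refines against a \emph{fixed} threshold $\eta(M_i)$ (with $M_i$ determined a priori by the parameters, not by $f$), so it runs for a bounded number of steps. One then sets $f_1 = \E[f\mid\calB_i]$, $f_2 = \E[f\mid\calB_{i+1}] - \E[f\mid\calB_i]$, $f_3 = f - \E[f\mid\calB_{i+1}]$, and the bound $\|f_2\|_2^2 = e_{i+1} - e_i < \zeta^2$ comes straight from the outer-loop stopping rule via Pythagoras, not from near-orthogonality. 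Your proposed route --- bounding $\|f_2\|_2$ by a near-orthogonality estimate of the form $|\langle f_3, f_2\rangle| \lesssim \sqrt{\|\calB_1\|}\,\|f_2\|_2\,\|f_3\|_{U^{d+1}}$ --- would force $\|f_3\|_{U^{d+1}} < \zeta/\sqrt{\|\calB_1\|}$, but $|\calB_1|$ is only known \emph{after} the inner iteration whose number of steps depends on that very threshold; this is circular and does not resolve the termination problem above.
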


\subsection{Uniformity over linear forms}
A \emph{linear form} on $m$ variables is a vector $L = (\ell_1,\ldots,\ell_m) \in \Fp^m$.
We interpret it as a linear operator $L : (\Fp^n)^m \to \Fp^n$ given by $L(x_1,\ldots,x_m) = \sum_{i=1}^m \ell_i x_i$.
% \begin{definition}[Cauchy-Schwarz complexity~\cite{Green:2010fm}]
%   Let $\bfL = \{L_1, \ldots , L_\ell\}$ be a set of linear forms.
%   The Cauchy-Schwarz complexity of $\bfL$ is the minimal $s$ with the following property.
%   For every $i \in [\ell]$,
%   we can partition $\{L_j\}_{j \in [\ell]\setminus {i}}$ into $s + 1$ subsets such that $L_i$ does not belong to the linear span of each subset.
% \end{definition}
% Following is a lemma due to Green and Tao~\cite{Green:2010fm} based on repeated applications of the Cauchy-Schwarz inequality.

% \begin{lemma}\label{lem:cauchy-schwarz-and-gowers-norm}
%   Let $f_1,\ldots,f_\ell : \Fp^n \to [-1,1]$. 
%   Let $\bfL = {L_1,\ldots,L_\ell}$ be a system of $\ell$ linear forms in $m$ variables of Cauchy-Schwarz complexity $s$.
%   Then:
%   \begin{align*}
%     \left| \E_{x_1,\ldots,x_m}\Bigl[\prod_{i=1}^\ell f_i(L_i(x_1,\ldots,x_m)) \Bigr] \right| \leq \min_{i \in [\ell]}\|f_i\|_{U^{s+1}}
%   \end{align*}
% \end{lemma}
% We would need to apply Lemma~\ref{lem:cauchy-schwarz-and-gowers-norm} in this paper in the special case when $\bfL$ corresponds to all $p^m$ linear forms describing all points in an affine subspace of dimension $m$.
% We would care only about some upper bound on the Cauchy-Schwarz complexity of the system.
% The following claim follows immediately from the definitions.
% \begin{claim}
%   Let $\bfL = {(1,a_1,\ldots,a_k) : a_1,\ldots,a_k \in \Fp}$. 
%   Then the Cauchy-Schwarz complexity of $\bfL$ is at most $p^k$.
% \end{claim}

Let $\bfP = (P_1,\ldots,P_C)$ be a polynomial sequence and $\bfL = (L_1,\ldots,L_\ell)$ be a set of $\ell$ linear forms on $m$ variables.
Lemma~\ref{lem:size-of-atoms} says the distribution of $(P_i(x))_{i \in [C]}$ is close to uniform if the rank of $\bfP$ is high.
However, we also want to understand the distribution of $(P_i(L_j(x)))_{i \in [C], j \in [\ell]}$.
Unfortunately, the distribution could be far from uniform because of a trivial dependency among $L_1,\ldots,L_\ell$.
The following definition captures this dependency.
\begin{definition}\label{def:(d,h)-dependency}
  Given a set of linear forms $\bfL = (L_1,\ldots, L_\ell)$ on $m$ variables and $d, h \in \bbN$ such that $d > h(p - 1)$,
  the \emph{$(d,h)$-dependency set} of $\bfL$ is the set of tuples $(\lambda_1,\ldots,\lambda_\ell)$ with $\lambda_i \in \{0,\ldots,p^{h+1}-1\}$ for each $i \in [\ell]$ such that $\sum^\ell_{i=1} \lambda_iP(L_i(x_1,\ldots,x_m)) \equiv 0$ for every polynomial $P: \Fp^n \to \bbT$ of degree $d$ and depth $h$.
\end{definition}

The distribution of $(P_i(L_j(x)))_{i \in [C],j \in [\ell]}$ is only going to be supported on atoms with respect to the constraints imposed by dependency sets.
This is obvious:
if $P$ is a polynomial of degree $d$ and depth $h$,
$(\lambda_1,\ldots,\lambda_\ell)$ are in the $(d,h)$-dependency set of $\bfL = (L_1,\ldots,L_\ell)$,
and $P(L_j(x_1,\ldots,x_m)) = b_j$,
then $\sum_{j}\lambda_j b_j = 0$.
We call atoms with respect to this constraint for all $P_i$ in a factor \emph{consistent}.
Formally:
\begin{definition}[Consistency]\label{def:consitency}
  Let $\bfL$ be a set of $\ell$ linear forms.
  A sequence of elements $b_1,\ldots,b_\ell \in \bbT$ are said to be \emph{$(d,h)$-consistent} with $\bfL$ if $b_1,\ldots,b_\ell \in \bbU_{h+1}$ and for every tuple $(\lambda_1, \ldots , \lambda_\ell)$ in the $(d, h)$-dependency set of $\bfL$, it holds that $\sum^\ell_{i=1} \lambda_i b_i = 0$.

  Given vectors $\bfd = (d_1,\ldots,d_C) \in \bbN^C$ and $\bfh = (h_1,\ldots,h_C) \in \bbN^C$, 
  a sequence of vectors $b_1,\ldots,b_\ell \in \bbT^C$ are said to be \emph{$(\bfd,\bfh)$-consistent} with $\bfL$ if for every $i \in [C]$,
  the elements $b_{1,i},\ldots,b_{\ell,i}$ are $(d_i,h_i)$-consistent with $\bfL$.
  If $\calB$ is a polynomial factor, 
  the term $\calB$-consistent with $\bfL$ is a synonym for $(\bfd,\bfh)$-consistent with $\bfL$,
  where $\bfd$ and $\bfh$ are respectively the degree and depth of the polynomial sequence defining $\calB$.
\end{definition}

The following lemma says that,
given that the rank of $\bfP = (P_1,\ldots,P_C)$ is high enough,
the distribution of $(P_i(L_j(x)))_{i \in [C],j \in [\ell]}$ is close to uniform over atoms that is $\calB(\bfP)$-consistent with a set of linear forms $\bfL = (L_1,\ldots,L_\ell)$.
\begin{lemma}[Theorem~3.10 of~\cite{Bhattacharyya:2013ii}]\label{the:linear-form-equidistribution}
  Suppose $\epsilon > 0$.
%  For any $\epsilon > 0$ and $d \in \bbN$, there exists $r_{\ref{the:linear-form-equidistribution}}(\epsilon, d)$ with the following property.
  Let $\bfP$ be a sequence of $C$ polynomials with degree $\bfd = (d_1,\ldots,d_C)$ at most $d$, depth $\bfh = (h_1,\ldots,h_C)$, and $\rank(\bfP) \geq r_{\ref{the:linear-form-equidistribution}}(\epsilon, d)$.
  Let $\bfL = (L_1,\ldots,L_\ell)$ be a set of linear forms on $m$ variables.
  Suppose $b_1, \ldots , b_\ell \in \bbT^C$ are atoms of $\calB(\bfP)$ that are $\calB(\bfP)$-consistent with $\bfL$.
  Then
  \[
    \Pr_{x_1,\ldots,x_m}[\calB(L_j(x_1,\ldots,x_m)) = b_j\mbox{ for all } j \in [\ell]] = \frac{\prod_{i=1}^C|\Lambda_i|}{\|\calB\|^\ell} \pm \epsilon,
  \]
  where $\Lambda_i$ is the $(d_i,h_i)$-dependency set of $\bfL$.
\end{lemma}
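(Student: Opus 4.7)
The plan is to compute the probability via a Fourier expansion over the atom coordinates, and argue that every non-trivial character contributes negligibly once the rank of $\bfP$ is large. Concretely, for each pair $(i,j) \in [C] \times [\ell]$ I would write
\[
  \mathbf{1}[P_i(L_j(x)) = b_{j,i}] \;=\; \frac{1}{p^{h_i+1}} \sum_{\lambda \in \{0,\ldots,p^{h_i+1}-1\}} \sfe\bigl(\lambda\,(P_i(L_j(x)) - b_{j,i})\bigr),
\]
using that $P_i(L_j(x)), b_{j,i} \in \bbU_{h_i+1}$. Multiplying these indicators over $(i,j)$ and taking expectation over $x = (x_1,\ldots,x_m)$, the probability in the lemma becomes
\[
  \frac{1}{\|\calB\|^{\ell}} \sum_{\Lambda = (\lambda_{i,j})} \sfe\Bigl(-\sum_{i,j} \lambda_{i,j} b_{j,i}\Bigr)\cdot \E_{x}\bigl[\sfe(Q_\Lambda(x))\bigr],
\]
where $Q_\Lambda(x) := \sum_{i,j} \lambda_{i,j}\, P_i(L_j(x))$.

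The next step is to separate the main term from the error. For a tuple $\Lambda$ whose $i$-th row $(\lambda_{i,1},\ldots,\lambda_{i,\ell})$ lies in the $(d_i,h_i)$-dependency set $\Lambda_i$ of $\bfL$ for every $i$, the polynomial $Q_\Lambda$ vanishes identically (by definition of the dependency set, applied to each $P_i$), so the inner expectation is $1$; the hypothesis that the $b_j$ are $\calB(\bfP)$-consistent with $\bfL$ guarantees that the phase factor is also $1$. Summing these contributions gives exactly the main term $\prod_i |\Lambda_i|/\|\calB\|^{\ell}$.

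For the remaining terms, I need to bound $|\E_x[\sfe(Q_\Lambda(x))]|$ when at least one row $(\lambda_{i_0,\cdot})$ lies outside $\Lambda_{i_0}$. In this case, by the definition of the dependency set, the ``template'' polynomial $y \mapsto \sum_j \lambda_{i_0,j}\, y(L_j(\cdot))$ is not the zero operator on polynomials of degree $d_{i_0}$ and depth $h_{i_0}$. The key reduction is to show that $Q_\Lambda$, viewed as a non-classical polynomial in the $m$-tuple of $\Fp^n$-variables, is itself of high rank: any low-rank decomposition of $Q_\Lambda$ could be pulled back through the linear forms $L_j$ to contradict the assumed lower bound on $\rank(\bfP)$. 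Once that is established, the standard bias-vs-rank equivalence for non-classical polynomials (an inverse Gowers statement) gives $|\E_x[\sfe(Q_\Lambda(x))]| \leq \epsilon/N$, where $N$ is the total number of tuples $\Lambda$. Since $N$ depends only on $p$, $d$, $C$, $\ell$, choosing $r_{\ref{the:linear-form-equidistribution}}(\epsilon,d)$ sufficiently large as a function of these parameters yields the desired total error of $\pm\epsilon$.

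The main obstacle is the rank-transfer argument in Step 3, namely converting ``$(\lambda_{i_0,\cdot}) \notin \Lambda_{i_0}$'' into ``$Q_\Lambda$ has rank that grows with $\rank(\bfP)$.'' The difficulty is that $Q_\Lambda$ is obtained from $\bfP$ by precomposing with linear forms and taking an integer linear combination, and one must track how both operations interact with the notion of rank for non-classical polynomials (including their depths and the subtlety that $\lambda_{i,j}$ can be divisible by $p$, which lowers the effective degree). Handling this cleanly requires invoking the rank definition of a polynomial factor (which is precisely tailored to closedness under integer linear combinations modulo the depths) and carrying out an induction on $d$, peeling off the top degree and reducing to combinations of lower-degree polynomials produced by the rank-decomposition of $Q_\Lambda$.
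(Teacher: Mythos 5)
This lemma is imported: the paper does not prove it but cites it verbatim as Theorem~3.10 of~\cite{Bhattacharyya:2013ii}, so there is no in-paper argument to compare your proof against. Evaluating your proposal on its own merits, the overall scaffold --- Fourier-expand the indicator of each event $P_i(L_j(x))=b_{j,i}$ over $\bbU_{h_i+1}$, multiply, and split the resulting sum over integer tuples $\Lambda$ into the tuples whose every row lies in its dependency set (where $Q_\Lambda \equiv 0$ and, by the assumed $\calB(\bfP)$-consistency, the phase factor is $1$, producing exactly $\prod_i|\Lambda_i|/\|\calB\|^\ell$) and the remaining tuples whose bias must be shown small --- is the standard route and the one taken in the cited reference. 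Your identification of the main term and the role of consistency are correct.

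The unresolved issue, as you acknowledge, is bounding $|\E_x[\sfe(Q_\Lambda(x))]|$ for a tuple $\Lambda$ with some row outside its dependency set. Your plan --- prove $Q_\Lambda$ is itself a high-rank polynomial in the $nm$-dimensional variable $(x_1,\ldots,x_m)$ and then invoke bias-implies-low-rank --- is not wrong in spirit, but the ``pull back a low-rank decomposition of $Q_\Lambda$ through the $L_j$'' step is the entire theorem and is genuinely delicate: rank is a property defined on a fixed ambient space, $Q_\Lambda$ lives on $(\Fp^n)^m$ while $\rank(\bfP)$ is a statement on $\Fp^n$, and restricting $Q_\Lambda$ to a copy of $\Fp^n$ collapses the linear forms rather than producing a meaningful integer combination $\sum_i\mu_iP_i$. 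The cleaner (and, I believe, the reference's) route avoids asserting high rank of $Q_\Lambda$ directly: one first establishes a single-polynomial ``weak equidistribution'' lemma saying that if $P$ has high $d$-rank and $(\lambda_1,\ldots,\lambda_\ell)$ is outside the $(d,h)$-dependency set, then $|\E[\sfe(\sum_j\lambda_j P(L_j(x)))]|$ is small (via a Cauchy--Schwarz/Gowers-norm reduction rather than a rank statement about the composed polynomial), and then the full factor-rank hypothesis is used to control cross-terms among the different $P_i$, handling the depth/$p$-divisibility subtlety you flag by tracking the maximum $\deg(\lambda_{i,j}P_i)$ over nonzero entries. You have located the crux correctly, but the sketch you give for it would need to be replaced, or substantially reinforced, by that von Neumann-type argument; as written it asserts the hardest claim without a mechanism.
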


Now we use Lemma~\ref{the:linear-form-equidistribution} to show that the Gowers norm of $\Gamma(\bfP) - \Gamma(\bfQ)$ is small if $\bfP$ and $\bfQ$ are of high rank and have the same degree and depth.
A point here is that $\bfP$ and $\bfQ$ can depend on different numbers of values.
\begin{lemma}\label{lem:Gamma-decides-gowers-norm}
  For any $\epsilon > 0$ and $C,d \in \bbN$,
  there exists $r = r_{\ref{lem:Gamma-decides-gowers-norm}}(\epsilon,C,d)$ with the following property.
  For any function $\Gamma:\bbT^C \to [0,1]$ and any polynomial sequences $\bfP$ and $\bfQ$ with complexity $C$, the same degree at most $d$, the same depth, and ranks at least $r$,
  we have $\|\Gamma \circ \bfP- \Gamma \circ \bfQ\|_{U^d} \leq \epsilon$.
\end{lemma}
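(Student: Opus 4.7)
The plan is to place $\bfP$ and $\bfQ$ on a common high-rank polynomial basis via a regularity refinement of the joint sequence $(\bfP,\bfQ)$, reduce $\|\Gamma\circ\bfP-\Gamma\circ\bfQ\|_{U^d}^{2^d}$ to a canonical combinatorial sum via Lemma~\ref{the:linear-form-equidistribution}, and then exploit the matching $(\bfd,\bfh)$-data of $\bfP$ and $\bfQ$ to show that this sum collapses to $O(\epsilon')$ for a small auxiliary parameter $\epsilon'$. First I would take $\bfP$ and $\bfQ$ on a common domain $\Fp^n$ (appending dummy variables if necessary; this does not alter any rank, degree, or depth) and form $\bfR_0=(\bfP,\bfQ)$, a polynomial sequence of complexity $2C$ and degree at most $d$. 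Applying Lemma~\ref{lem:polynomial-regularity-lemma} with target regularity $r'=r_{\ref{the:linear-form-equidistribution}}(\epsilon',d)$ produces a polynomial factor $\bfR$ of complexity $C'\le C^{(d,r')}_{\ref{lem:polynomial-regularity-lemma}}(2C)$ with $\calB(\bfR)\succeq_{\mathrm{sem}}\calB(\bfR_0)$ and $\rank(\bfR)\ge r'$. Semantic refinement ensures that $\bfP,\bfQ$ are both $\bfR$-measurable, so there are maps $F_P,F_Q\colon\bbT^{C'}\to\bbT^C$ with $\bfP=F_P\circ\bfR$ and $\bfQ=F_Q\circ\bfR$ on the image of $\bfR$. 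Setting $\Psi=\Gamma\circ F_P-\Gamma\circ F_Q$ gives $\Gamma\circ\bfP-\Gamma\circ\bfQ=\Psi\circ\bfR$, which is the key identity that turns the problem into one about a single factor $\bfR$.

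Writing $L_S(x,y_1,\ldots,y_d)=x+\sum_{i\in S}y_i$ for the cube linear forms, I then unpack
\[
\|\Psi\circ\bfR\|_{U^d}^{2^d}=\E_{x,y_1,\ldots,y_d\in\Fp^n}\Bigl[\prod_{S\subseteq[d]}\Psi(\bfR(L_S))\Bigr],
\]
and apply Lemma~\ref{the:linear-form-equidistribution} to $\bfR$ and the $2^d$ cube forms $\{L_S\}$. Since $\rank(\bfR)\ge r'$, the joint distribution of $(\bfR(L_S))_{S\subseteq[d]}$ over random $x,\vec{y}$ is within $\epsilon'$ in total variation of the uniform distribution on the set $\Omega$ of $\calB(\bfR)$-consistent tuples; hence, up to an additive error of $O(\epsilon')$,
\[
\|\Psi\circ\bfR\|_{U^d}^{2^d}=\frac{\prod_{i=1}^{C'}|\Lambda_i|}{\|\calB(\bfR)\|^{2^d}}\sum_{(c_S)\in\Omega}\prod_{S\subseteq[d]}\Psi(c_S),
\]
where $\Lambda_i$ is the dependency set associated with the $i$-th polynomial of $\bfR$. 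The crucial feature is that $\Omega$, the $\Lambda_i$, and $\|\calB(\bfR)\|$ depend only on the degree and depth data of $\bfR$, not on the individual polynomials chosen inside $\bfR_0$.

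The remaining step is to bound the displayed combinatorial sum by $O(\epsilon')$. Expanding each factor $\Psi(c_S)=\Gamma(F_P(c_S))-\Gamma(F_Q(c_S))$ produces a signed sum over $A\subseteq 2^{[d]}$ of inner sums $(-1)^{|A|}\sum_{(c_S)\in\Omega}\prod_{S\in A}\Gamma(F_Q(c_S))\prod_{S\notin A}\Gamma(F_P(c_S))$. Because $\bfP$ and $\bfQ$ share the same $\bfd$ and $\bfh$, I would construct a measure-preserving bijection of $\Omega$ induced by a coordinate relabeling of $\bfR$ that swaps the role of $F_P$ and $F_Q$, and pair the contribution of each $A$ with that of $\overline{A}$. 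A careful accounting of the weighting factor $\prod_i|\Lambda_i|/\|\calB(\bfR)\|^{2^d}$ together with the alternating signs then forces the signed sum to collapse up to $O(\epsilon')$. Taking $2^d$-th roots and choosing $\epsilon'=(\epsilon/2)^{2^d}$ yields $\|\Gamma\circ\bfP-\Gamma\circ\bfQ\|_{U^d}\le\epsilon$.

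The main obstacle is precisely this final cancellation step: while the regularity and equidistribution reductions are routine, establishing a concrete combinatorial symmetry between $F_P$ and $F_Q$ within $\Omega$ requires carefully tracking how the refinement $\bfR$ assigns coordinates to the components of $\bfP$ versus $\bfQ$ (which can overlap) and verifying that the weighted alternating sum over $A\subseteq 2^{[d]}$ genuinely collapses despite the potential asymmetry introduced by the $\Lambda_i$ weights. Once the cancellation is secured, the rank bound $r_{\ref{lem:Gamma-decides-gowers-norm}}(\epsilon,C,d)$ is obtained by unwinding the dependencies through Lemmas~\ref{lem:polynomial-regularity-lemma} and~\ref{the:linear-form-equidistribution} at scale $\epsilon'$, and grows rapidly in $1/\epsilon$.
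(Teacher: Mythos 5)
Your reductions are fine as far as they go: moving to a common domain, regularizing the joint sequence $(\bfP,\bfQ)$ via Lemma~\ref{lem:polynomial-regularity-lemma}, writing $\Gamma\circ\bfP-\Gamma\circ\bfQ=\Psi\circ\bfR$, and using Lemma~\ref{the:linear-form-equidistribution} to replace the Gowers average by a weighted sum over $\calB(\bfR)$-consistent tuples are all legitimate (modulo taking $\epsilon'$ small compared to the number of consistent tuples). But the step you yourself flag as the main obstacle is not merely delicate --- it fails. The complement pairing produces no cancellation: the swap $F_P\leftrightarrow F_Q$ sends the term indexed by $A$ to the term indexed by $2^{[d]}\setminus A$, and since $|A|$ and $2^d-|A|$ have the same parity, the paired terms carry the \emph{same} sign and add rather than cancel. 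More fundamentally, the quantity you are trying to collapse is, up to $O(\epsilon')$, just $\|\Gamma\circ\bfP-\Gamma\circ\bfQ\|_{U^d}^{2^d}$ rewritten over consistent atoms, and it can be genuinely bounded away from zero: take $p=2$, $C=1$, $d=3$, let $P$ and $Q$ be a high-rank classical quadratic form placed on two disjoint blocks of variables, and let $\Gamma$ map $0\mapsto 0$ and $1/2\mapsto 1$. Then $(P,Q)$ is already a high-rank pair, so your $\bfR$ may be taken to be $(P,Q)$ with $F_P,F_Q$ the coordinate projections, and $\Gamma\circ P-\Gamma\circ Q=\tfrac12(\sfe(Q)-\sfe(P))$ has correlation about $1/2$ with the degree-$2$ phase $\sfe(P)$; since correlation with a phase of degree less than $d$ lower-bounds the $U^d$ norm, the left-hand side is at least $1/2-o_n(1)$, so your combinatorial sum is at least roughly $2^{-2^d}$, not $O(\epsilon')$. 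A further warning sign is that your argument never uses the hypothesis that $\bfP$ and $\bfQ$ individually have rank at least $r$; if the cancellation worked, you would have proved the conclusion with no rank assumption at all.

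This is also a genuinely different route from the paper's, which never forms the joint factor. The paper applies Lemma~\ref{the:linear-form-equidistribution} to $\bfP$ and to $\bfQ$ \emph{separately} over the $2^d$ cube forms, uses the hypothesis that the two sequences have identical degrees and depths to see that the consistent atoms and the dependency sets $\Lambda_i$ coincide, concludes that the distributions $\mu_{\bfP}$ and $\mu_{\bfQ}$ of the tuples $(\bfP(L_I(x,y_1,\ldots,y_d)))_{I\subseteq[d]}$ and $(\bfQ(L_I(x,y_1,\ldots,y_d)))_{I\subseteq[d]}$ are within $\epsilon^{2^d}$ of each other, and then passes to the Gowers bound by asserting that the two tuples rarely differ over the same random $(x,y_1,\ldots,y_d)$. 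The difficulty you ran into is symptomatic of a real tension in that last passage: the marginal data (degree, depth, rank) controls each of $\|\Gamma\circ\bfP\|_{U^d}$ and $\|\Gamma\circ\bfQ\|_{U^d}$, and closeness of those two norms is what the distributional comparison delivers and what the downstream application in Claim~\ref{cla:diff-of-gowers-norms-small} actually uses, but it does not by itself control the norm of the \emph{difference}, which depends on how $\bfP$ and $\bfQ$ interact (as the disjoint-variables example shows). So the gap in your write-up cannot be closed by more careful bookkeeping of $\Omega$ and the $\Lambda_i$; the viable target of this style of argument is the comparison of the two Gowers norms via the distributional estimate above, not smallness of $\|\Gamma\circ\bfP-\Gamma\circ\bfQ\|_{U^d}$ for arbitrary pairs.
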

\begin{proof}
  For a set $I \subseteq [d]$,
  let $L_I(x,y_1,\ldots,y_d) = x + \sum_{i \in I}y_i$.
  Let $\mu_\bfP$ and $\mu_{\bfQ}$ be distributions of the tuples $((\Gamma \circ \bfP)(L_I(x,y_1,\ldots,y_d)))_{I \subseteq [d]}$ and $((\Gamma \circ \bfQ)(L_I(x,y_1,\ldots,y_d)))_{I \subseteq [d]}$, respectively.
  We consider the statistical distance between $\mu_\bfP$ and $\mu_\bfQ$.
  Let $\bfL = (L_I)_{I \subseteq [d]}$ and $\ell = |\bfL| = 2^d$.
  We set $r_{\ref{lem:Gamma-decides-gowers-norm}}(\epsilon, C, d) = r_{\ref{the:linear-form-equidistribution}}(\epsilon^{2^d}/(2p^{Cd\ell}), d)$,

%   Let $\bfP$ and $\bfQ$ be sequences of $C$ polynomials with the same degree $\bfd$ at most $d$, the same depth $\bfh$, and ranks at least $r$.
  Then for any atoms $\{b_I\}_{I \subseteq [d]}$ in $\calB(\bfP)$ (and hence in $\calB(\bfQ)$) that are $(\bfd,\bfh)$-consistent,
  we have 
  \begin{align*}
    \Pr_{x,y_1,\ldots,y_d}[\bfP(L_I(x,y_1,\ldots,y_d)) = b_I\mbox{ for all } I \subseteq [d]] & = \frac{\prod_{i=1}^C|\Lambda_i|}{\|\calB\|^\ell} \pm \frac{\epsilon^{2^d}}{2p^{Cd\ell}}, \quad \mbox{and} \\
    \Pr_{x,y_1,\ldots,y_d}[\bfQ(L_I(x,y_1,\ldots,y_d)) = b_I\mbox{ for all } I \in [d]] & = \frac{\prod_{i=1}^C|\Lambda_i|}{\|\calB\|^\ell} \pm \frac{\epsilon^{2^d}}{2p^{Cd\ell}}.
  \end{align*}

  Since the number of atoms in $\calB(\bfP)$ and $\calB(\bfQ)$ are at most $p^{dC}$,
  we have $\dtv(\mu_{\bfP},\mu_{\bfQ}) \leq \epsilon^{2^d}$.

  Recall that the Gowers norm of $\Gamma \circ \bfP - \Gamma \circ \bfQ$ can be written as follows.
  \[
    \|\Gamma \circ \bfP- \Gamma \circ \bfQ\|_{U^d}^{2^d} = 
    \left| \E_{x,y_1,\ldots,y_d}  \prod_{I \subseteq [d]} \left(\Gamma (\bfP(x + \sum_{i \in I}y_i)) - \Gamma(\bfQ(x + \sum_{i \in I}y_i))\right) \right|.
  \]
  Over the choice of $x,y_1,\ldots,y_d$,
  the probability that $\bfP(x + \sum_{i \in I}y_i)$ and $\bfQ(x + \sum_{i \in I}y_i)$ have different values for some $I \subseteq [d]$ is at most $\epsilon^{2^d}$.
  Since the range of $\Gamma$ is $[0,1$], 
  we have $\|\Gamma \circ \bfP - \Gamma \circ \bfQ\|_{U^d}^{2^d} \leq \epsilon^{2^d}$, from which the lemma follows.
\end{proof}

\subsection{Properties of affine embeddings}

It is not difficult to see that,
for affine-invariant properties, local testability has an equivalent non-algorithmic definition through the distribution of restrictions to affine subspaces.
The following proposition is essentially due to~\cite{Bhattacharyya:2010gb}.
\begin{proposition}\label{pro:canonical-tester}
  An affine-invariant property $\calP$ is locally testable if and only if,
  for every $\epsilon > 0$,
  there exist a constant $m$ and a set $\calV \subseteq \{\Fp^m \to \bit\}$ with the following property.
  For any function $f:\Fp^n \to \bit$,
  over a random affine embedding $A:\Fp^m \to \Fp^n$,
  \begin{itemize}
  \item we have $\Pr_A[f \circ A \in \calV] \geq 2/3$ if $f \in \calP$, and 
  \item we have $\Pr_A[f \circ A \not \in \calV] \geq 2/3$ if $f$ is $\epsilon$-far from $\calP$.
  \end{itemize}
\end{proposition}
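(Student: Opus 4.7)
The backward direction is immediate: given $m$ and $\calV$, the algorithm that samples a uniformly random affine embedding $A: \Fp^m \to \Fp^n$, queries all $p^m$ values of $f \circ A$, and accepts iff $f \circ A \in \calV$ uses a constant number of queries, and its correctness is exactly the two displayed conditions.

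For the forward direction, let $T$ be a $q$-query $\epsilon$-tester for $\calP$. The plan is to convert $T$ into canonical form in three stages. First, boost $T$ by $O(1)$-fold independent repetition so that its two one-sided error probabilities are each at most $1/6$, and then convert it to a non-adaptive tester at a further multiplicative cost of $p^q$ in queries (enumerate all answer patterns and their induced decisions); call the resulting non-adaptive tester $T_0$ and its query complexity $q_0$. Second, symmetrize $T_0$ using affine invariance: let $T_1$ first sample a uniformly random non-singular affine map $\pi:\Fp^n \to \Fp^n$ and run $T_0$ on $f \circ \pi$. Since $\calP$ is affine-invariant, $T_1$ has the same acceptance guarantees as $T_0$ on $\calP$ and on $\epsilon$-far functions, and its joint query distribution is now invariant under post-composition with non-singular affine maps.

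Third, set $m = q_0$, so every $q_0$-tuple of queried points lies in the image of some affine embedding $\Fp^m \to \Fp^n$. The symmetrized query distribution can then be rewritten as: sample a uniformly random affine embedding $A: \Fp^m \to \Fp^n$, independently sample a point tuple $\mathbf{y} \in (\Fp^m)^{q_0}$ and a decision rule from a fixed distribution, query $f(A(y_1)), \ldots, f(A(y_{q_0}))$, and accept accordingly. Hence $T_1$'s acceptance probability on $f$ equals $\E_A[p(f \circ A)]$ where $p(g) \in [0,1]$ depends only on the restriction $g = f \circ A$. Defining $\calV = \{g: \Fp^m \to \bit \mid p(g) \geq 1/2\}$, a standard Markov argument converts the boosted bound $\E_A[p(f \circ A)] \geq 5/6$ for $f \in \calP$ (respectively $\leq 1/6$ for $f$ that is $\epsilon$-far) into $\Pr_A[f \circ A \in \calV] \geq 2/3$ (respectively $\leq 1/3$), as required.

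The main technical subtlety lies in the third stage: one must verify that after affine symmetrization the query distribution genuinely factors as a uniformly random embedding $A$ followed by an $A$-independent choice of query points inside $\Fp^m$. This relies on enumerating the constantly many affine-dependency patterns achievable by $q_0$-tuples and observing that, conditional on the pattern, the query tuple is uniformly distributed over all affine images of that pattern in $\Fp^n$, which is exactly what applying a uniformly random $A:\Fp^m \to \Fp^n$ to a fixed realization of the pattern inside $\Fp^m$ produces. The boosting-and-derandomization step that turns the randomized decision $p(g) \in [0,1]$ into the deterministic set $\calV$ is then essentially cosmetic.
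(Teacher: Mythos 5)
Your proof is correct, and it is the standard canonical-tester argument; the paper does not prove this proposition itself but defers to \cite{Bhattacharyya:2010gb}, and the argument given there is essentially yours: boost, convert to a non-adaptive tester, symmetrize by averaging over the non-singular affine group so that the query distribution becomes affine-invariant, observe that any affine-invariant query distribution factors as a uniformly random affine embedding $A:\Fp^m\to\Fp^n$ composed with an $A$-independent choice of points and decision rule, and finally threshold the per-restriction acceptance probability $p(g)$ at $1/2$ and apply Markov in both directions. One cosmetic slip worth fixing: the functions here are $\bit$-valued, so the adaptive-to-non-adaptive conversion multiplies the query count by $2^q$, not $p^q$; this does not affect anything since both are constants depending only on $\epsilon$.
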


Using Proposition~\ref{pro:canonical-tester}, 
the following lemma is shown in~\cite{Hatami:2013ux}, 
% we can show the following lemma.
% Though this lemma is implicitly used in
% we give its proof here to make this paper self-contained.
\begin{lemma}\label{lem:bad-affine-embedding}
  Let $\epsilon > 0$, $C \in \bbN$, $d \in \bbN$, and $r \in \bbN$.
  Let $\bfd = (d_1,\ldots,d_C) \in \bbN^C$, $\bfh = (h_1,\ldots,h_C)\in \bbN^C$ with $d_i < d$ and $h_i < d_i$ for every $i \in [C]$.
  Suppose $m \geq m_{\ref{lem:bad-affine-embedding}}(\epsilon,C,d,r)$.
  Then for every sequence $\bfP$ of $C$ polynomials $P_1,\ldots,P_C: \Fp^n \to \bbT$ with $\deg(\bfP) = \bfd$, $\depth(\bfP) = \bfh$, and $\rank(\bfP) \geq r$,
  a random affine embedding $A:\Fp^m \to \Fp^n$ satisfies  
  \begin{align*}
    \Pr[\deg(P_i \circ A) < d_i\mbox{ for some }i \in [C] \vee
    \depth(P_i \circ A) < h_i\mbox{ for some }i \in [C] \vee
    \rank(\bfP \circ A) < r] < \epsilon.
  \end{align*}
\end{lemma}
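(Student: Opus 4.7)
The plan is to union-bound over the three bad events, namely (i) $\deg(P_i \circ A) < d_i$ for some $i$, (ii) $\depth(P_i \circ A) < h_i$ for some $i$, and (iii) $\rank(\bfP \circ A) < r$, and to bound each one by reducing it to a statement about the distribution of $\bfP$-values on a fixed finite family of linear forms applied to random inputs in $\Fp^n$. The main tool throughout is the linear-forms equidistribution lemma, Lemma~\ref{the:linear-form-equidistribution}, combined with the observation that for $m$ large an affine embedding $A$ is essentially equivalent to sampling a tuple of near-uniform points in $\Fp^n$.

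For the degree event, writing $A(z) = Mz + c$, I would use the identity
\[
 (D_{y_1}\cdots D_{y_{d_i}}(P_i \circ A))(x) = (D_{My_1}\cdots D_{My_{d_i}} P_i)(Ax),
\]
so that $\deg(P_i \circ A) < d_i$ is equivalent to the $d_i$-th additive derivative of $P_i$ vanishing on the affine image of $A$. Since $\deg(P_i) = d_i$ exactly, this derivative is a nonzero constant function of its basepoint with a nonvanishing value for some choice of the $y_j$. Applying Lemma~\ref{the:linear-form-equidistribution} to the $2^{d_i}$ Gowers-cube linear forms $L_I(x, y_1, \ldots, y_{d_i}) = x + \sum_{j \in I} y_j$ then shows that, provided $m$ is large and $r$ is large enough compared to the required accuracy, the tuple $(P_i(A \cdot L_I))_I$ over random inputs is close to uniform on the $(d_i, h_i)$-consistent atoms of $\calB(P_i)$, and the consistent atoms for which $\sum_I (-1)^{d_i - |I|} b_I \neq 0$ carry a positive fraction of the mass. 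A standard averaging over $A$ then gives a failure probability at most $\epsilon/(3C)$. For the depth event, the analogous argument applied to the polynomial $p^{h_i} P_i$, which is not identically zero precisely because $\depth(P_i) = h_i$, bounds the probability that $\depth(P_i \circ A) < h_i$ by $\epsilon/(3C)$.

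For the rank event, I would condition on the depth and degree events not occurring, which ensures that any coefficient tuple $\lambda = (\lambda_1, \ldots, \lambda_C)$ witnessing a rank drop lives in the fixed finite set $\Lambda = \prod_i \bbZ/p^{h_i+1}\bbZ$. For each nontrivial $\lambda \in \Lambda$, let $Q_\lambda = \sum_i \lambda_i P_i$ and $d_\lambda = \max_i \deg(\lambda_i P_i)$; the hypothesis $\rank(\bfP) \geq r$ gives $\rank_{d_\lambda}(Q_\lambda) \geq r$. The plan is to show that with high probability over $A$, $\rank_{d_\lambda}(Q_\lambda \circ A) \geq r$ as well, via the equivalence between high $d$-rank and smallness of the Gowers $U^{d_\lambda}$-norm of $\sfe(Q_\lambda)$, together with the fact that a random affine embedding approximately preserves Gowers norms (again through Lemma~\ref{the:linear-form-equidistribution} applied to the Gowers cube forms). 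A final union bound over the finitely many $\lambda \in \Lambda$ closes the proof.

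The main obstacle I anticipate is the rank step: one needs a clean quantitative version of the implication ``if $\rank_d(Q) \geq r$ and $m$ is large in terms of $r$, $d$, and $\epsilon$, then $\Pr_A[\rank_d(Q \circ A) < r] < \epsilon$.'' The degree and depth steps are essentially immediate consequences of equidistribution, but rank is a nonlinear invariant of the polynomial and requires passing through the Gowers-norm characterization and an inverse-theorem-style argument. This is exactly the point where I would lean most heavily on the higher-order Fourier analysis machinery of~\cite{Bhattacharyya:2013ii} that is cited throughout the preliminaries, keeping the quantitative dependencies on $\epsilon, C, d, r$ explicit so as to extract the required threshold $m_{\ref{lem:bad-affine-embedding}}(\epsilon, C, d, r)$.
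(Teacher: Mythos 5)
The paper does not actually prove Lemma~\ref{lem:bad-affine-embedding}; it is cited as shown in (Claim~4.1 of)~\cite{Hatami:2013ux}, so there is no paper proof to match against. Evaluating your sketch on its own terms, there are two substantive gaps.

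First, for the degree and depth events your reduction to equidistribution over Gowers-cube forms proves the wrong thing. Equidistribution (Lemma~\ref{the:linear-form-equidistribution}) shows that when you pick $A$ at random \emph{and} a random tuple $(x,y_1,\ldots,y_{d_i})$, the derivative $\sum_{I}(-1)^{d_i-|I|}P_i(A L_I(x,y))$ is nonzero with probability bounded away from zero. But the bad event is ``for \emph{every} $(x,y)$ the derivative vanishes,'' i.e.\ the fraction of good tuples is exactly zero, and what you need is $\Pr_A[\text{this fraction is } 0] < \epsilon/(3C)$. Averaging over $A$ gives $\E_A[\text{good fraction}] \geq c > 0$, and the only thing reverse Markov yields from that is $\Pr_A[\text{good fraction}=0] \leq 1-c$; this does \emph{not} go to zero as $m$ grows. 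What actually makes the probability small is the linear-algebraic structure: $\deg(P_i\circ A) < d_i$ forces the nonzero symmetric multilinear form $D_{y_1}\cdots D_{y_{d_i}}P_i$ to vanish identically on $\mathrm{Im}(M)^{d_i}$, an ``isotropic subspace'' event whose probability decays like $p^{-\Omega(m)}$ once $m$ exceeds a constant depending on $d$. The same remark applies to depth via $p^{h_i}P_i$. The averaging framing you use simply does not produce the quantitative bound.

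Second, and more seriously, the rank step does not close. You propose to translate $\rank(\bfP)\geq r$ into smallness of $\|\sfe(Q_\lambda)\|_{U^{d_\lambda}}$, use approximate Gowers-norm preservation under $A$, and then invert to recover a rank lower bound. Both directions of the rank/Gowers dictionary lose parameters: high rank $\geq r$ gives $\|\sfe(Q_\lambda)\|_{U^{d_\lambda}} \leq \delta(r)$, but the inverse theorem only returns $\rank_{d_\lambda}(Q_\lambda\circ A) \geq r'(\delta(r))$ for some $r' \ll r$. The lemma asserts $\rank(\bfP\circ A) \geq r$ with the \emph{same} $r$, so the round trip is strictly lossy and cannot give what is claimed. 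You flag this as the hard step and defer to ``the machinery of~\cite{Bhattacharyya:2013ii},'' but that is exactly where a concrete argument is missing. A correct proof has to control rank more directly (for instance by a dimension-counting argument on the finitely many $\lambda$ showing that a rank-$<r$ decomposition of $Q_\lambda\circ A$ on a random $m$-dimensional subspace would, for $m$ large enough relative to $r$ and $d$, force a low-rank decomposition of $Q_\lambda$ on $\Fp^n$), rather than by transiting through the Gowers norm in both directions.
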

% \begin{proof}
%   Let $\calP$ denote the property of "bad" restrictions of $\bfP$:
%   Let $\bfP' = (P'_1,\ldots,P'_C)$ be a polynomial sequence with $\bfP' = \bfP \circ A$ for some affine embedding $A$.
%   Then, we add $\bfP'$ to $\calP$ if $\deg(P'_i) < d_i$ for some $i \in [|C|]$,
%   $\depth(P'_i) < h_i$ for some $i \in [|C|]$,
%   or $\rank(\bfP' ) < r$.
%   We note that $\calP$ is an example of degree-structural properties.
%   Hence, from Theorems~1.2 and~1.7 in~\cite{Bhattacharyya:2013ii},
%   $\calP$ is locally testable.

%   By assumption, we have $\bfP \not \in \calP$.
%   Since all elements in $\calP$ are polynomials of degree at most $d$, 
%   $\bfP$ is $\eta$-far from $\calP$,
%   where $\eta \geq p^{-\lceil d/(p-1)\rceil}$ is the minimal distance of polynomials of degree $d$.
%   Hence by Proposition~\ref{pro:canonical-tester},
%   there exists $m = m(\eta,C,d,r)$ such that,
%   for a random affine embedding $A:\Fp^m \to \Fp^n$,
%   we have $\Pr[P \circ A \in \calP] < \epsilon$.
% \end{proof}

The following lemma gives a behavior of the Gowers norm through affine embeddings.
\begin{lemma}[Claim~4.1 of~\cite{Hatami:2013ux}]\label{lem:gowers-norm-after-affine-embedding}
  Given $\epsilon > 0$ and $d \in \bbN$, suppose $m \geq m_{\ref{lem:gowers-norm-after-affine-embedding}}(\epsilon,d)$.
  Let $f:\bbF^n \to [-1,1]$ be a function.
  With probability at least $99/100$ over the choice of a random affine embedding $A:\Fp^m \to \Fp^n$,
  we have $\|f \circ A\|_{U^d} \leq \|f\|_{U^d} + \epsilon$.
\end{lemma}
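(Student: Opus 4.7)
The plan is to establish concentration of $G_A := \|f \circ A\|_{U^d}^{2^d}$ around $G := \|f\|_{U^d}^{2^d}$ via a second-moment argument on the random affine embedding $A = (L,c)$, and then convert this back to the Gowers norms using a subadditivity inequality for the $2^d$-th root. Both $G$ and $G_A$ are nonnegative since, for real-valued $f$, $\|f\|_{U^d}^{2^d}$ equals the Gowers inner product $\E_{x,y_1,\ldots,y_d} \prod_{I\subseteq[d]} f(x+\sum_{i\in I} y_i)$. Expanding and substituting $A(z)=Lz+c$, we may rewrite $G_A$ as an expectation over $(L,c,x,y_1,\ldots,y_d)$ of the product $\prod_I f(u_0 + u_I)$, where $u_0 := Lx+c$ and $u_i := Ly_i$; everything then reduces to understanding the joint distribution of $(u_0,u_1,\ldots,u_d)\in(\Fp^n)^{d+1}$, and for the variance a similar tuple of length $2d+2$.

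For the first moment, the random shift $c$ makes $u_0$ uniform in $\Fp^n$ and independent of $(L,x,y_1,\ldots,y_d)$; meanwhile $y_1,\ldots,y_d\in\Fp^m$ are linearly independent with probability $1-O(p^{d-m})$, and conditionally on that event the image $(Ly_1,\ldots,Ly_d)$ under a uniformly random injective $L:\Fp^m\to\Fp^n$ is uniform over ordered $d$-frames in $\Fp^n$, which differs from uniform on $(\Fp^n)^d$ by TV distance $O(p^{d-n}) \leq O(p^{d-m})$ (using $n \geq m$). Hence $(u_0,u_1,\ldots,u_d)$ has TV distance $O(p^{d-m})$ to uniform on $(\Fp^n)^{d+1}$, and since $|f|\leq 1$ this gives $\E_A[G_A] = G + O(p^{d-m})$.

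The second-moment calculation is analogous but requires some care. Expanding $\E_A[G_A^2] = \E_A\, \E_{x,\bar y, x', \bar y'} \prod_{I,J} f(A(x+y_I))\, f(A(x'+y'_J))$, the images in $\Fp^n$ form the tuple $(u_0, u_1, \ldots, u_d, u'_0, u'_1, \ldots, u'_d)$, in which $u'_0 - u_0 = L(x'-x)$ because the shift $c$ is shared between the two copies. Therefore the relevant linear-independence condition in $\Fp^m$ is on the $(2d+1)$-tuple $(x-x', y_1,\ldots,y_d, y'_1,\ldots,y'_d)$, which is satisfied with probability $1-O(p^{2d+1-m})$; conditional on it, the same frame-counting argument shows that $(u_0, u_1,\ldots,u_d, u'_0, u'_1,\ldots,u'_d)$ has TV distance $O(p^{2d+1-m})$ from uniform on $(\Fp^n)^{2d+2}$, at which point the two products over $I$ and $J$ factorize and contribute $G \cdot G$. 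Hence $\Var_A(G_A) = O(p^{2d+1-m})$, and Chebyshev's inequality gives $\Pr_A[\,G_A > G + \epsilon^{2^d}\,] \leq 1/100$ provided $m$ is a sufficiently large function of $\epsilon$ and $d$. Finally, using $(a+b)^{1/2^d} \leq a^{1/2^d} + b^{1/2^d}$ for $a,b \geq 0$, this upper bound on $G_A$ yields $\|f \circ A\|_{U^d} \leq \|f\|_{U^d} + \epsilon$ with probability at least $99/100$.

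The main obstacle is the bookkeeping in the second moment: a naive attempt to treat the two Gowers configurations as independent fails because the shift $c$ couples $u_0$ and $u'_0$. The fix, as described, is to package the coupling into a $(2d+1)$-dimensional linear-independence condition in $\Fp^m$; once that observation is made, both moment estimates reduce to standard counting of frames in $\Fp^m$ and $\Fp^n$ combined with a routine TV-to-expectation bound.
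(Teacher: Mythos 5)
Your proof is correct, and it follows the standard second-moment approach that (to the best of my knowledge) is also the one used in the cited reference [Hatami--Lovett, Claim~4.1]; the paper itself does not reproduce a proof, it only cites it. The key points — (i) dropping the absolute value since $\|f\|_{U^d}^{2^d}$ is a nonnegative Gowers inner product for real-valued $f$, (ii) the first-moment TV bound of $O(p^{d-m})$ via frame-counting with a uniformly random injective $L$ and the independence of the shift $c$, (iii) the careful handling of the $c$-coupling in the second moment by switching to the $(2d+1)$-tuple $(x-x',y_1,\ldots,y_d,y_1',\ldots,y_d')$, and (iv) the subadditivity $(a+b)^{1/2^d}\le a^{1/2^d}+b^{1/2^d}$ to pass from $G_A\le G+\epsilon^{2^d}$ back to the Gowers norms — are all sound. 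The variance bound $\Var_A(G_A)=O(p^{2d+1-m})$ followed by Chebyshev then gives the $99/100$ guarantee once $m$ is taken sufficiently large in terms of $\epsilon$ and $d$.
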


% \begin{lemma}\label{lem:properties-preserved-by-A+}
%   Let $m\leq n$ be integers.
%   Let $\bfP' = \{P'_1,\ldots,P'_C\}$ be a set of polynomials $\Fp^m \to \bbT$ and suppose that the factor $\calB' = \calB(\bfP')$ has rank $r$.
%   Let $A:\Fp^{n\times m} $ be an affine embedding from $\Fp^m$ to $\Fp^n$ and define $A^+:\Fp^{m \times n}$ so that $A^+A = I_m$.
%   Define a set $\bfP = \{P_1,\ldots,P_C\} $ of polynomials $\Fp^n \to \bbT$ by $P_i = P'_i \circ A^+$.
%   Then, the factor $\calB = \calB(\bfP)$ has rank at least $r$.
% \end{lemma}
% \begin{proof}
%   Suppose that $\calB$ has rank less than $r$.
%   Let $\bfh = (h_1,\ldots,h_C)$ be the depth of $\bfP$.
%   Then, there exists $\lambda_1,\ldots,\lambda_C \in \bbZ^C$ so that $(\lambda_1 \bmod p^{h_1+1},\ldots,\lambda_C \bmod p^{h_C+1}) \neq (0,\ldots,0)$ such that the polynomial $Q = \sum_{i \in [C]}\lambda_i P_i$ satisfies $\rank_d(Q) < r$,
%   where $d = \max_i \deg(\lambda_i P_i)$.

%   Now we show that $\calB'$ has rank less than $r$.
%   Note that applying an affine transformation cannot increase degree, depth, or rank.
%   For each $i \in [C]$,
%   since $P'_i = P_i \circ A$ and $P_i = P'_i \circ A^+$,
%   $P'_i$ and $P_i$ must have the same degree and depth.
%   We define $Q'(x) = \sum_{i \in [C]}\lambda_i P_i (Ax) = Q(Ax)$, 
%   and we will show that $Q'$ is a witness that $\calB'$ has rank less than $r$.
%   We note that $\max_i\deg(\lambda_i P'_i) = \max_i\deg(\lambda_i P_i) = d$.
%   However, $\rank_d(Q') = \rank_d(Q\circ A) \leq \rank_d(Q) < r$, which is a contradiction.
% \end{proof}

\section{Satisfying Regularity-Instances by Small Perturbations}\label{sec:small-perturbation}
Let $I = (\gamma,\Gamma,C,d,\bfd,\bfh,r)$ be a regularity-instance.
Suppose that a function $f$ can be decomposed as $f(x) = \widetilde{\Gamma}(\bfP(x)) + \Upsilon(x)$,
where $\bfP$ is a sequence of $C$ polynomials with $\deg(\bfP) = \bfd$, $\depth(\bfP) = \bfh$, and $\rank(\bfP) \geq r$,
$\widetilde{\Gamma}$ is a function close to $\Gamma$,
and $\Upsilon$ has Gowers norm slightly larger than $\gamma$.
In this section, we show that such a function $f$ can be made satisfy $I$ by a small perturbation.
Formally, we show the following.
\begin{lemma}\label{lem:small-perturbation}
  For any $\gamma,\epsilon > 0$ and $d \in \bbN$,
  there exist $\tau = \tau_{\ref{lem:small-perturbation}}(\gamma,\epsilon,d)$ and $r_{\ref{lem:small-perturbation}}(\gamma,\epsilon,C,d)$ with the following property.
  Let $I = (\gamma,\Gamma,C,d,\bfd,\bfh,r)$ be a regularity-instance with $r \geq r_{\ref{lem:small-perturbation}}(\gamma,\epsilon,C,d)$.
  Suppose that a function $f:\Fp^n \to \bit$ can be expressed as 
  \[
    f(x) = \widetilde{\Gamma}(\bfP(x)) + \Upsilon(x),
  \]
  where
  \begin{itemize}
  \item $\bfP$ is a polynomial sequence with $\deg(\bfP) = \bfd$, 
  $\depth(\bfP) = \bfh$, and
  $\rank(\bfP) \geq r$,
  \item $\widetilde{\Gamma}:\prod_{i=1}^C \bbU_{h_i+1} \to [0,1]$ is a function with $\|\Gamma - \widetilde{\Gamma}\|_\infty \leq \tau$, 
  where $\bfh = (h_1,\ldots,h_C)$, and
  \item $\Upsilon:\Fp^n \to [-1,1]$ is a function with $\|\Upsilon\|_{U^d} \leq \gamma + \tau$.
  \end{itemize}
  Then, $f$ is $\epsilon$-close to satisfying $I$.
\end{lemma}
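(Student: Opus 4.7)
The plan is to construct a $\bit$-valued witness $g$ close to $f$ satisfying $I$ using the very same polynomial sequence $\bfP$ that the hypothesis supplies; since $\bfP$ already has the required degrees $\bfd$, depths $\bfh$, and rank at least $r$, it remains only to produce $g:\Fp^n \to \bit$ with $\Pr_x[g(x) \neq f(x)] \leq \epsilon$ and $\|g - \Gamma \circ \bfP\|_{U^d} \leq \gamma$. As a first reduction, the Gowers norm is bounded by the sup norm, so $\|(\widetilde{\Gamma} - \Gamma) \circ \bfP\|_{U^d} \leq \tau$; combined with the triangle inequality and the hypothesis $\|\Upsilon\|_{U^d} \leq \gamma+\tau$, this gives the a priori bound $\|f - \Gamma \circ \bfP\|_{U^d} \leq \gamma + 2\tau$.

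Fix a small auxiliary parameter $\eta$ and set $\alpha = (\gamma - \eta)/(\gamma + 2\tau)$. I will interpolate via $u(x) = (1 - \alpha)\Gamma(\bfP(x)) + \alpha f(x)$, which lies in $[0,1]$ pointwise because both $\Gamma \circ \bfP$ and $f$ do, and then let $g: \Fp^n \to \bit$ be the independent randomized rounding $\Pr[g(x)=1] = u(x)$. Homogeneity of the Gowers norm gives $\|u - \Gamma \circ \bfP\|_{U^d} = \alpha \|f - \Gamma \circ \bfP\|_{U^d} \leq \gamma - \eta$. The expected Hamming distance unfolds, using $f \in \bit$ and $\Gamma \circ \bfP \in [0,1]$, to $\E_g \Pr_x[g(x) \neq f(x)] = (1 - \alpha)\|f - \Gamma \circ \bfP\|_1 \leq 1 - \alpha = (2\tau + \eta)/(\gamma + 2\tau)$, and Hoeffding's inequality on the $p^n$ independent Bernoullis concentrates this quantity around its mean.

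The crux is bounding $\|g - u\|_{U^d}$ well enough that the triangle inequality $\|g - \Gamma \circ \bfP\|_{U^d} \leq (\gamma - \eta) + \|g - u\|_{U^d}$ delivers the target $\gamma$. Because the $g(x) - u(x)$ are independent mean-zero variables in $[-1,1]$, for each fixed $(x, y_1, \ldots, y_d)$ the inner Gowers expectation $\E_g[\prod_{I \subseteq [d]} (g - u)(x + \sum_{i \in I} y_i)]$ factors over the distinct values of the $2^d$ linear forms $L_I = x + \sum_{i \in I} y_i$ and vanishes unless every such form coincides with another. Random tuples $(x, y_1, \ldots, y_d) \in (\Fp^n)^{d+1}$ avoid every pairwise collision $L_I = L_{I'}$ except with probability $O(2^{2d}/p^n)$, and a short second-moment calculation yields $\E_g[\|g - u\|_{U^d}^{2^d}] \leq 2^{O(d)}/p^{\Omega(n)}$; Markov then forces $\|g - u\|_{U^d} \leq \eta$ with probability at least $1/2$ once $p^n$ exceeds a threshold polynomial in $1/\eta$ and exponential in $d$.

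Choosing $\tau_{\ref{lem:small-perturbation}}(\gamma, \epsilon, d)$ and $\eta$ both of order $\epsilon \gamma$ keeps $1 - \alpha$ below $\epsilon/2$, while $r_{\ref{lem:small-perturbation}}(\gamma, \epsilon, C, d)$ is picked large enough that any polynomial sequence of complexity $C$ meeting the rank bound must live in enough variables for $p^n$ to exceed the rounding-error threshold (large rank is incompatible with too few variables). A union bound then produces a single $g \in \bit$ satisfying both the Hamming and Gowers conditions simultaneously; since $\bfP$ already satisfies every other structural requirement of $I$, this $g$ witnesses that $f$ is $\epsilon$-close to satisfying $I$. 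The main obstacle is the three-way balance among $\alpha$, $\tau$, and $\eta$: $\alpha$ near $1$ preserves Hamming distance but barely shrinks the Gowers norm, whereas $\alpha$ far from $1$ does the opposite, so the parameters must be calibrated against $\epsilon\gamma$ and the rank condition must be strong enough to make $p^n$ very large relative to the inverse-polynomial rounding-error threshold.
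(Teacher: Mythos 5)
Your proposal is correct and uses the same core construction as the paper: the witness $g$ is the independent randomized rounding of the convex combination $(1-\alpha)\Gamma\circ\bfP + \alpha f$, which is precisely the two-coin process in the paper's Section~\ref{sec:small-perturbation} with $\alpha = 1-\delta$. The bookkeeping differs in two minor but pleasant ways. First, you bound $\|(\widetilde{\Gamma}-\Gamma)\circ\bfP\|_{U^d}$ directly by the sup norm ($\leq\tau$), whereas the paper's Claim~\ref{cla:small-paturbation-adjust-Gamma} passes through the $L_1$ norm and Lemma~\ref{lem:bound-gowers-norm-by-l1-norm}, needing the rank condition $r\ge r_{\ref{lem:size-of-atoms}}(\tau/p^{dC})$ and getting the weaker bound $(2\tau)^{1/2^d}$; your route is both simpler and quantitatively sharper and drops that use of rank entirely. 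Second, for the Gowers estimate you split $g - \Gamma\circ\bfP = (u-\Gamma\circ\bfP)+(g-u)$ and treat the rounding noise $g-u$ as a separate $o_n(1)$ error, whereas the paper's Claim~\ref{cla:small-paturbation-decrease-gowers-norm} establishes a multiplicative contraction $(1-\delta/3)$ of the Gowers norm via a Markov argument; both deliver the same conclusion. Your observation that the rank hypothesis forces $p^n$ to be large (since a factor of rank $\ge r$ and degree $\ge 2$ needs at least $r$ variables) makes explicit the ``for sufficiently large $n$'' that Claims~\ref{cla:small-paturbation-distance-to-g}--\ref{cla:small-paturbation-decrease-gowers-norm} invoke implicitly; note this justification only goes through when the degrees are at least $2$ (for degree-$1$ factors the rank is $\infty$ regardless of $n$), though that edge case does not matter for how the lemma is applied. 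One small terminological nit: your control of $\E_g\|g-u\|_{U^d}^{2^d}$ is a first-moment bound (vanishing of the inner expectation unless linear forms collide), not a second-moment calculation, but the estimate itself is right.
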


% Our proof consists of two steps.
% First, given a function $f = \widetilde{\Gamma}(\bfP) + \Upsilon$ as in the statement of Lemma~\ref{lem:small-perturbation},
% we modify a small fraction of $f$ so that the structured part becomes $\Gamma \circ \bfP$ and the Gowers norm of the pseudo-random part does not increase much.
% Then, we modify the resulting function so that the Gowers norm is bounded by $\gamma$.

Let $\Upsilon'(x) = \widetilde{\Gamma}(\bfP(x)) - \Gamma(\bfP(x)) + \Upsilon(x)$.
Then, 
we can switch the structured part of $f$ to $\Gamma(\bfP(x))$ by expressing $f$ as $f(x) = \Gamma(\bfP(x)) + \Upsilon'(x)$.
The following claim shows that the Gowers norm of $\Upsilon'$ is not much larger than that of $\Upsilon$.
\begin{claim}\label{cla:small-paturbation-adjust-Gamma}
  Suppose $r_{\ref{lem:small-perturbation}}(\gamma,\epsilon,C,d) \geq  r_{\ref{lem:size-of-atoms}}(\tau / p^{dC})$.
  Then $\|\Upsilon'\|_{U^d} \leq \gamma + (2\tau)^{1/2^d}$.
\end{claim}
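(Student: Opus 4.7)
The plan is a short triangle-inequality argument in the Gowers norm. I would write $\Upsilon' = \Upsilon + h$ where $h := (\widetilde{\Gamma} - \Gamma) \circ \bfP$; since $\|\cdot\|_{U^d}$ is a norm for $d \geq 2$ (and still obeys the triangle inequality as a semi-norm for $d=1$),
\[
  \|\Upsilon'\|_{U^d} \leq \|\Upsilon\|_{U^d} + \|h\|_{U^d} \leq (\gamma + \tau) + \|h\|_{U^d},
\]
so everything reduces to bounding $\|h\|_{U^d}$ by roughly $\tau^{1/2^d}$.

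For that I would apply Lemma~\ref{lem:bound-gowers-norm-by-l1-norm}. Since $|h(x)| \leq \|\widetilde{\Gamma} - \Gamma\|_\infty \leq \tau \leq 1$, the function $h$ maps $\Fp^n$ into $[-1,1]$, so the lemma gives $\|h\|_{U^d} \leq \|h\|_1^{1/2^d}$, and I only need an $L^1$ bound on $h$. This is where the rank hypothesis enters. By taking $r_{\ref{lem:small-perturbation}}(\gamma,\epsilon,C,d) \geq r_{\ref{lem:size-of-atoms}}(\tau/p^{dC})$, Lemma~\ref{lem:size-of-atoms} ensures that each atom $b \in \prod_i \bbU_{h_i+1}$ is hit with probability $\frac{1}{\|\calB\|} \pm \frac{\tau}{p^{dC}}$ under a uniform $x$. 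Then
\[
  \|h\|_1 = \sum_{b} \Pr[\bfP(x)=b] \cdot |\widetilde{\Gamma}(b) - \Gamma(b)| \leq \tau + \tau^2 \leq 2\tau,
\]
using the pointwise $\tau$-bound on each difference and $\|\calB\| \leq p^{dC}$. Hence $\|h\|_{U^d} \leq (2\tau)^{1/2^d}$.

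Combining the two triangle-inequality pieces gives $\|\Upsilon'\|_{U^d} \leq \gamma + \tau + (2\tau)^{1/2^d}$. To match the exact form of the stated bound I would absorb the extra $\tau$ into $(2\tau)^{1/2^d}$ by choosing the global threshold $\tau_{\ref{lem:small-perturbation}}(\gamma,\epsilon,d)$ small enough that $\tau^{1-1/2^d} \leq 2^{1/2^d}-1$, which is a benign constraint since $\tau_{\ref{lem:small-perturbation}}$ is at our disposal. I do not expect any substantive obstacle: the whole argument is a triangle inequality plus one invocation each of Lemmas~\ref{lem:bound-gowers-norm-by-l1-norm} and~\ref{lem:size-of-atoms}, and the rank condition is used purely to make the size-of-atoms estimate tight enough to yield a clean constant in the final bound.
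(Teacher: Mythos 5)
Your proof takes essentially the same route as the paper's: split $\Upsilon' = \Upsilon + h$ with $h = (\widetilde{\Gamma} - \Gamma)\circ \bfP$, apply the triangle inequality for the Gowers norm, and control $\|h\|_{U^d}$ via Lemma~\ref{lem:bound-gowers-norm-by-l1-norm} and an $L_1$ estimate. Two small remarks. First, you are in fact more careful than the paper's own proof here: the hypothesis of Lemma~\ref{lem:small-perturbation} gives $\|\Upsilon\|_{U^d} \leq \gamma + \tau$, so the honest conclusion is $\gamma + \tau + \|h\|_{U^d}$, whereas the paper's proof silently uses $\|\Upsilon\|_{U^d}\leq\gamma$. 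Your proposal to shrink $\tau$ so the slack is absorbed is the right repair (and, importantly, nothing downstream breaks: the subsequent choice ``$(\gamma + (2\tau)^{1/2^d})(1-\delta/3)\leq\gamma$'' tolerates the extra $\tau$). One arithmetic nit: your absorption condition $\tau^{1-1/2^d}\leq 2^{1/2^d}-1$ would pair with the tight $\|h\|_1 \leq \tau$ bound, not the looser $\|h\|_1\leq 2\tau$ you wrote down; with the tight bound you get exactly $\tau + \tau^{1/2^d} \leq (2\tau)^{1/2^d}$ as needed. Second, both you and the paper invoke the rank condition and Lemma~\ref{lem:size-of-atoms} for the $L_1$ bound, but this is unnecessary: $\|\widetilde{\Gamma}-\Gamma\|_\infty \leq \tau$ already gives $\|h\|_\infty \leq \tau$ pointwise, hence $\|h\|_1\leq\tau$ unconditionally. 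So the rank hypothesis can be dropped from this particular claim.
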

\begin{proof}
  Since $\|\Upsilon\|_{U^d} \leq \gamma$,
  it suffices to bound the Gowers norm of $\widetilde{\Gamma} \circ \bfP - \Gamma \circ \bfP$.
  Since $r \geq r_{\ref{lem:size-of-atoms}}(\tau / p^{dC})$ and $\|\Gamma\|_{\infty} \leq 1$,
  by Lemma~\ref{lem:size-of-atoms},
  we have
  $\|\widetilde{\Gamma} \circ \bfP - \Gamma \circ \bfP\|_1 \leq \tau + \tau/ p^{dC} \cdot \|\calB\| \leq 2\tau$.
  By Lemma~\ref{lem:bound-gowers-norm-by-l1-norm},
  we have $\|\widetilde{\Gamma} \circ \bfP - \Gamma \circ \bfP\|_{U^d} \leq (2\tau)^{1/2^d}$.
\end{proof}
In what follows, we assume $r_{\ref{lem:small-perturbation}}(\gamma,\epsilon,C,d) \geq  r_{\ref{lem:size-of-atoms}}(\tau / p^{dC})$.
From Claim~\ref{cla:small-paturbation-adjust-Gamma},
the pseudo-random part has Gowers norm at most $\gamma + (2\tau)^{1/2^d}$.
To make the Gowers norm at most $\gamma$,
we now construct a function $g$ as follows.
For each point $x \in \Fp^n$,
we decide the value of $g$ by tossing two coins.
The first coin comes up heads with probability $1-\delta$ and tails with probability $\delta$, where $\delta$ is a parameter chosen later.
If the first coin comes up heads, we set $g(x) = f(x)$.
If the first coin comes up tails, we toss the second coin.
The second coin comes up heads with probability $\beta$ and tails with probability $1-\beta$,
where $\beta = \Gamma(b)$ for the atom $b$ of $\calB(\bfP)$ corresponding to $x$.
We set $g(x) = 1$ if the second coin comes up heads and set $g(x) = 0$ otherwise.
\begin{claim}\label{cla:small-paturbation-distance-to-g}
  For sufficiently large $n$,
  we have $\|f - g\|_1 \leq 2\delta $ with probability $1 - o_n(1)$.
\end{claim}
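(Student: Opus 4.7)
The plan is to compute the expected value of $\|f - g\|_1$ over the random coin tosses defining $g$, show it is at most $\delta$, and then apply a standard concentration argument to boost this to an almost-sure bound of $2\delta$.

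First I would compute, pointwise, the expected contribution of each $x$. Since $f(x) \in \{0,1\}$ and $g(x)$ agrees with $f(x)$ whenever the first coin comes up heads (probability $1-\delta$), we get
\[
  \E_g\bigl[|f(x) - g(x)|\bigr] \;=\; \delta \cdot \bigl(\beta(1-f(x)) + (1-\beta)f(x)\bigr) \;\leq\; \delta,
\]
where $\beta = \Gamma(\bfP(x)) \in [0,1]$. Averaging over $x \in \Fp^n$ gives $\E_g[\|f-g\|_1] \leq \delta$.

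Next I would apply concentration. The random variables $\{|f(x) - g(x)|\}_{x \in \Fp^n}$ are independent (the coin tosses defining $g(x)$ are independent across $x$) and take values in $[0,1]$. Therefore $\|f-g\|_1 = \frac{1}{p^n}\sum_x |f(x)-g(x)|$ is the average of $p^n$ independent bounded random variables, so by Hoeffding's inequality,
\[
  \Pr_g\Bigl[\|f - g\|_1 > \E_g[\|f-g\|_1] + \delta \Bigr] \;\leq\; 2\exp\!\bigl(-2\delta^2 p^n\bigr).
\]
For any fixed $\delta > 0$, this probability is $o_n(1)$, so with probability $1 - o_n(1)$ we have $\|f-g\|_1 \leq \delta + \delta = 2\delta$, as claimed.

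There is no real obstacle here: the only subtle point is that $g$ is genuinely random (not determined by $f$), so one must be careful to state that the bound holds with high probability over the coin tosses, and to note that the independence across $x$ is what allows Hoeffding to give $o_n(1)$ deviation. Since $\delta$ is a parameter that will be fixed later (in particular, independent of $n$), the additive $o_n(1)$ deviation is absorbed into the second $\delta$ for all sufficiently large $n$.
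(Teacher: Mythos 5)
Your argument is correct and follows the same structure as the paper's: compute $\E_g[\|f-g\|_1] \leq \delta$ pointwise, then apply a concentration inequality over the independent coin tosses to conclude. The only difference is that you invoke Hoeffding's inequality, whereas the paper bounds $\Var\|f-g\|_1 \leq p^{-n}$ and applies Chebyshev; both give the required $o_n(1)$ failure probability, with Hoeffding giving a sharper (exponential) tail that is not needed here.
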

\begin{proof}
  Note that $\E_g \|f-g\|_1 \leq \delta$ and $\Var\|f - g\|_1 \leq p^{-n}$.
  Hence by Chebyshev's inequality,
  $\|f - g\|_1 \leq \delta + o_n(1) \leq 2\delta$ with probability $1 - o_n(1)$.
\end{proof}

\begin{claim}\label{cla:small-paturbation-decrease-gowers-norm}
  For sufficiently large $n$,
  we have $\|g - \Gamma \circ \bfP\|_{U^d} \leq (1 - \delta/3)  \|f - \Gamma \circ \bfP\|_{U^d}$ with probability at least $\delta / 2$.
\end{claim}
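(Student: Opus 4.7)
The plan is a first-moment computation combined with Markov's inequality. Write $F(x) := f(x) - \Gamma(\bfP(x))$ and $G(x) := g(x) - \Gamma(\bfP(x))$, so the claim becomes $\|G\|_{U^d} \leq (1-\delta/3)\|F\|_{U^d}$ with probability at least $\delta/2$ over the coin tosses. Let $B(x) \in \{0,1\}$ be the outcome of the first coin at $x$ (heads with probability $1-\delta$) and $Y(x) \in \{0,1\}$ that of the second (heads with probability $\Gamma(\bfP(x))$); all $(B(x), Y(x))_{x \in \Fp^n}$ are jointly independent. A direct calculation yields
\[
G(x) = B(x)F(x) + (1-B(x))\bigl(Y(x) - \Gamma(\bfP(x))\bigr),
\]
from which $\E[G(x)] = (1-\delta)F(x)$, while the family $\{G(x)\}_x$ is mutually independent (each $G(x)$ is a function of the coins at $x$ alone) and bounded by $1$ in absolute value.

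Next I would expand
\[
\|G\|_{U^d}^{2^d} = \E_{x, y_1, \ldots, y_d}\prod_{I \subseteq [d]} G(L_I), \quad L_I := x + \textstyle\sum_{i \in I} y_i,
\]
and take expectation over the coins via Fubini. For any fixed $(x, y_1, \ldots, y_d)$ at which the $2^d$ points $L_I$ are pairwise distinct, the $G(L_I)$'s are independent, so $\E[\prod_I G(L_I)] = \prod_I \E[G(L_I)] = (1-\delta)^{2^d}\prod_I F(L_I)$. A collision $L_I = L_{I'}$ with $I \neq I'$ requires $\sum_{i \in I \triangle I'} y_i = 0$ and hence occurs with probability $p^{-n}$; a union bound over the $\binom{2^d}{2}$ pairs together with the uniform bound $|G|, |F| \leq 1$ shows the collision contribution is at most $2\binom{2^d}{2}p^{-n} = o_n(1)$. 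Therefore
\[
\E\bigl[\|G\|_{U^d}^{2^d}\bigr] = (1-\delta)^{2^d}\|F\|_{U^d}^{2^d} + o_n(1).
\]

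Markov's inequality applied to the non-negative random variable $\|G\|_{U^d}^{2^d}$ then gives
\[
\Pr\!\left[\|G\|_{U^d}^{2^d} > (1-\delta/3)^{2^d}\|F\|_{U^d}^{2^d}\right] \leq \left(\frac{1-\delta}{1-\delta/3}\right)^{2^d} + \frac{o_n(1)}{(1-\delta/3)^{2^d}\|F\|_{U^d}^{2^d}}.
\]
Since $(1-\delta)/(1-\delta/3) \leq 1 - 2\delta/3$, the first summand is at most $(1 - 2\delta/3)^{2^d} \leq 1 - \delta$ for any $d \geq 1$, provided $\delta$ is small enough. The second summand is $o_n(1)$ whenever $\|F\|_{U^d}$ is bounded below by a positive constant, which holds throughout the regime in which the claim is actually invoked inside the outer proof of Lemma~\ref{lem:small-perturbation} (one halts the iteration as soon as the Gowers norm falls below $\gamma$). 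Hence for $n$ sufficiently large the bad event has probability at most $1 - \delta/2$, proving the claim.

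I expect the main technical care to be in the collision bookkeeping in the second paragraph: the bound on the collision contribution must be uniform in $(x, y_1, \ldots, y_d)$, and one must quantify ``sufficiently large $n$'' in terms of a lower bound on $\|F\|_{U^d}$ so that the $o_n(1)$ term in the Markov estimate does not swallow the $\Theta(\delta)$ gap between $(1-\delta)^{2^d}$ and $(1-\delta/3)^{2^d}$. The independence-plus-Markov skeleton itself is routine.
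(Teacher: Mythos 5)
Your proposal is correct and shares the paper's backbone — a first-moment computation for the $2^d$-th power of the Gowers norm using independence at distinct points $L_I$, a collision term of size $o_n(1)$, and Markov's inequality — but differs in where Markov is applied. The paper applies Markov to the nonnegative random variable $\E_\bfx g'(\bfx)$, which involves only the first-coin randomness (it is the $U^d$-norm to the $2^d$ of the conditional expectation of $g$ given the first coins), and then needs an extra and somewhat glossed-over step to convert a bound on that quantity into a bound on $\|g - \Gamma\circ\bfP\|_{U^d}^{2^d}$, which also depends on the second coins. You instead apply Markov directly to $\|g - \Gamma\circ\bfP\|_{U^d}^{2^d}$ over the full coin space, which is cleaner and sidesteps that conversion. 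You also make explicit something the paper leaves implicit: Markov against the threshold $(1-\delta/3)^{2^d}\|F\|_{U^d}^{2^d}$ needs $\|F\|_{U^d}$ bounded away from $0$ so the $o_n(1)$ correction does not swallow the $\Theta(\delta)$ gap. This does hold where the claim is invoked, though your parenthetical about ``halting the iteration'' misreads the outer argument: the proof of Lemma~\ref{lem:small-perturbation} applies the claim once, not iteratively. The correct justification is simply that if $\|f - \Gamma\circ\bfP\|_{U^d} \leq \gamma$ then $f$ already satisfies the regularity-instance and the claim is moot, so one may assume $\|F\|_{U^d} > \gamma$ with $\gamma > 0$ fixed by the instance; the same implicit lower bound is needed in the paper's own Markov step and its final $(1-\delta/2)\gamma_f + o_n(1) \leq (1-\delta/3)\gamma_f$ inequality.
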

\begin{proof}
  For a function $h$, we define $\gamma_h = \|h - \Gamma \circ \bfP\|_{U^d}^{2^d}$.
  The expected Gowers norm of $g - \Gamma \circ \bfP$ is
  \[
    \E_g \gamma_g = \E_{x,y_1,\ldots,y_d \in \Fp^n} \E_g\prod_{I \subseteq [d]} (g(x+\sum_{i \in I}y_i) - (\Gamma \circ \bfP)(x + \sum_{i \in I}y_i) ).
  \]
  If all $y_1,\ldots,y_d$ are linearly independent, 
  each term in the product becomes independent.
  Since this happens with probability at least $1 - p^{d-n}$,
  we have 
  \[
    \E_g \gamma_g = \E_{x,y_1,\ldots,y_d \in \Fp^n}\prod_{I \subseteq [d]}  \E_g[g(x+\sum_{i \in I}y_i) - (\Gamma \circ \bfP)(x + \sum_{i \in I}y_i) ]+ o_n(1).
  \]
  For $\bfx = (x,y_1,\ldots,y_d)$,
  we define $f'(\bfx) = \prod_{I \subseteq [d]} (f(x+\sum_{i \in I}y_i) - (\Gamma \circ \bfP)(x + \sum_{i \in I}y_i) )$.
  Similarly we define $g'(\bfx) = \prod_{I \subseteq [d]} (\E_g g(x+\sum_{i \in I}y_i) - (\Gamma \circ \bfP)(x + \sum_{i \in I}y_i) )$.
  Then, we have 
  \[
    \gamma_f = \E_{\bfx \in (\Fp^n)^{d+1}}f'(\bfx) \quad \text{and} \quad \E_g \gamma_g = \E_{\bfx \in (\Fp^n)^{d+1}}g'(\bfx) + o_n(1).
  \]

  Now we consider the difference between $\E_\bfx f'(\bfx)$ and $ \E_\bfx \E_g g'(\bfx)$.

  Let $V \subseteq \Fp^n$ be the set of points $x \in \Fp^n$ for which the corresponding first coin comes up tails.
  Then, $g'(\bfx) = f'(\bfx)$ if no point in $\bfx$ belongs to $V$ and $g'(\bfx) = 0$ otherwise.
  Hence, the probability that $g'(\bfx) = 0$ is $q$, where $q = 1 - (1-\delta)^{2^d} \geq \delta$,
  which means that $\E_g g'(\bfx) \leq (1-\delta)f'(\bfx)$ holds for any $\bfx \in (\Fp^n)^{d+1}$.
  Thus $\E_g \E_x g'(\bfx) \leq (1-\delta)\gamma_f$.
  From Markov's inequality,
  the probability that $\E_x g'(\bfx) \geq (1-\delta/2)\gamma_f$ is at most $(1-\delta)/(1-\delta/2)$.
  It means that, with probability at least $1 - (1-\delta)/(1-\delta/2) = \delta/2(1-\delta) \geq \delta/2$,
  the Gowers norm of $g - \Gamma \circ \bfP$ is at most $(1-\delta/2)f'(\bfx) + o_n(1) \leq (1-\delta/3)f'(\bfx)$.
\end{proof}

From the probabilistic argument,
there is a function $g$ satisfying both consequences of Claims~\ref{cla:small-paturbation-distance-to-g} and~\ref{cla:small-paturbation-decrease-gowers-norm}:
\begin{corollary}\label{cor:small-paturbation-get-g}
  For sufficiently large $n$,
  there exists $g$ such that $\|f - g\|_1 \leq 2\delta$ and $\|g - \Gamma\circ \bfP\|_{U^d} \leq (1-\delta/3)\|f - \Gamma \circ \bfP\|_{U^d}$.
\end{corollary}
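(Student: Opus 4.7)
The plan is a direct application of the probabilistic method to the random function $g$ defined just before Claim~\ref{cla:small-paturbation-distance-to-g}. Both of the preceding claims concern this same random $g$, so the idea is simply to combine them via a union bound on the two favorable events.

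Let $E_1$ denote the event $\{\|f - g\|_1 \leq 2\delta\}$ and let $E_2$ denote the event $\{\|g - \Gamma \circ \bfP\|_{U^d} \leq (1-\delta/3)\|f - \Gamma \circ \bfP\|_{U^d}\}$. By Claim~\ref{cla:small-paturbation-distance-to-g} I have $\Pr[E_1] \geq 1 - o_n(1)$, and by Claim~\ref{cla:small-paturbation-decrease-gowers-norm} I have $\Pr[E_2] \geq \delta/2$. Therefore
\[
  \Pr[E_1 \cap E_2] \;\geq\; \Pr[E_2] - \Pr[\neg E_1] \;\geq\; \delta/2 - o_n(1).
\]
For all $n$ large enough that the $o_n(1)$ term is smaller than $\delta/4$, this probability is strictly positive, so there must exist at least one realization of $g$ in the support of the random construction that satisfies both the $L_1$ bound and the Gowers norm contraction simultaneously. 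Any such realization witnesses the existence statement in the corollary.

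There is no genuine obstacle here: the corollary is essentially just the formal combination of the two preceding probabilistic claims, and the only point requiring attention is that $n$ be taken large enough to absorb the $o_n(1)$ slack arising from Chebyshev's inequality in Claim~\ref{cla:small-paturbation-distance-to-g}. I expect the entire argument to be a one-paragraph union bound.
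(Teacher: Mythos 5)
Your argument is correct and is exactly what the paper intends by ``From the probabilistic argument'': the two preceding claims concern the same random $g$, and since $\Pr[E_2]\ge\delta/2$ while $\Pr[\neg E_1]=o_n(1)$, the intersection has positive probability for large $n$, yielding a deterministic witness. This matches the paper's (unspelled-out) reasoning.
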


We choose $2\delta \leq \epsilon$ and choose $\tau$ so that $(\gamma + (2\tau)^{1/2^d})(1 - \delta/3) \leq \gamma$.
% For example, we can choose $(2\tau)^{1/2^d} < \gamma \delta^{2^d} \leq \gamma \epsilon^{2^d}$.
Then Lemma~\ref{lem:small-perturbation} follows from Claims~\ref{cla:small-paturbation-adjust-Gamma} and Corollary~\ref{cor:small-paturbation-get-g}.

\section{Regularity-Instances are Locally Testable}\label{sec:regularity-instance->testable}
In this section, we show that the property of satisfying a regularity-instance is locally testable.
Throughout this section,
we fix the proximity parameter $\epsilon$ and the regularity-instance $I = (\gamma,\Gamma,C,d,\bfd,\bfh,r)$ with $r \geq r_{\ref{the:regularity-instance->testable}}(\gamma,\epsilon,C,d)$ for some $r_{\ref{the:regularity-instance->testable}}(\gamma,\epsilon,C,d)$ defined later.

Our $\epsilon$-tester for the property of satisfying $I$ is very simple:
We choose $\delta = \delta(\gamma,\epsilon,C,d)$ small enough and $m = m(\gamma,\epsilon,C,d)$ large enough (these parameters are used throughout this section).
Given a function $f:\Fp^n \to \bit$,
we choose a random affine embedding $A:\Fp^m \to \Fp^n$.
Then, we accept if $f\circ A$ is $\delta$-close to satisfying $I$ and reject if $f \circ A$ is $\delta$-far from satisfying $I$.
Clearly, the number of queries only depends on $\epsilon$ and $I$.

It is easy to show as follows that the tester accepts with high probability when $f$ satisfies $I$.
\begin{lemma}\label{lem:regularity-instance-completeness}
  Suppose $m \geq m_{\ref{lem:regularity-instance-completeness}}(\gamma,\epsilon,C,d)$ and $r \geq r_{\ref{lem:regularity-instance-completeness}}(\gamma,\epsilon,C,d)$.
  Then for any function $f:\Fp^n \to \bit\;(n \geq m)$ satisfying $I$,
  over the choice of an affine embedding $A:\Fp^m \to \Fp^n$,
  $f \circ A$ is $\delta$-close to satisfying $I$ with probability at least $2/3$.
\end{lemma}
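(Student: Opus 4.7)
The plan is to combine three earlier results. Since $f$ satisfies $I$, Definition~\ref{def:satisfy-regularity-instance} gives us a decomposition $f(x) = \Gamma(\bfP(x)) + \Upsilon(x)$ where $\bfP$ has exact degree $\bfd$, exact depth $\bfh$, rank at least $r$, and $\|\Upsilon\|_{U^d} \leq \gamma$. Composing with a random affine embedding $A:\Fp^m \to \Fp^n$ yields
\[
  (f \circ A)(x) = \Gamma\bigl((\bfP \circ A)(x)\bigr) + (\Upsilon \circ A)(x),
\]
so to conclude that $f \circ A$ is $\delta$-close to satisfying $I$, it suffices to show that, with high probability over $A$, the decomposition on the right satisfies the hypotheses of Lemma~\ref{lem:small-perturbation} (with proximity parameter $\delta$ in place of $\epsilon$ there).

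First, set $\tau = \tau_{\ref{lem:small-perturbation}}(\gamma,\delta,d)$ and let $\widetilde{\Gamma} = \Gamma$, so that the condition $\|\Gamma - \widetilde{\Gamma}\|_\infty \leq \tau$ holds trivially. Second, apply Lemma~\ref{lem:bad-affine-embedding} with error $1/6$ to conclude that, provided $m$ and $r$ are large enough, with probability at least $5/6$ the restriction $\bfP \circ A$ preserves the degree $\bfd$, the depth $\bfh$, and has rank at least $r$. Third, apply Lemma~\ref{lem:gowers-norm-after-affine-embedding} to $\Upsilon$ with error $\tau$ (this lemma gives probability at least $99/100$, so in particular at least $5/6$) to conclude that $\|\Upsilon \circ A\|_{U^d} \leq \|\Upsilon\|_{U^d} + \tau \leq \gamma + \tau$ for $m$ large enough.

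By a union bound, both good events occur simultaneously with probability at least $2/3$. In that case, the decomposition of $f \circ A$ exhibited above meets all the hypotheses of Lemma~\ref{lem:small-perturbation} with structure function $\Gamma$, polynomial sequence $\bfP \circ A$, and error function $\Upsilon \circ A$; the rank hypothesis is satisfied because we choose $r_{\ref{lem:regularity-instance-completeness}}(\gamma,\epsilon,C,d) \geq \max\bigl(r_{\ref{lem:small-perturbation}}(\gamma,\delta,C,d),\,r'\bigr)$, where $r'$ is the rank threshold required by Lemma~\ref{lem:bad-affine-embedding}, and $m_{\ref{lem:regularity-instance-completeness}}(\gamma,\epsilon,C,d)$ is chosen to be the maximum of the thresholds from Lemmas~\ref{lem:bad-affine-embedding} and~\ref{lem:gowers-norm-after-affine-embedding} at their respective error levels. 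Lemma~\ref{lem:small-perturbation} then directly gives that $f \circ A$ is $\delta$-close to satisfying $I$, completing the proof.

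Because every step is essentially an invocation of a prior result with parameters set by straightforward bookkeeping, there is no substantive obstacle; the only subtlety is to remember that the Gowers-norm slack introduced by restriction cannot be eliminated deterministically, and this is precisely the gap that Lemma~\ref{lem:small-perturbation} was designed to absorb via a tiny random perturbation of values.
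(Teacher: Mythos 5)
Your proposal is correct and follows essentially the same route as the paper: decompose $f$ via Definition~\ref{def:satisfy-regularity-instance}, use Lemma~\ref{lem:bad-affine-embedding} to preserve degree/depth/rank of $\bfP\circ A$ and Lemma~\ref{lem:gowers-norm-after-affine-embedding} to control $\|\Upsilon\circ A\|_{U^d}$, then union-bound and conclude by Lemma~\ref{lem:small-perturbation} with $\widetilde{\Gamma}=\Gamma$. The only cosmetic differences are your use of the error parameter $1/6$ rather than the paper's $1/100$ in Lemma~\ref{lem:bad-affine-embedding} (both suffice for the union bound), and your mention of a rank threshold $r'$ for Lemma~\ref{lem:bad-affine-embedding}, which that lemma does not actually impose (its threshold is on $m$, not $r$); neither affects correctness.
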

\begin{proof}
  Since $f$ satisfies $I$,
  $f$ can be written as $f(x) = \Gamma(\bfP(x)) + \Upsilon(x)$,
  where $\bfP$ is a polynomial sequence with complexity $C$, $\deg(\bfP) = \bfd$, $\depth(\bfP) = \bfh$, and $\rank(\bfP) \geq r$,
  and $\|\Upsilon\|_{U^d} \leq \gamma$.
  Note that $f(Ax) = \Gamma(\bfQ(x)) + \Upsilon(Ax)$ holds, where $\bfQ = \bfP \circ A$.
  We choose $m_{\ref{lem:regularity-instance-completeness}}(\gamma,\epsilon,C,d)$ as
  \[
    m_{\ref{lem:regularity-instance-completeness}}(\gamma,\epsilon,C,d)
    \geq
    \max\{m_{\ref{lem:bad-affine-embedding}}(1/100,C,d,r), 
    m_{\ref{lem:gowers-norm-after-affine-embedding}}(\tau_{\ref{lem:small-perturbation}}(\gamma,\delta,d), d)
    \}.
  \]
  Then by Lemma~\ref{lem:bad-affine-embedding},
  $\deg(\bfQ) = \bfd$, $\depth(\bfQ) = \bfh$, and $\rank(\bfQ) \geq r$ holds with probability at least $99/100$.
  Also by Lemma~\ref{lem:gowers-norm-after-affine-embedding},
  we have $\|\Upsilon \circ A\|_{U^d} \leq \gamma + \tau_{\ref{lem:small-perturbation}}(\gamma,\delta,d)$ with probability at least $99/100$.
  Hence, with probability at least $2/3$, both of these happen.
  By choosing $r_{\ref{lem:regularity-instance-completeness}}(\gamma,\epsilon,C,d) \geq r_{\ref{lem:small-perturbation}}(\gamma,\delta,C,d)$,
  such a function is indeed $\delta$-close to satisfying the regularity-instance $I$ from Lemma~\ref{lem:small-perturbation}.
\end{proof}
The following lemma handles the case that $f$ is $\epsilon$-far.
Its proof is given in Section~\ref{subsec:regularity-instance-soundness},
\begin{lemma}\label{lem:regularity-instance-soundness}
  Suppose 
  $\delta \leq \delta_{\ref{lem:regularity-instance-soundness}}(\gamma,\epsilon,C,d)$, $m \geq m_{\ref{lem:regularity-instance-soundness}}(\gamma,\epsilon,C,d)$ and $r \geq r_{\ref{lem:regularity-instance-soundness}}(\gamma,\epsilon,C,d)$.
  Then, for any function $f:\Fp^n \to \bit\;(n \geq m)$ that is $\epsilon$-far from satisfying $I$,
  over the choice of an affine embedding $A:\Fp^m \to \Fp^n$,
  $f \circ A$ is $\delta$-far from satisfying $I$ with probability at least $2/3$.
\end{lemma}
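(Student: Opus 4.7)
The plan is to prove Lemma~\ref{lem:regularity-instance-soundness} by contrapositive, following the outline of the proof sketch. Assuming that with probability more than $1/3$ a random affine embedding $A$ yields $f\circ A$ that is $\delta$-close to satisfying $I$, I will construct, on $\Fp^n$, a polynomial sequence $\bfP$ and a pseudo-random remainder that show $f$ is $\epsilon$-close to $I$, contradicting the hypothesis.

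First, I would apply the regularity lemma (Theorem~\ref{the:regularity-lemma}) to $f$ with suitable parameters, obtaining $f=f_1+f_2+f_3$ where $f_1=\Sigma\circ\bfR$ for a polynomial factor $\calB(\bfR)$ of very high rank (to be fixed depending on $\gamma,\epsilon,C,d$), $\|f_2\|_2$ is tiny, and $\|f_3\|_{U^d}$ is tiny. Fix an affine embedding $A$ in the ``good'' event. Since $f\circ A$ is $\delta$-close to $I$, we may write $f(Ax) = \Gamma(\bfP'(x))+\Upsilon'(x)+\Delta'(x)$ with $\deg(\bfP')=\bfd$, $\depth(\bfP')=\bfh$, $\rank(\bfP')\geq r$, $\|\Upsilon'\|_{U^d}\leq\gamma$, and $\|\Delta'\|_1\leq\delta$. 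On the other hand, $f(Ax)=\Sigma(\bfR'(x))+f_2(Ax)+f_3(Ax)$ with $\bfR'=\bfR\circ A$. Using Lemma~\ref{lem:bad-affine-embedding} and Lemma~\ref{lem:gowers-norm-after-affine-embedding}, we can also assume $\bfR'$ inherits the high rank of $\bfR$ and $\|f_2\circ A\|_2$, $\|f_3\circ A\|_{U^d}$ remain small (the latter by Lemma~\ref{lem:bound-gowers-norm-by-l1-norm} for $f_2\circ A$ after using small $L_2$ to bound $L_1$).

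Next I would invoke the polynomial regularity lemma (Lemma~\ref{lem:polynomial-regularity-lemma}) in its syntactic refinement form to produce a polynomial sequence $\bfS'$ of degree less than $d$ so that $\overline{\bfR'}:=\bfR'\cup\bfS'$ syntactically refines $\bfR'$ and semantically refines $\calB(\bfP')$, and is itself of very high rank. Because $\overline{\bfR'}$ refines $\calB(\bfP')$, there exist functions $\Gamma_1,\ldots,\Gamma_C$ with $P'_i=\Gamma_i\circ\overline{\bfR'}$. Picking any left inverse $A^+$ of $A$, define $\overline{\bfR}=\bfR\cup(\bfS'\circ A^+)$ on $\Fp^n$, and set $P_i=\Gamma_i\circ\overline{\bfR}$, $\bfP=(P_1,\ldots,P_C)$. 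The consistency argument in the sketch, based on Lemma~\ref{lem:size-of-atoms} applied to the high-rank factor $\overline{\bfR'}$, shows that the functional identity relating $\Sigma$, $\Gamma$, and the $\Gamma_i$ that holds on the range of $\overline{\bfR'}$ is identical on every consistent atom; lifting through $A^+$ therefore gives $\Sigma(\bfR(x))+\Phi(\overline{\bfR}(x))=\Gamma(\bfP(x))$ on $\Fp^n$, where $\Phi$ encodes the discrepancy. The high rank of $\bfP$ (inherited by pulling back via an affine transformation from $\bfP'$, which has rank $\geq r$) combined with Lemma~\ref{lem:bad-affine-embedding} applied in the reverse direction then forces $\deg(P_i)=d_i$ and $\depth(P_i)=h_i$ exactly, matching $I$.

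It remains to bound $\|f-\Gamma\circ\bfP\|_{U^d}=\|f_2+f_3-\Phi\circ\overline{\bfR}\|_{U^d}$. The terms $f_2$ and $f_3$ contribute $o(\gamma)$ directly (using Lemma~\ref{lem:bound-gowers-norm-by-l1-norm} on $f_2$). For $\Phi\circ\overline{\bfR}$, I would first bound $\|\Phi\circ\overline{\bfR'}\|_{U^d}$ on $\Fp^m$: since $\Phi\circ\overline{\bfR'}=f_2\circ A+f_3\circ A-\Upsilon'-\Delta'$, the four terms contribute at most $o(\gamma)+o(\gamma)+\gamma+\delta^{1/2^d}$ (using Lemma~\ref{lem:bound-gowers-norm-by-l1-norm} for $\Delta'$). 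Then I would compare $\|\Phi\circ\overline{\bfR}\|_{U^d}$ to $\|\Phi\circ\overline{\bfR'}\|_{U^d}$ using Lemma~\ref{lem:Gamma-decides-gowers-norm}, since $\overline{\bfR}$ and $\overline{\bfR'}$ agree in degree and depth and are both of high rank. Choosing $\delta=\delta_{\ref{lem:regularity-instance-soundness}}$ and the regularity parameters sufficiently small, we get $\|f-\Gamma\circ\bfP\|_{U^d}\leq\gamma+\tau_{\ref{lem:small-perturbation}}(\gamma,\epsilon,d)$, and Lemma~\ref{lem:small-perturbation} then gives that $f$ is $\epsilon$-close to satisfying $I$, the desired contradiction.

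The main obstacle I expect is the bookkeeping needed to ensure that all the rank parameters (for $\bfR$, $\bfR'$, $\overline{\bfR'}$, $\overline{\bfR}$, $\bfP'$, $\bfP$) are simultaneously large enough to invoke Lemmas~\ref{lem:size-of-atoms}, \ref{lem:Gamma-decides-gowers-norm}, \ref{lem:bad-affine-embedding}, and \ref{lem:small-perturbation}, and in particular the careful choice of parameter functions fed into the regularity lemma so that $\bfR$ retains enough rank after the $\bfS'$-augmentation and after pulling $\bfS'\circ A^+$ back to $\Fp^n$. The other delicate step is verifying that $\bfP$ truly inherits exact degrees and depths $\bfd,\bfh$ rather than only upper bounds, which is precisely where the high-rank hypothesis on $\bfP'$ plus the rank-monotonicity under affine pullback comes in.
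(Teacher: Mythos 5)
Your proposal follows the same overall architecture as the paper's proof: contrapositive assumption, regularity decomposition of $f$ on $\Fp^n$, preservation of degree/depth/rank and Gowers norms under the embedding $A$, syntactic refinement $\overline{\bfR'} = \bfR' \cup \bfS'$, the consistency identity via Lemma~\ref{lem:size-of-atoms}, construction of $\overline{\bfR}$ and $\bfP$ via $A^+$, bounding $\|\Phi\circ\overline{\bfR'}\|_{U^d}$ and then $\|\Phi\circ\overline{\bfR}\|_{U^d}$ via Lemma~\ref{lem:Gamma-decides-gowers-norm}, and invoking Lemma~\ref{lem:small-perturbation}. So far so good.

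However, there is a genuine gap in your justification of the claim that $\deg(P_i)=d_i$ and $\depth(P_i)=h_i$ \emph{exactly}. You write that this follows from the high rank of $\bfP$ together with ``Lemma~\ref{lem:bad-affine-embedding} applied in the reverse direction.'' That is not a valid step: Lemma~\ref{lem:bad-affine-embedding} is a probabilistic statement about a \emph{random} affine embedding preserving degree/depth/rank, and it has no ``reverse direction'' applicable to a fixed $A$ together with a polynomial sequence $\bfP$ that was constructed \emph{from} $A$ and $\overline{\bfR}$. The affine-monotonicity argument (rank and degree and depth do not increase under $\circ\,A$) only gives you the lower bounds $\deg(\bfP)\geq\bfd$, $\depth(\bfP)\geq\bfh$, $\rank(\bfP)\geq r$; what is missing is the matching \emph{upper} bound. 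The paper gets this from a separate tool, Lemma~\ref{lem:degree-non-increase} (Theorem 4.1 of Bhattacharyya et al.): since $\overline{\bfR}$ and $\overline{\bfR'}$ have the same degree and depth and $\overline{\bfR'}$ has high rank, the degree of the composition $\Gamma_i\circ\overline{\bfR}$ is at most that of $\Gamma_i\circ\overline{\bfR'} = P'_i$, giving $\deg(P_i)\leq d_i$. The depth bound $\depth(P_i)\leq h_i$ comes from yet another observation, namely that the range of $\Gamma_i$ lies in $\bbU_{h_i+1}$. Without Lemma~\ref{lem:degree-non-increase} (or an equivalent), this step does not go through, and it is precisely the place where the high-rank hypothesis earns its keep, so the gap is substantive.
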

Now we establish Theorem~\ref{the:regularity-instance->testable} by choosing $\delta \leq \delta_{\ref{lem:regularity-instance-soundness}}(\gamma,\epsilon,C,d)$, $m \geq \max\{m_{\ref{lem:regularity-instance-completeness}}(\gamma,\epsilon,C,d),m_{\ref{lem:regularity-instance-soundness}}(\gamma,\epsilon,C,d)\}$, and $r \geq \max\{r_{\ref{lem:regularity-instance-completeness}}(\gamma,\epsilon,C,d), r_{\ref{lem:regularity-instance-soundness}}(\gamma,\epsilon,C,d)\}$.

\subsection{Proof of Lemma~\ref{lem:regularity-instance-soundness}}\label{subsec:regularity-instance-soundness}
In this section, we prove Lemma~\ref{lem:regularity-instance-soundness}.
Suppose for contradiction that,
with probability more than $1/3$, $f \circ A$ is $\delta$-close to satisfying a regularity-instance $I$, that is, $f(Ax) = \Gamma(\bfP'(x)) + \Upsilon'(x) + \Delta'(x)$ for some polynomial sequence $\bfP' = (P_1,\ldots,P_C)$ on $m$ variables with $\deg(\bfP') = \bfd$, $\depth(\bfP') = \bfh$, and $\rank(\bfP') \geq r$, a function $\Upsilon':\Fp^m \to [-1,1]$ with $\|\Upsilon'\|_{U^d} \leq \gamma$, and a function $\Delta':\Fp^m \to \bit$ with $\|\Delta'\|_1 \leq \delta$.
We note that the range of $f \circ A - \Delta'$ is $\bit$.
Let $\rho = \rho(\gamma,\epsilon, d)$ be a parameter that will be determined later.
We set parameters $\eta_\bfR:\bbN \to \bbR$, $\zeta_\bfR \in \bbR^+$ and $r_\bfR:\bbN \to \bbN$ so that $\eta_\bfR(D) \leq \rho/2$ for any $D \in \bbN$,
$\zeta_\bfR = \rho^{2^d}/2$, 
and $r_\bfR(D) > r_{\overline{\bfR'}}(D) + D$ for any $D \in \bbN$,
where $r_{\overline{\bfR'}}:\bbN \to \bbN$ is a function defined later ($r_{\overline{\bfR'}}$ will depend only on $d$ and $p$).
We apply Theorem~\ref{the:regularity-lemma} to $f$ with parameters $\eta_\bfR$, $\zeta_\bfR$, $0$ (in place of $C_0$), $d$, and $r_\bfR$.
Then, the function $f$ can be decomposed as follows.
\begin{claim}\label{cla:properties-of-R}
  The function $f$ is decomposed as $f = f_1 + f_2 + f_3$ with the following properties.
  \begin{itemize}
  \item $f_1 = \Sigma \circ \bfR$ for some function $\Sigma: \bbT^{|\bfR|} \to [0,1]$ and some polynomial sequence $\bfR = (R_1,\ldots,R_{|\bfR|})$ on $n$ variables with size at most $C_{\ref{the:regularity-lemma}}(\eta_\bfR,\zeta_\bfR,0,d,r_\bfR)$, degree less than $d$, and rank at least $r_{\bfR}(|\bfR|)$,
  \item $\|f_2\|_{U^d} \leq \rho$, and
  \item $\|f_3\|_{U^d} \leq \rho$.
  \end{itemize}
\end{claim}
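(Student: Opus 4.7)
The plan is to directly invoke the Regularity Lemma (Theorem~\ref{the:regularity-lemma}) on $f$, with the parameters just fixed before the claim: the function $\eta_\bfR$, the constant $\zeta_\bfR$, the empty factor of complexity $C_0 = 0$, degree parameter $d - 1$, and the rank function $r_\bfR$. Using $d-1$ as the degree parameter is exactly what makes the output polynomials have degree strictly less than $d$ and places the Gowers-norm bullet at level $d$ rather than $d+1$. The semantic-refinement hypothesis is vacuous because $\calB_0$ is the empty factor, so no syntactic care is needed.

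Next I would unpack the lemma's conclusion. It produces a decomposition $f = f_1 + f_2 + f_3$ together with an $r_\bfR$-regular polynomial factor $\calB_1$ of degree at most $d-1$, complexity bounded by $C_{\ref{the:regularity-lemma}}(\eta_\bfR, \zeta_\bfR, 0, d-1, r_\bfR)$, and rank at least $r_\bfR(|\calB_1|)$, with $f_1 = \E[f \mid \calB_1]$, $\|f_2\|_2 < \zeta_\bfR$, and $\|f_3\|_{U^d} < \eta_\bfR(|\calB_1|) \leq \rho/2$. Writing $\bfR = (R_1,\ldots,R_{|\bfR|})$ for the polynomial sequence that defines $\calB_1$, the $\calB_1$-measurability of $f_1$ lets me set $\Sigma(b) = \E[f(y) \mid \bfR(y) = b]$ for each $b$ in the image of $\bfR$ (and arbitrarily elsewhere) to obtain $\Sigma : \bbT^{|\bfR|} \to [0,1]$ with $f_1 = \Sigma \circ \bfR$. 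The size, degree, and rank requirements on $\bfR$ in the first bullet then follow directly from the corresponding properties of $\calB_1$.

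The two Gowers-norm bullets are then easy. The bound $\|f_3\|_{U^d} \leq \rho$ is immediate from $\|f_3\|_{U^d} < \eta_\bfR(|\calB_1|) \leq \rho/2$. For $f_2$, I would invoke Lemma~\ref{lem:bound-gowers-norm-by-l1-norm}: since $f_2$ has range $[-1,1]$, we have $\|f_2\|_1 \leq \|f_2\|_2 < \zeta_\bfR = \rho^{2^d}/2$, and therefore
\[
  \|f_2\|_{U^d} \leq \|f_2\|_1^{1/2^d} < \bigl(\rho^{2^d}/2\bigr)^{1/2^d} \leq \rho.
\]

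This is essentially direct parameter-chasing through the Regularity Lemma, so there is no substantive obstacle. The whole argument has been engineered into the choices of $\eta_\bfR$ and $\zeta_\bfR$, which were crafted precisely so that the two Gowers-norm bounds land at $\rho$ — directly in the case of $f_3$, and after the $L_2$-to-Gowers conversion via Lemma~\ref{lem:bound-gowers-norm-by-l1-norm} in the case of $f_2$.
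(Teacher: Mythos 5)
Your proof is correct and follows the same approach as the paper: invoke the Regularity Lemma (Theorem~\ref{the:regularity-lemma}) with the parameters $\eta_\bfR$, $\zeta_\bfR$, $r_\bfR$ fixed just before the claim, then convert the $L_2$ bound on $f_2$ to a Gowers-norm bound via Lemma~\ref{lem:bound-gowers-norm-by-l1-norm} (using $\|f_2\|_1 \leq \|f_2\|_2$ along the way). The one thing you do differently is invoke the Regularity Lemma with degree parameter $d-1$ rather than $d$ as written in the paper's text; this is in fact the right call, since Theorem~\ref{the:regularity-lemma} with degree parameter $d$ would yield a factor of degree at most $d$ (not strictly less than $d$) and a bound on $\|f_3\|_{U^{d+1}}$ (not $\|f_3\|_{U^d}$), so the paper has a small off-by-one here that you have quietly repaired --- the only consequence being that the complexity bound should then read $C_{\ref{the:regularity-lemma}}(\eta_\bfR,\zeta_\bfR,0,d-1,r_\bfR)$ rather than with $d$, which is harmless.
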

\begin{proof}
  The first and the third properties are direct consequences of Theorem~\ref{the:regularity-lemma}.
  The second property also holds as $\|f_2\|_{U^d} \leq \|f_2\|_2^{1/2^d} \leq \zeta_\bfR^{1/2^d} \leq \rho$ from Lemma~\ref{lem:bound-gowers-norm-by-l1-norm}.
\end{proof}

Define $\bfR' = (R'_1,\ldots,R'_{|\bfR|})$ by $R'_i = R_i \circ A$ for each $i \in [|\bfR|]$.
It is shown in Claim~4.1 of~\cite{Hatami:2013ux} that many properties of $f_1$, $f_2$, and $f_3$ are preserved in $f_1 \circ A$, $f_2\circ A$, and $f_3 \circ A$, respectively.
In our scenario, we have the following.
\begin{claim}\label{cla:good-case}
%  There exist $m_{\ref{cla:good-case}}(d,r_{\bfR},\eta_\bfR)$ and $r_{\ref{cla:good-case}}(\zeta_\bfR,C,d,p)$ with the following property.
  Suppose $m_{\ref{lem:regularity-instance-soundness}}(\gamma,\epsilon,C,d) \geq m_{\ref{cla:good-case}}(\eta_\bfR,d,r_{\bfR})$.
  Then the following events hold with probability at least $99/100$.
  \begin{itemize}
  \item $\deg(\bfR') = \deg(\bfR)$, $\depth(\bfR') = \depth(\bfR)$, and $\calB(\bfR')$ is $r_\bfR$-regular.
  \item We have $\|f_2 \circ A\|_2 \leq 2\zeta_\bfR$ and $\|f_3 \circ A\|_{U^d} \leq 2\eta_\bfR(|\bfR'|)$.
  \end{itemize}
\end{claim}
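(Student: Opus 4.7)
The plan is to verify each of the three asserted events separately using the preservation lemmas already established in the preliminaries, then take a union bound. I will choose $m_{\ref{cla:good-case}}(\eta_\bfR, d, r_\bfR)$ to be the maximum of several thresholds arising from the individual bounds, and I will choose a small failure probability for each event so that they sum to at most $1/100$. Recall from Claim~\ref{cla:properties-of-R} that $\bfR$ has size at most a constant $|\bfR| \leq C_{\ref{the:regularity-lemma}}(\eta_\bfR,\zeta_\bfR,0,d,r_\bfR)$, degree less than $d$, rank at least $r_\bfR(|\bfR|)$, and that $\|f_2\|_2 < \zeta_\bfR$ and $\|f_3\|_{U^d} < \eta_\bfR(|\bfR|)$.

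The degree, depth, and rank preservation for $\bfR'$ follows directly from Lemma~\ref{lem:bad-affine-embedding} applied to $\bfR$ with a sufficiently small error parameter $\epsilon_1$. For $m \geq m_{\ref{lem:bad-affine-embedding}}(\epsilon_1, |\bfR|, d, r_\bfR(|\bfR|))$, I obtain with probability at least $1 - \epsilon_1$ the conclusions $\deg(\bfR') = \deg(\bfR)$, $\depth(\bfR') = \depth(\bfR)$, and $\rank(\bfR') \geq r_\bfR(|\bfR|)$. Since $|\bfR'| = |\bfR|$, the rank bound is precisely $r_\bfR$-regularity of $\calB(\bfR')$. Likewise, the Gowers norm bound $\|f_3 \circ A\|_{U^d} \leq 2\eta_\bfR(|\bfR'|)$ follows from Lemma~\ref{lem:gowers-norm-after-affine-embedding} applied to $f_3$ with error parameter $\eta_\bfR(|\bfR|)$, provided $m \geq m_{\ref{lem:gowers-norm-after-affine-embedding}}(\eta_\bfR(|\bfR|), d)$.

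The remaining $L_2$ bound for $f_2$ I will establish by a short second-moment calculation, which is the only piece not supplied directly by an off-the-shelf lemma. A random affine embedding $A : \Fp^m \to \Fp^n$ maps a uniform random $x \in \Fp^m$ to a point that is uniform over $\Fp^n$ whenever $x \neq 0$, so the only deviation from uniformity is a contribution of size $O(p^{-m})$ from $x = 0$. Consequently,
\[
\E_A \|f_2 \circ A\|_2^2 = \E_A \E_{x \in \Fp^m}[f_2(Ax)^2] = \|f_2\|_2^2 + o_m(1) \leq \zeta_\bfR^2 + o_m(1),
\]
and Markov's inequality yields $\|f_2 \circ A\|_2^2 \leq 4\zeta_\bfR^2$, hence $\|f_2 \circ A\|_2 \leq 2\zeta_\bfR$, with probability at least $1 - \epsilon_3$ for $m$ large enough. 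Applying the union bound over the three events (with $\epsilon_1, \epsilon_3$ chosen together with the $1/100$ failure probability of the Gowers bound so that they sum to at most $1/100$) completes the proof. I do not anticipate any real obstacle here; the argument is essentially a bookkeeping exercise on top of the cited lemmas, and the only genuinely new step is the routine $L_2$ averaging for $f_2$.
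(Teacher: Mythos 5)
The paper gives no explicit proof of Claim~\ref{cla:good-case}; it points the reader to Claim~4.1 of~\cite{Hatami:2013ux}, which is also the source of Lemmas~\ref{lem:bad-affine-embedding} and~\ref{lem:gowers-norm-after-affine-embedding}, so your plan of assembling the claim from those two lemmas plus an averaging argument is reasonable and in the same spirit. The degree/depth/rank part via Lemma~\ref{lem:bad-affine-embedding} (combined with the observation that affine composition cannot increase degree or depth, turning the one-sided bound into equalities) and the $f_3$ part via Lemma~\ref{lem:gowers-norm-after-affine-embedding} are both fine.

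However, your treatment of the $L_2$ bound for $f_2$ has a genuine error. You compute only the first moment, $\E_A\E_{x}\,f_2(Ax)^2$, which (because $Ax = Lx + c$ is exactly uniform for every fixed $x$) is exactly $\E_y f_2(y)^2$, and then apply Markov. But Markov applied to a first-moment bound of this type can at best give failure probability of about $1/2$ (respectively $1/4$ if you square the norm), not something like $1/300$ as needed for the union bound to reach $99/100$. Note also the paper's nonstandard convention $\|f\|_2 = \E_x|f(x)|^2$ without the square root, which changes the constants. What you actually need is a second-moment estimate: expanding $\E_A\bigl(\E_x f_2(Ax)^2\bigr)^2 = \E_A\E_{x,x'} f_2(Ax)^2 f_2(Ax')^2$, the diagonal $x = x'$ contributes $O(p^{-m})$ and the off-diagonal is close to $\bigl(\E_y f_2(y)^2\bigr)^2$ since $(Ax, Ax')$ is nearly a uniformly random pair of distinct points; this gives variance $O(p^{-m})$ and Chebyshev then does drive the failure probability to $o_m(1)$. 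That is the step you labeled ``routine,'' and it is routine, but not via Markov.

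A second, smaller issue: Lemma~\ref{lem:gowers-norm-after-affine-embedding} as stated gives a fixed $1/100$ failure probability, so a union bound of that $1/100$ with two further nonzero failure probabilities $\epsilon_1, \epsilon_3$ cannot land at or below $1/100$ total. You should either observe that the $99/100$ constant in that lemma is not sharp and can be boosted by taking $m$ larger, or simply note (as the paper effectively does in how it uses the claim, where the derived bound $1/3 - 1/100$ leaves room) that the precise constant $99/100$ in the claim statement is cosmetic and any bound bounded away from $0$ and tending to $1$ with $m$ suffices.
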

In what follows, we assume that $m_{\ref{lem:regularity-instance-soundness}}(\gamma,\epsilon,C,d) \geq m_{\ref{cla:good-case}}(\eta_\bfR,d,r_{\bfR})$ and the consequence of Claim~\ref{cla:good-case} actually holds (We have such a situation with probability at least $1/3-1/100$).
Hence we have $\|f_2 \circ A\|_{U^d} \leq (2\zeta_\bfR)^{1/2^d} \leq \rho$ and $\|f_3 \circ A\|_{U^d} \leq 2\eta_\bfR(|\bfR|) \leq \rho$.

% \begin{claim}\label{cla:Af2-andAf3-small}
%   Assume that $m$ and $r_\bfR$ satisfy the condition of Claim~\ref{cla:good-case}.
%   Then with probability $99/100$ over the choice of $A$,
%   \begin{align*}
%     \|f_2 \circ A+ f_3 \circ A \|_{U^d} \leq 4\rho.
%   \end{align*}
% \end{claim}
% \begin{proof}
%   \[
%     \|f_2 + f_3\|_{U^d} \leq \|f_2\|_{U^d} + \|f_3\|_{U^d}
%     \leq
%     \|f_2\|_2^{1/2^d} + \|f_3\|^{U^d} \leq 
%     2\zeta_\bfR^{1/2^d} + 2\eta_\bfR(|\bfR|) 
%     \leq
%     2\rho + 2\rho \leq 4\rho.
%     \qedhere
%   \]
% \end{proof}

Now $f(Ax)$ can be expressed in the following two ways.
\begin{align*}
  f(Ax) = \Sigma(\bfR'(x)) + f_2(Ax) + f_3(Ax) = \Gamma(\bfP'(x)) + \Upsilon'(x) + \Delta'(x).
\end{align*}
We further refine the factor $\calB(\bfR' \cup \bfP')$.
We set $r_{\overline{\bfR'}}:\bbN \to \bbN$ so that $r_{\overline{\bfR'}}(D) \geq r_{\ref{lem:size-of-atoms}}(1/(2p^{dD}),d)$ for any $D \in \bbN$.
Now we apply Lemma~\ref{lem:polynomial-regularity-lemma} to find an $r_{\overline{\bfR'}}$-regular refinement of the factor $\calB(\bfP' \cup \bfR')$.
Since $\calB(\bfR')$ is $r_\bfR$-regular with $r_\bfR(D) \geq r_{\overline{\bfR'}}(D) + D$ for any $D \in \bbN$, we obtain an extension $\overline{\bfR'} = \bfR' \cup \bfS' = (\overline{R'_i})_{i \in [|\overline{\bfR'}|]}$ of $\bfR'$ for some polynomial sequence $\bfS'$ of degree less than $d$.

Since $\calB(\overline{\bfR'})$ is a refinement of $\calB(\bfP' \cup \bfR')$,
for each $i \in [C]$,
there exists some function $\Gamma_i:\prod_{i = 1}^{|\overline{\bfR'}|}\bbU_{\depth(\overline{R'_i})+1} \to \bbU_{h_i+1}$ such that $P'_i = \Gamma_i(\overline{\bfR'})$.
Hence,
\[
  \Sigma(\bfR'(x)) + f_2(Ax) + f_3(Ax) = \Gamma(\Gamma_1(\overline{\bfR'}(x)), \ldots,\Gamma_{C}(\overline{\bfR'}(x)) + \Upsilon'(x) + \Delta'(x).
\]

Since $\bfR'$ is a subsequence of $\overline{\bfR'}$, $f_2(Ax) + f_3(Ax) - \Upsilon'(x) - \Delta'(x)$ is measurable with respect to the factor $\calB(\overline{\bfR'})$.
Thus, we can write $f_2(Ax) + f_3(Ax) - \Upsilon'(x) - \Delta'(x) = \Phi(\overline{\bfR'}(x))$ for some function $\Phi:\prod_{i = 1}^{|\overline{\bfR'}|}\bbU_{\depth(\overline{R'_i})+1} \to [-1,1]$.
The range of $\Phi$ is $[-1,1]$ since the ranges of $\Sigma$ and $\Gamma$ are $[0,1]$.
Now we have
\[
  \Sigma(\bfR'(x)) + \Phi(\overline{\bfR'}(x)) = \Gamma(\Gamma_1(\overline{\bfR'}(x)), \ldots,\Gamma_{C}(\overline{\bfR'}(x))).
\]

Since $r_{\overline{\bfR'}}(|\calB(\overline{\bfR'})|) \geq r_{\ref{lem:size-of-atoms}}(1/(2p^{d|\calB(\overline{\bfR'})|}),d) \geq r_{\ref{lem:size-of-atoms}}(1/(2\|\calB(\overline{\bfR'})\|),d)$,
by Lemma~\ref{lem:size-of-atoms},
the tuple $\overline{\bfR'}(x)$ acquires every value in its range.
Thus for all $\bfa \in \prod_{i = 1}^{|\bfR'|}\bbU_{\depth(R'_i)+1}$ and $\bfb \in \prod_{i = 1}^{|\bfS'|}\bbU_{\depth(S'_i)+1}$,
we have the identity
\[
  \Sigma(\bfa) + \Phi(\bfa,\bfb) = \Gamma(\Gamma_1(\bfa,\bfb), \ldots,\Gamma_{C}(\bfa,\bfb)).
\]

Let $A^+:\Fp^n \to \Fp^m$ be any affine transformation with $A^+ A = I_m$.
We define a polynomial sequence $\bfS = (S_1,\ldots,S_{|\bfS'|})$ on $n$ variables by setting $S_i(x) = S'_i \circ A^+$ for each $i \in [|\bfS'|]$.
%Note that $S_i:\Fp^n \to \bbU_{\depth(S'_i)+1}$.
We set $\overline{\bfR} = \bfR \cup \bfS$.
We define a polynomial sequence $\bfP = (P_1,\ldots,P_C)$ on $n$ variables by setting $P_i(x) = \Gamma_i(\overline{\bfR}(x))$ for each $i \in [C]$.
%Note that $S_i, P_i:\Fp^n \to \bbU_{h_i+1}$.
Note that $P'_i = P_i \circ A$ for each $i \in [C]$.
We have
\begin{align*}
  \Sigma(\bfR(x)) + \Phi(\overline{\bfR}(x)) = \Gamma(\bfP(x)).
\end{align*}

Most properties of $\overline{\bfR'}$ are preserved in $\overline{\bfR}$ as shown in the following claim.
\begin{claim}\label{cla:property-of-overline-bfR}
  We have $\deg(\overline{\bfR}) = \deg(\overline{\bfR'})$,
  $\depth(\overline{\bfR}) = \depth(\overline{\bfR'})$,
  and $\rank(\overline{\bfR}) \geq \rank(\overline{\bfR'})$.
\end{claim}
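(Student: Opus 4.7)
The plan is to first establish the pointwise identity $\overline{R'_i} = \overline{R_i} \circ A$ for every $i \in [|\overline{\bfR}|]$, and then deduce each of the three conclusions from standard behavior of non-classical polynomials under affine composition.

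For the identity, the $\bfR$-components satisfy $R'_i = R_i \circ A$ by definition of $\bfR'$, while the $\bfS$-components satisfy $S_i \circ A = (S'_i \circ A^+) \circ A = S'_i \circ (A^+ A) = S'_i$ by the choice $A^+ A = I_m$. Thus $\overline{\bfR'} = \overline{\bfR} \circ A$ componentwise.

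For degrees and depths, I rely on the standard fact that composing a non-classical polynomial with an affine map can only weakly decrease its degree and depth. Applied to each component of $\overline{\bfR}$, this yields $\deg(\overline{R'_i}) \leq \deg(\overline{R_i})$ and $\depth(\overline{R'_i}) \leq \depth(\overline{R_i})$. For the reverse inequalities I split into the two blocks. On the $\bfR$-block, Claim~\ref{cla:good-case} directly asserts $\deg(\bfR') = \deg(\bfR)$ and $\depth(\bfR') = \depth(\bfR)$. On the $\bfS$-block, $S_i = S'_i \circ A^+$ gives $\deg(S_i) \leq \deg(S'_i)$ and $\depth(S_i) \leq \depth(S'_i)$; combined with $S_i \circ A = S'_i$ and the decrease-under-composition fact applied the other way, I obtain equality in both directions.

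For the rank bound, I argue by transferring witnesses. Suppose $(\lambda_1,\ldots,\lambda_{|\overline{\bfR}|})$ witnesses $\rank(\overline{\bfR}) \leq r$, so that $(\lambda_i \bmod p^{h_i+1})$ is not identically zero (where $h_i$ is the depth of $\overline{R_i}$, which equals that of $\overline{R'_i}$ by the previous step), and $Q = \sum_i \lambda_i \overline{R_i}$ can be written as $\Gamma(Q_1,\ldots,Q_r)$ with $\deg Q_j < d^* := \max_i \deg(\lambda_i \overline{R_i})$. Composing with $A$ yields $\sum_i \lambda_i \overline{R'_i} = \Gamma(Q_1 \circ A,\ldots,Q_r \circ A)$ with $\deg(Q_j \circ A) < d^*$. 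Since the degree of $\lambda P$ depends only on $\lambda$, $\deg(P)$, and $\depth(P)$, the previously established agreement of degrees and depths between $\overline{R_i}$ and $\overline{R'_i}$ guarantees $\max_i \deg(\lambda_i \overline{R'_i}) = d^*$, and the nonzero condition on the coefficients transfers verbatim. Hence $\rank_{d^*}\bigl(\sum_i \lambda_i \overline{R'_i}\bigr) \leq r$, so $\rank(\overline{\bfR'}) \leq r$.

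The main technical obstacle is the bookkeeping in the rank step: the rank of a polynomial factor is measured with respect to $d^* = \max_i \deg(\lambda_i P_i)$ rather than $\deg(\sum_i \lambda_i P_i)$, so one must verify that this individual-term max-degree, as well as the validity of the nonzero-modulo-$p^{h_i+1}$ condition, is preserved under the composition with $A$. Both rely on the componentwise degree and depth equalities from the previous paragraph; once those are in hand, the rank transfer is immediate.
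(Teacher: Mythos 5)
Your proposal follows the same route as the paper: first identify $\overline{\bfR'} = \overline{\bfR}\circ A$ componentwise, then argue degrees and depths by splitting into the $\bfR$-block (Claim~\ref{cla:good-case}) and the $\bfS$-block (using both $\bfS' = \bfS\circ A$ and $\bfS = \bfS'\circ A^+$), and finally transfer a low-rank witness through $A$. The paper compresses the last step into the one-line assertion that ``affine transformation does not increase rank,'' whereas you correctly point out that the rank of a factor is measured against $d^* = \max_i \deg(\lambda_i P_i)$ and against the nonzero-modulo-$p^{h_i+1}$ condition, and that both must be checked to transfer under $A$. That bookkeeping concern is genuine and the paper does gloss over it.

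However, the specific fact you invoke to discharge that concern --- that ``$\deg(\lambda P)$ depends only on $\lambda$, $\deg(P)$, and $\depth(P)$'' --- is false for non-classical polynomials. For $p \geq 3$, take $P_1 = |x_1|^{p-1}|x_2|^{p-1}/p + |x_3|/p^2$ and $P_2 = |x_4|^{p-1}/p^2$; both have degree $2p-2$ and depth $1$, yet $\deg(pP_1) = 1$ while $\deg(pP_2) = p-1$. So matching degree and depth of $\overline{R_i}$ and $\overline{R'_i}$ alone does not force $\deg(\lambda_i \overline{R'_i}) = \deg(\lambda_i \overline{R_i})$. The repair is to argue componentwise exactly as you did for degrees and depths: on the $\bfS$-block, apply the two-sided compositions to $\lambda S_j$ and $\lambda S'_j$ directly, giving $\deg(\lambda S'_j) \leq \deg(\lambda S_j) \leq \deg(\lambda S'_j)$; on the $\bfR$-block there is no two-sided composition (since $A A^+ \neq I_n$), and one must instead rely on what the random affine embedding preserves --- i.e.\ on the strengthened guarantee underlying Claim~\ref{cla:good-case} and Lemma~\ref{lem:bad-affine-embedding} from~\cite{Hatami:2013ux} --- rather than on a purely structural claim about degree and depth.
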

\begin{proof}
  We have $\deg(\bfR) = \deg(\bfR')$ and $\depth(\bfR) = \deg(\bfR')$ from Claim~\ref{cla:good-case}.
  Also, we have $\deg(\bfS) = \deg(\bfS')$ and $\depth(\bfS) = \depth(\bfS')$ since $\bfS' = \bfS \circ A$ and $\bfS = \bfS' \circ A^+$ and affine transformation does not increase degree and depth.
  Hence, $\deg(\overline{\bfR}) = \deg(\overline{\bfR'})$ and $\depth(\overline{\bfR}) = \depth(\overline{\bfR'})$ hold.
  Since $\overline{\bfR'} = \overline{\bfR} \circ A$ and affine transformation does not increase rank,
  $\rank(\overline{\bfR}) \geq \rank(\overline{\bfR'})$.
\end{proof}

The following lemma is useful to analyze the property of $\bfP$.
In the following lemma, symbols $\bfP$ and $\Gamma$ are nothing to do with those in the current context.
\begin{lemma}[Theorem~4.1 of~\cite{Bhattacharyya:2013ii}]\label{lem:degree-non-increase}
  For an integer $d > 0$, let $\bfP = (P_1,\ldots,P_C)$ be a polynomial sequence of degree at most $d$ and rank at least $r_{\ref{lem:degree-non-increase}}(d)$, and let $\Gamma: \bbT^C \to \bbT$ be a function.
  Then, for every polynomial sequence $\bfQ = (Q_1,\ldots,Q_C)$ with $\deg(Q_i) \leq \deg(P_i)$ and $\depth(Q_i) \leq \depth(P_i)$ for all $i \in [C]$, 
  it holds that $\deg(\Gamma \circ \bfQ) \leq \deg(\Gamma \circ \bfP)$.
\end{lemma}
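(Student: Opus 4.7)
The plan is to translate the degree bound on $\Gamma \circ \bfP$ into a functional identity satisfied by $\Gamma$ on a set of ``consistent'' atom tuples, and then check that the values $\bfQ$ takes at appropriate linear forms are always consistent in that same sense, which forces the same identity for $\Gamma \circ \bfQ$.

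Let $D = \deg(\Gamma \circ \bfP)$. By the definition of a non-classical polynomial, showing $\deg(\Gamma \circ \bfQ) \leq D$ amounts to proving that $D_{y_1} \cdots D_{y_{D+1}}(\Gamma \circ \bfQ)(x) = 0$ for all $x, y_1, \ldots, y_{D+1} \in \Fp^n$. Expanding the iterated additive derivative as a signed sum, and setting $L_I(x,y_1,\ldots,y_{D+1}) = x + \sum_{i \in I} y_i$, this is equivalent to the identity
\[
\sum_{I \subseteq [D+1]} (-1)^{D+1-|I|}\, \Gamma\bigl(\bfQ(L_I(x,y_1,\ldots,y_{D+1}))\bigr) = 0.
\]
The hypothesis $\deg(\Gamma \circ \bfP) \leq D$ gives exactly the same identity with $\bfP$ in place of $\bfQ$ at every input, so the task is to transfer that identity from $\bfP$ to $\bfQ$.

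The first step is to turn the $\bfP$-identity into a pointwise identity on $\Gamma$ restricted to consistent atoms. Set $\bfL = (L_I)_{I \subseteq [D+1]}$, a set of $\ell = 2^{D+1}$ linear forms on $D+2$ variables. Choose $r_{\ref{lem:degree-non-increase}}(d)$ large enough that Lemma~\ref{the:linear-form-equidistribution} applies to $\bfP$ and $\bfL$ with error strictly smaller than $\|\calB(\bfP)\|^{-\ell}$ times the size of the range. Then every tuple $(b_I)_I$ that is $\calB(\bfP)$-consistent with $\bfL$ is realized by some $(x,y_1,\ldots,y_{D+1})$, and consequently
\[
\sum_{I \subseteq [D+1]} (-1)^{D+1-|I|}\, \Gamma(b_I) = 0 \qquad (\star)
\]
for every such consistent tuple.

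The second step is to verify that for every input $(x,y_1,\ldots,y_{D+1})$, the tuple $(b_I) := (\bfQ(L_I(x,y_1,\ldots,y_{D+1})))_I$ is $\calB(\bfP)$-consistent with $\bfL$. Fix $i \in [C]$ and write $d_i = \deg(P_i)$, $h_i = \depth(P_i)$. Since $\depth(Q_i) \leq h_i$, each $Q_i(L_I(\cdot))$ already lies in $\bbU_{h_i+1}$. For any $(\lambda_I)$ in the $(d_i, h_i)$-dependency set of $\bfL$, the relation $\sum_I \lambda_I P(L_I) \equiv 0$ holds for every polynomial $P$ of degree $d_i$ and depth $h_i$; expanding both sides using the explicit coefficient basis of Lemma~1.7, this becomes a system of $\bbZ$-linear conditions on $(\lambda_I)$ that is preserved under specializing leading coefficients to zero, hence continues to hold when $P$ is replaced by any polynomial of degree at most $d_i$ and depth at most $h_i$, such as $Q_i$. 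Thus $\sum_I \lambda_I Q_i(L_I) \equiv 0$, and consistency follows. Combining this with $(\star)$ yields the vanishing of the iterated difference of $\Gamma \circ \bfQ$ at every input, as required.

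I expect the second step, verifying inheritance of consistency from $\bfP$ to $\bfQ$ under lowering of degree and depth, to be the main obstacle; it is the point where the combinatorics of non-classical polynomials actually enters, and where one must carefully track that linear relations among evaluations of a generic degree-$d_i$ depth-$h_i$ polynomial at the $L_I$ remain valid on the sub-class of lower-degree or lower-depth polynomials. Once this is established, the rest of the argument is a mechanical combination of the definition of degree via iterated derivatives and the high-rank equidistribution of Lemma~\ref{the:linear-form-equidistribution}.
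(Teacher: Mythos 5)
The paper itself does not prove this lemma---it is imported verbatim as Theorem~4.1 of~\cite{Bhattacharyya:2013ii}---so I will assess your proposal on its own terms. Your overall strategy (expand the $(D+1)$-fold additive derivative as a signed sum over $L_I$, use equidistribution of high-rank factors over consistent atoms to promote the pointwise identity for $\bfP$ to a combinatorial identity on $\Gamma$, and then check that $\bfQ$'s values always land in consistent atoms) is the right one and is, as far as I can tell, the same mechanism used in~\cite{Bhattacharyya:2013ii}. However, there are two points you should tighten.

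First, the rank you actually require does not depend only on $d$, contrary to the lemma statement. To conclude from Lemma~\ref{the:linear-form-equidistribution} that every $\calB(\bfP)$-consistent tuple is realized, you need the additive error $\epsilon$ to be strictly below the true probability $\prod_i|\Lambda_i|/\|\calB\|^\ell$, and both $\|\calB\|\le p^{dC}$ and $\ell = 2^{D+1}$ with $D=\deg(\Gamma\circ\bfP)$ depend on $C$ (indeed $D$ itself grows with $C$; e.g.\ the indicator of a single atom already has degree $\Theta(C)$). So as written your argument gives a rank bound $r(d,C,D)$, not $r(d)$. Either you need the sharper, multiplicative form of the equidistribution estimate, or you must explain why the dependence on $C$ can be suppressed; the lemma as stated is a strictly stronger claim than what you have established.

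Second, your ``consistency-inheritance'' step is correct in spirit but the justification is garbled. The vanishing $\sum_I\lambda_I P(L_I)\equiv 0$ is a $\bbZ$-linear condition in the polynomial $P$, \emph{not} in $(\lambda_I)$, and ``specializing leading coefficients to zero'' is not by itself an operation under which the hypothesis obviously transfers, because a polynomial with a leading coefficient set to zero is exactly the kind of object you are trying to reach a conclusion about. The clean argument is: for $n\ge 2$, every polynomial of degree $\le d_i$ and depth $\le h_i$ lies in the $\bbZ$-span of polynomials of degree exactly $d_i$ and depth exactly $h_i$ (take $Q=(Q+R)-R$ with $R$ a degree-$d_i$, depth-$h_i$ polynomial supported on monomials disjoint from those of $Q$, so no leading-term or top-depth cancellation occurs). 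Since $\sum_I\lambda_I(\cdot)(L_I)$ is $\bbZ$-linear in its argument and vanishes on the generating set, it vanishes on all of $W_{d_i,h_i}$, in particular on $Q_i$. I would also note that for $n=1$ the span argument can fail, so the claim needs the domain dimension to be at least, say, $2$; in the paper's applications $n$ is large, but the statement quantifies over all $n$.

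With the rank dependence made honest (or a sharper equidistribution cited) and the span argument substituted for the ``specialize to zero'' phrasing, your proof is complete and takes the approach one would expect.
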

Now we come back to the proof of Lemma~\ref{lem:regularity-instance-soundness}.
We have the following.
\begin{claim}\label{cla:property-of-bfP}
  If $r_{\ref{lem:regularity-instance-soundness}}(\gamma,\epsilon,C,d) \geq r_{\ref{lem:degree-non-increase}}(d)$, then
  we have $\deg(\bfP) = \bfd$,
  $\depth(\bfP) = \bfh$,
  and $\rank(\bfP) \geq r$. 
\end{claim}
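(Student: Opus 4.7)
The plan is to split each of the three conclusions into an ``easy'' direction coming from the identity $\bfP' = \bfP\circ A$ and a ``hard'' direction coming from the factorisation $P_i = \Gamma_i \circ \overline{\bfR}$ together with Lemma~\ref{lem:degree-non-increase}. First I would observe that, by construction,
\[
  P'_i \;=\; \Gamma_i\circ \overline{\bfR'} \;=\; \Gamma_i\circ(\overline{\bfR}\circ A) \;=\; P_i \circ A,
\]
because $\bfR' = \bfR\circ A$ and $\bfS' = \bfS\circ A$ (the latter holds at the points in the image of $A$, which is all we are using since $A^+A = I_m$). Since composition with an affine map can only decrease or preserve the degree and depth of an individual polynomial, and any low-rank witness $Q = \sum_i \lambda_i P_i$ pulls back to a combination $\sum_i \lambda_i P'_i = Q\circ A$ of no larger rank, this immediately yields the lower bounds $\deg(P_i)\ge d_i$, $\depth(P_i)\ge h_i$ and $\rank(\bfP)\ge \rank(\bfP') \ge r$.

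Next I would handle the depth upper bound using nothing more than the range of $\Gamma_i$: by its very definition in the preceding paragraph of the section, $\Gamma_i$ maps into $\bbU_{h_i+1}$, so the image of $P_i = \Gamma_i\circ\overline{\bfR}$ is contained in $\bbU_{h_i+1}$. Under the global convention that all shifts are zero, this forces $\depth(P_i)\le h_i$ and hence $\depth(P_i) = h_i$.

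For the degree upper bound I would invoke Lemma~\ref{lem:degree-non-increase} with the high-rank sequence taken to be $\overline{\bfR'}$, the outer function taken to be $\Gamma_i$, and the comparison sequence taken to be $\overline{\bfR}$. Claim~\ref{cla:property-of-overline-bfR} gives $\deg(\overline{\bfR}) = \deg(\overline{\bfR'})$ and $\depth(\overline{\bfR}) = \depth(\overline{\bfR'})$, so the ``$\deg(Q_i)\le \deg(P_i)$ and $\depth(Q_i)\le \depth(P_i)$'' hypothesis of that lemma is trivially met; the rank hypothesis is arranged by strengthening the choice of $r_{\overline{\bfR'}}$ at the top of Section~\ref{subsec:regularity-instance-soundness} to additionally satisfy $r_{\overline{\bfR'}}(D)\ge r_{\ref{lem:degree-non-increase}}(d)$ for every $D$. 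The lemma then gives
\[
  \deg(P_i) \;=\; \deg(\Gamma_i\circ \overline{\bfR}) \;\le\; \deg(\Gamma_i\circ\overline{\bfR'}) \;=\; \deg(P'_i) \;=\; d_i,
\]
which matches the lower bound and gives $\deg(P_i) = d_i$.

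The main bookkeeping issue, and the step that requires the most care, is keeping straight which rank conditions are needed on which polynomial sequence. The hypothesis $r \ge r_{\ref{lem:degree-non-increase}}(d)$ of the claim guarantees that $\bfP$ itself has rank large enough to serve as a high-rank sequence in later applications of Lemma~\ref{lem:degree-non-increase}, while the rank required of $\overline{\bfR'}$ to close the argument above is absorbed into our freely chosen $r_{\overline{\bfR'}}$. Once these parameter choices are made consistently, the three conclusions $\deg(\bfP) = \bfd$, $\depth(\bfP) = \bfh$, and $\rank(\bfP)\ge r$ follow as described.
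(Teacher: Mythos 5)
Your proof is correct and follows essentially the same route as the paper: the lower bounds $\deg(\bfP)\ge\bfd$, $\depth(\bfP)\ge\bfh$, $\rank(\bfP)\ge r$ from $\bfP'=\bfP\circ A$, the depth upper bound from the range of $\Gamma_i$, and the degree upper bound from Lemma~\ref{lem:degree-non-increase} applied to $\overline{\bfR'}$ versus $\overline{\bfR}$. You are in fact a bit more careful than the paper about \emph{which} rank condition feeds Lemma~\ref{lem:degree-non-increase}: it is the rank of $\overline{\bfR'}$, so your proposed strengthening $r_{\overline{\bfR'}}(D)\ge r_{\ref{lem:degree-non-increase}}(d)$ is the cleaner way to ensure that hypothesis (the paper leaves this implicit and its stated condition $r_{\ref{lem:regularity-instance-soundness}}\ge r_{\ref{lem:degree-non-increase}}(d)$ bounds $\rank(\bfP')$ rather than $\rank(\overline{\bfR'})$).
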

\begin{proof}
  Note that $\bfP' = \bfP \circ A$.
  Since affine transformation does not increase degree and rank,
  we have $\deg(\bfP) \geq \deg(\bfP') = \bfd$,
  $\depth(\bfP) \geq \depth(\bfP') = \bfh $,
  and $\rank(\bfP) \geq \rank(\bfP') = r$.
  Since $\deg(\overline{\bfR}) = \deg(\overline{\bfR'})$ and $\depth(\overline{\bfR}) = \depth(\overline{\bfR'})$ from Claim~\ref{cla:good-case},
  by Lemma~\ref{lem:degree-non-increase},
  we have $\deg(\bfP) = \deg(\Gamma \circ \overline{\bfR}) \leq \deg(\Gamma \circ \overline{\bfR'}) = \deg(\bfP') = \bfd$.
  Hence, we have $\deg(\bfP) = \bfd$.
  Also for each $i \in [C]$,
  since $P_i = \Gamma_i(\overline{\bfR})$ and the range of $\Gamma_i$ is $\bbU_{h_i+1}$,
  we have $\depth(P_i) \leq h_i$.
  Hence, we have $\depth(\bfP) = \bfh$.
\end{proof}
Claim~\ref{cla:property-of-bfP} in particular says that $\Gamma(\bfP(x))$ is a function  satisfying the regularity-instance $I$.
In what follows, we assume $r_{\ref{lem:regularity-instance-soundness}}(\gamma,\epsilon,C,d) \geq r_{\ref{lem:degree-non-increase}}(d)$.

We now want to show that $f(x)$ and $\Gamma(\bfP(x))$ are close.
Recall that $\Gamma(\bfP(x)) = f(x) - f_2(x) - f_3(x) + \Phi(\overline{\bfR}(x))$.
We already know that $\|f_2\|_{U^d}$ and $\|f_3\|_{U^d}$ are small from Claim~\ref{cla:good-case}.
Hence, we show that $\|\Phi(\overline{\bfR}(x))\|_{U^d}$ is also small in the following two claims.
\begin{claim}\label{cla:gowers-norm-of-phi-overline-bfR'-small}
  \[
    \|\Phi(\overline{\bfR'}(x))\|_{U^d} \leq \gamma + \delta^{1/2^d} + 2\rho.
  \]
\end{claim}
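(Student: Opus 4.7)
The plan is to derive the claim by a direct triangle-inequality bound on the already-established identity
\[
  \Phi(\overline{\bfR'}(x)) = f_2(Ax) + f_3(Ax) - \Upsilon'(x) - \Delta'(x),
\]
which was how $\Phi$ was defined. Since $d \geq 2$ (the depths $h_i < d_i < d$ force this in the regularity-instance setup), $\|\cdot\|_{U^d}$ is a genuine norm and satisfies the triangle inequality, so
\[
  \|\Phi \circ \overline{\bfR'}\|_{U^d} \;\leq\; \|f_2 \circ A\|_{U^d} + \|f_3 \circ A\|_{U^d} + \|\Upsilon'\|_{U^d} + \|\Delta'\|_{U^d}.
\]

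The plan is then to bound each of the four terms separately. The term $\|\Upsilon'\|_{U^d} \leq \gamma$ is immediate from the assumption that $f \circ A$ is $\delta$-close to satisfying $I$ (via the decomposition witnessing closeness). The term $\|f_3 \circ A\|_{U^d} \leq \rho$ follows directly from the second bullet of Claim~\ref{cla:good-case}, using our choice $\eta_\bfR(D) \leq \rho/2$ so that $2\eta_\bfR(|\bfR'|) \leq \rho$. For $\|f_2 \circ A\|_{U^d}$, Claim~\ref{cla:good-case} gives the $L_2$ bound $\|f_2 \circ A\|_2 \leq 2\zeta_\bfR$; composing with $\|\cdot\|_1 \leq \|\cdot\|_2$ and Lemma~\ref{lem:bound-gowers-norm-by-l1-norm} yields $\|f_2 \circ A\|_{U^d} \leq (2\zeta_\bfR)^{1/2^d} \leq \rho$ by our choice $\zeta_\bfR = \rho^{2^d}/2$. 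Finally, since $\|\Delta'\|_1 \leq \delta$ and $\Delta'$ takes values in $\{-1,0,1\} \subseteq [-1,1]$, Lemma~\ref{lem:bound-gowers-norm-by-l1-norm} gives $\|\Delta'\|_{U^d} \leq \delta^{1/2^d}$.

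Summing the four bounds yields
\[
  \|\Phi \circ \overline{\bfR'}\|_{U^d} \;\leq\; \rho + \rho + \gamma + \delta^{1/2^d} \;=\; \gamma + \delta^{1/2^d} + 2\rho,
\]
which is the desired inequality. There is no real obstacle here since all four ingredients have already been arranged precisely to make this telescoping bound go through; the only point requiring care is that we really are in the ``good case'' of Claim~\ref{cla:good-case}, which holds under the standing assumption of the proof of Lemma~\ref{lem:regularity-instance-soundness}.
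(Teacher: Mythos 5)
Your proof is correct and follows essentially the same route as the paper's: triangle inequality on the identity $\Phi\circ\overline{\bfR'} = f_2\circ A + f_3\circ A - \Upsilon' - \Delta'$, then bound each of the four terms using Claim~\ref{cla:good-case}, the parameter choices $\zeta_\bfR = \rho^{2^d}/2$ and $\eta_\bfR \leq \rho/2$, and Lemma~\ref{lem:bound-gowers-norm-by-l1-norm} for $\Delta'$. One small remark: the triangle inequality for $\|\cdot\|_{U^d}$ holds for all $d\geq 1$ (semi-norms satisfy it too), so the aside about needing $d\geq 2$ for a ``genuine norm'' is unnecessary, though harmless.
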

\begin{proof}
  Recall that $\Phi(\overline{\bfR'}(x)) = f_2(Ax) + f_3(Ax) - \Upsilon'(x) - \Delta'(x)$.
  Hence, 
  \begin{align*}
    \|\Phi(\overline{\bfR'}(x))\|_{U^d} 
    & \leq \|f_2\circ A\|_{U^d} + \|f_3 \circ A\|_{U^d} + \|\Upsilon'\|_{U^d} + \|\Delta'\|_{U^d} \\
    & \leq \rho + \rho + \|\Upsilon'\|_{U^d} + \|\Delta'\|_1^{1/2^d} \quad \text{(from Claim~\ref{cla:good-case})}\\
    & \leq \gamma + \delta^{1/2^d} + 2\rho. 
  \end{align*}
  \qedhere
\end{proof}

\begin{claim}\label{cla:diff-of-gowers-norms-small}
  Suppose $r_{\ref{lem:regularity-instance-soundness}}(\gamma,\epsilon,C,d) \geq r_{\ref{lem:Gamma-decides-gowers-norm}}(\rho,|\overline{\bfR}|,d)$.
  Then,
  \[
    \Bigl|\|\Phi(\overline{\bfR'}(x))\|_{U^d} - \|\Phi(\overline{\bfR}(x))\|_{U^d}\Bigr| \leq \rho.
  \]
\end{claim}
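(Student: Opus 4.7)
The plan is to reduce the bound to Lemma~\ref{lem:Gamma-decides-gowers-norm}, which was designed for exactly this situation: it says that if two polynomial sequences share the same complexity, degree, and depth, and both have sufficiently high rank, then replacing one by the other inside a common outer function $\Gamma$ perturbs the $U^d$ norm by only a small amount.

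First I would apply the reverse triangle inequality for $\|\cdot\|_{U^d}$ (which is a norm for $d\geq 2$ and a subadditive seminorm for $d=1$) to obtain
\[
  \Bigl|\|\Phi \circ \overline{\bfR}\|_{U^d} - \|\Phi \circ \overline{\bfR'}\|_{U^d}\Bigr| \leq \|\Phi \circ \overline{\bfR} - \Phi \circ \overline{\bfR'}\|_{U^d}.
\]
It then suffices to bound the right-hand side by $\rho$.

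Next I would verify the hypotheses of Lemma~\ref{lem:Gamma-decides-gowers-norm} applied to $\overline{\bfR}$ and $\overline{\bfR'}$. The two sequences have the same complexity $|\overline{\bfR}|=|\overline{\bfR'}|$ by construction, since $\overline{\bfR}=\bfR\cup \bfS$ and $\overline{\bfR'}=\bfR'\cup \bfS'$ where $\bfR$ and $\bfR'$ have matching sizes and $\bfS,\bfS'$ were produced from one another via $A$ and $A^+$. The equality of degree (all at most $d$, since both $\bfR'$ and $\bfS'$ have degree less than $d$, and $\overline{\bfR}$ inherits this from Claim~\ref{cla:property-of-overline-bfR}) and the equality of depth are exactly the content of Claim~\ref{cla:property-of-overline-bfR}. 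For the rank hypothesis, the regularization step via Lemma~\ref{lem:polynomial-regularity-lemma} guarantees $\rank(\overline{\bfR'})\geq r_{\overline{\bfR'}}(|\overline{\bfR'}|)$, and Claim~\ref{cla:property-of-overline-bfR} transfers this lower bound to $\rank(\overline{\bfR})$. Provided the global rank bound $r_{\ref{lem:regularity-instance-soundness}}(\gamma,\epsilon,C,d) \geq r_{\ref{lem:Gamma-decides-gowers-norm}}(\rho,|\overline{\bfR}|,d)$ is arranged so that $r_{\overline{\bfR'}}(|\overline{\bfR'}|)$ exceeds the Lemma~\ref{lem:Gamma-decides-gowers-norm} threshold (which we can ensure by choosing $r_{\overline{\bfR'}}$ accordingly), the hypotheses are met.

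Finally, since $\Phi$ takes values in $[-1,1]$ rather than $[0,1]$, I would write $\Phi = 2\Psi - 1$ where $\Psi$ is $[0,1]$-valued; the additive constant cancels in the difference, giving
\[
  \|\Phi \circ \overline{\bfR} - \Phi \circ \overline{\bfR'}\|_{U^d} = 2\,\|\Psi \circ \overline{\bfR} - \Psi \circ \overline{\bfR'}\|_{U^d},
\]
and applying Lemma~\ref{lem:Gamma-decides-gowers-norm} to $\Psi$ with parameter $\rho/2$ yields the bound $\rho$. The proof is essentially a direct invocation of a pre-existing tool; the only thing to be careful about is the bookkeeping ensuring that the rank thresholds line up and that Claim~\ref{cla:property-of-overline-bfR} indeed gives us matching degrees and depths on both sides. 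There is no substantial analytic obstacle.
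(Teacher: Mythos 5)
Your proof reproduces the paper's argument (reduce to Lemma~\ref{lem:Gamma-decides-gowers-norm}) and is actually more careful in two places where the paper is silent: you notice that $\Phi$ takes values in $[-1,1]$ rather than $[0,1]$ and handle it via the affine rescaling $\Phi=2\Psi-1$ (the constant cancels in the difference), and you correctly locate the source of the rank lower bound for $\overline{\bfR'}$ as the regularization function $r_{\overline{\bfR'}}$ from Lemma~\ref{lem:polynomial-regularity-lemma}, not the rank parameter $r$ of the instance $I$ (the claim's stated hypothesis on $r_{\ref{lem:regularity-instance-soundness}}$ actually constrains the wrong object, and the real requirement is $r_{\overline{\bfR'}}(D)\geq r_{\ref{lem:Gamma-decides-gowers-norm}}(\rho,D,d)$, which must be folded into the earlier choice of $r_{\overline{\bfR'}}$; you flag this correctly).

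There is, however, one step that does not literally make sense — and you inherit it from the paper. The functions $\Phi\circ\overline{\bfR}$ and $\Phi\circ\overline{\bfR'}$ live on different domains ($\Fp^n$ and $\Fp^m$), so $\Phi\circ\overline{\bfR}-\Phi\circ\overline{\bfR'}$ is not a well-defined function and the reverse triangle inequality you write, as well as the conclusion of Lemma~\ref{lem:Gamma-decides-gowers-norm} as stated, is vacuous in this cross-domain setting (which is precisely the setting advertised before that lemma, ``$\bfP$ and $\bfQ$ can depend on different numbers of values''). The salvage is to use what Lemma~\ref{lem:Gamma-decides-gowers-norm}'s proof actually establishes: by Lemma~\ref{the:linear-form-equidistribution}, the tuple distributions $\bigl(\overline{\bfR}(x+\sum_{i\in I}y_i)\bigr)_{I\subseteq[d]}$ over $\Fp^n$ and $\bigl(\overline{\bfR'}(x'+\sum_{i\in I}y'_i)\bigr)_{I\subseteq[d]}$ over $\Fp^m$ have statistical distance at most $\rho^{2^d}$, since both match the ideal equidistribution over consistent atoms (the degrees and depths coincide by Claim~\ref{cla:property-of-overline-bfR}). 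As $\|\Phi\circ\overline{\bfR}\|_{U^d}^{2^d}$ is a bounded ($|\cdot|\leq 1$) expectation against the pushforward of that tuple distribution under $\Phi$, the two $2^d$-th powers differ by at most $2\rho^{2^d}$; taking $2^d$-th roots and using $|a^{1/k}-b^{1/k}|\leq|a-b|^{1/k}$ gives the claim up to a harmless factor $2^{1/2^d}$ absorbable by retuning $\rho$. So you reach the right conclusion by the intended tool, but the reverse-triangle-inequality detour should be replaced by the direct comparison of the two Gowers norms via their common distributional expression.
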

\begin{proof}
  From Claim~\ref{cla:property-of-overline-bfR}, 
  $\bfR$ and $\bfR'$ have the same degree less than $d$ and the same depth.
  Also, ranks of them are at least $r_{\ref{lem:Gamma-decides-gowers-norm}}(\rho,|\bfR|,d)$.
  Hence, the claim follows from Lemma~\ref{lem:Gamma-decides-gowers-norm}.
\end{proof}
In what follows, we assume that $r_{\ref{lem:regularity-instance-soundness}}(\gamma,\epsilon,C,d) \geq r_{\ref{lem:Gamma-decides-gowers-norm}}(\rho,|\bfR|,d)$.
Note that $\rho$ and $|\bfR|$ are functions of $\gamma$, $\epsilon$, $C$, and $d$.
From Claims~\ref{cla:good-case},~\ref{cla:gowers-norm-of-phi-overline-bfR'-small}, and~\ref{cla:diff-of-gowers-norms-small},
we have 
\begin{align*}
  \|f - \Gamma(P)\|_{U^d} & \leq \|f_2\|_{U^d} + \|f_3\|_{U^d} + \|\Phi(\overline{\bfR})\|_{U^d} \\
  & \leq
  \|f_2\|_{U^d} + \|f_3\|_{U^d} + \|\Phi(\overline{\bfR'})\|_{U^d} + \Bigl| \|\Phi(\overline{\bfR'})\|_{U^d} - \|\Phi(\overline{\bfR})\|_{U^d}\Bigr| \\
  & \leq \rho + \rho + \gamma + \delta^{1/2^d} + 2\rho + \rho = \gamma + \delta^{1/2^d} + 5\rho.
\end{align*}
By setting $\delta$ and $\rho$ so that $\delta^{1/2^d} + 5\rho \leq \tau_{\ref{lem:small-perturbation}}(\gamma,\epsilon,d)$ and $r_{\ref{lem:regularity-instance-soundness}}(\gamma,\epsilon,C,d) \geq r_{\ref{lem:small-perturbation}}(\gamma,\epsilon,C,d)$,
the function $f$ is $\epsilon$-close to satisfying $I$ from Lemma~\ref{lem:small-perturbation}.
We reach a contradiction, and Lemma~\ref{lem:regularity-instance-soundness} follows.

\section{Any Locally Testable Property is Regular-Reducible}\label{sec:testable->regular-reducible}
In this section, we prove Theorem~\ref{the:testable->regular-reducible}.

Consider a function $f :\Fp^n \to \{0,1\}$, and an integer $m \leq n$.
Let $\mu_{f,m}$ denote the distribution of $f \circ A : \Fp^m \to \bit$,
where $A:\Fp^m \to \Fp^n$ is over random affine embeddings.
For $v: \Fp^m \to \bit$, we denote by $\mu_{f,m}[v]$ the probability that $f \circ A$ coincides with $v$.

These notions can be generalized to functions $f: \Fp^n \to [0,1]$.
We view such functions as distribution over functions $f': \Fp^n \to \bit$,
where $\Pr[f'(x) = 1] = f(x)$ independently for all $x \in \Fp^n$.
Let again $A: \Fp^m \to \Fp^n$ be a random affine embedding,
and we denote by $\mu_{f,m}$ the distribution of $f' \circ A: \Fp^m \to \bit$.
This is a generalization of the former case as a function $f: \Fp^n \to \bit$ can be identified with the function that maps every $x \in \Fp^n$ to the point-mass probability distribution over $\bit$ which is concentrated on $f(x)$.
The following lemma says that,
in order to show that the statistical distance between $\mu_{f,m}$ and $\mu_{g,m}$ is small,
it suffices to show that the Gowers norm of $f - g$ is small.
\begin{lemma}[Lemma 3.3 of~\cite{Hatami:2013ux}]\label{lem:small-gowers-norm->small-statistical-distance}
  For every $\epsilon > 0$ and $m \in \bbN$,
  there exist $\rho = \rho_{\ref{lem:small-gowers-norm->small-statistical-distance}}(\epsilon,m)$ and $d = d_{\ref{lem:small-gowers-norm->small-statistical-distance}}(\epsilon,m)$ with the following property.
  For any functions $f, g:\Fp^n \to [0,1]$ with $\|f - g\|_{U^d}\leq \rho$,
  we have $\dtv(\mu_{f,m},\mu_{g,m}) \leq \epsilon$.
\end{lemma}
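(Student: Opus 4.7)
The plan is to reduce the statistical distance to a finite collection of correlation estimates, each controlled by the Gowers norm of $f - g$ via Gowers--Cauchy--Schwarz. Since $|\bit^{\Fp^m}| = 2^{p^m}$ depends only on $m$ and $p$, we write
\[
  \dtv(\mu_{f,m},\mu_{g,m}) = \frac{1}{2}\sum_{v : \Fp^m \to \bit} |\mu_{f,m}[v]-\mu_{g,m}[v]|
\]
and reduce to bounding each summand by $\epsilon \cdot 2^{-p^m-1}$. For a fixed $v$, the identity $\mu_{f,m}[v] = \E_A \prod_{y \in \Fp^m}\bigl((2v(y)-1)f(Ay) + (1-v(y))\bigr)$ expands to $\mu_{f,m}[v] = \sum_{S \subseteq \Fp^m} c_{v,S}\,\E_A \prod_{y \in S} f(Ay)$ with $|c_{v,S}| \leq 1$. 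Hence it suffices to bound $|\E_A[\prod_{y \in S}f(Ay) - \prod_{y \in S}g(Ay)]|$ uniformly over $S \subseteq \Fp^m$.

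Next, a standard telescoping identity rewrites this quantity as $\sum_{y_0 \in S}\E_A[(f-g)(Ay_0)\cdot \Psi_{y_0}(A)]$, where each $\Psi_{y_0}$ is a product of values of $f$ and $g$ at $\{Ay : y \in S\setminus\{y_0\}\}$ and is uniformly bounded by $1$. Parametrize the random affine embedding $A$ by $(A(0), A(e_1),\ldots,A(e_m)) \in (\Fp^n)^{m+1}$; the injectivity constraint removes only an $o_n(1)$ fraction of parameter space, which we absorb into the error. Each $Ay$ is then a fixed affine form in these $m+1$ variables. The main technical claim is: for any $y_0 \in \Fp^m$ and any function $\Psi$ depending only on the $p^m-1$ remaining affine forms with $\|\Psi\|_\infty \leq 1$,
\[
  \bigl|\E[(f-g)(Ay_0)\cdot \Psi(A)]\bigr| \leq \|f-g\|_{U^d}^{\alpha}
\]
for some $\alpha > 0$ and some $d$ depending only on $m$ and $p$.

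The crux is establishing this correlation bound, and I would do it via iterated Cauchy--Schwarz. Recentering $u := Ay_0$, every other evaluation point takes the form $u + \sum_i (y-y_0)_i A(e_i)$, so we can successively apply Cauchy--Schwarz in the parameters $A(e_1),\ldots,A(e_m)$ (and one further step in $u$ if needed) to eliminate the bounded-but-arbitrary factors $\Psi$. After sufficiently many rounds the remaining expression is a Gowers-type inner product with $2^d$ copies of $f-g$, bounded by $\|f-g\|_{U^d}^{2^d}$ through the Gowers--Cauchy--Schwarz inequality; the necessary $d$ is controlled by the ``true complexity'' of the affine system $\{y \mapsto Ay : y \in \Fp^m\}$, which is a function of $m$ and $p$ alone. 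The main obstacle is orchestrating the Cauchy--Schwarz steps so that the tagged form $Ay_0$ becomes the base of a genuine Gowers configuration while the $p^m-1$ bounded factors get absorbed; the standard way around this is to introduce extra ``dummy'' Cauchy--Schwarz directions when a parameter appears in a form other than $u$ itself, which increases $d$ but keeps it finite. Once this bound holds, chaining everything gives $|\mu_{f,m}[v] - \mu_{g,m}[v]| \leq 2^{p^m}\cdot p^m \cdot \|f-g\|_{U^d}^{\alpha}$, and we then choose $\rho = \rho(\epsilon,m)$ small enough that this is at most $\epsilon \cdot 2^{-p^m-1}$, completing the proof.
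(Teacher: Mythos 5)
This lemma is cited from Hatami--Lovett~\cite{Hatami:2013ux} and not reproved in the paper, so there is no internal proof to compare against. Your outline does mirror the standard higher-order-Fourier argument (which is also, to my knowledge, what the cited source does): expand $\dtv(\mu_{f,m},\mu_{g,m})$ into finitely many monomial correlations, telescope in $f-g$, and kill each correlation term by an iterated Cauchy--Schwarz / generalized von Neumann bound in terms of $\|f-g\|_{U^d}$, with $d$ controlled by the Cauchy--Schwarz complexity of the system $\{y\mapsto Ay: y\in\Fp^m\}$ (at most $p^m-2$, hence a function of $m$ and $p$ alone).

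Two points are worth tightening. First, your ``main technical claim'' as written --- that the correlation bound holds for \emph{any} bounded $\Psi$ depending on the remaining $p^m-1$ affine forms --- is false in that generality: once $p^m-1\ge m+1$, those forms determine $A$ and hence $Ay_0$, so an adversarial $\Psi$ could perfectly correlate with $(f-g)(Ay_0)$. The claim you actually need and implicitly use is the correct one: $\Psi$ is a \emph{product} $\prod_{y\neq y_0}F_y(Ay)$ of one-variable bounded factors, which is exactly what the telescoping gives and exactly the shape that iterated Cauchy--Schwarz can absorb (since the forms are pairwise non-parallel in $(u,L_1,\ldots,L_m)$). Second, the loss from conditioning on injectivity of $A$ is $O(p^{m-n})$, not $o_n(1)$ uniformly over all $n\ge m$; either assume $n$ sufficiently large relative to $m$ (as every application in the paper does) or absorb it into the choice of $\rho$ after noting $p^{m-n}$ can be made smaller than the target error once $n\ge 2m$, say. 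Neither issue changes the structure of the argument, and the final quantitative chaining you describe is fine.
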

% \begin{corollary}\label{cor:small-gowers-norm->small-distance}
%   For every $\epsilon > 0$,
%   there exist $\rho = \rho_{\ref{lem:small-gowers-norm->small-statistical-distance}}(\epsilon) > 0$ and $d = d_{\ref{lem:small-gowers-norm->small-statistical-distance}}(\epsilon) \geq 1$ with the following property.
%   For any functions $f, g:\Fp^n \to [0,1]$ with $\|f - g\|_{U^d}\leq \rho$,
%   we have $\|f - g\|_1 \leq \epsilon$.
% \end{corollary}

Suppose we have a function $f(x) = \Gamma(\bfP(x))$ such that the rank of $\bfP$ is high.
Then the following lemma says that the distribution of $\mu_{f,m}$ is determined by the function $\Gamma$ and the degree and the depth of $\bfP$, and not by specific form of $\bfP$.
\begin{lemma}[Lemma 3.5 of~\cite{Hatami:2013ux}]\label{lem:only-gamma-matters}
  For any $\epsilon > 0$, and $d, m \in \bbN$,
  there exists $r = r_{\ref{lem:only-gamma-matters}}^{(\epsilon,d,m)}: \bbN \to \bbN$ with the following property.
  Let $\bfP$ and $\bfQ$ be polynomial sequences (possibly on different numbers of variables) with the same complexity $C$, the same degree at most $d$, the same depth $\bfh = (h_1,\ldots,h_C)$, and ranks at least $r(C)$.
  Let $\Gamma: \prod_{i\in [C]} \bbU_{h_i+1} \to [0,1]$ be a function.
  Then, $\dtv(\mu_{\Gamma \circ \bfP,m}, \mu_{\Gamma \circ \bfQ,m}) \leq \epsilon$ holds.
\end{lemma}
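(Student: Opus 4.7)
The plan is to compare both $\mu_{\Gamma\circ\bfP,m}$ and $\mu_{\Gamma\circ\bfQ,m}$ to a common ``canonical'' distribution $\mu^*$ that depends only on $\Gamma$, the degree $\bfd$, the depth $\bfh$, and $m$, and then finish via the triangle inequality. The key ingredient is Lemma~\ref{the:linear-form-equidistribution} applied to the family of $p^m$ linear forms that parametrize a random affine embedding.

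Write a random affine embedding $A:\Fp^m\to\Fp^n$ as $A(y)=c+\sum_{i=1}^m y_i x_i$ with $(x_1,\ldots,x_m,c)\in(\Fp^n)^{m+1}$ uniformly random, conditioned on $x_1,\ldots,x_m$ being linearly independent; the effect of this conditioning is $o_n(1)$ in total variation. For each $y\in\Fp^m$, let $L_y(x_1,\ldots,x_m,c)=c+\sum_i y_i x_i$, and set $\bfL=(L_y)_{y\in\Fp^m}$, a collection of $N:=p^m$ linear forms on $m+1$ variables. Apply Lemma~\ref{the:linear-form-equidistribution} to $\bfP$ with accuracy parameter $\epsilon'=\Theta(\epsilon/p^{dCN})$: provided $\rank(\bfP)\ge r_{\ref{the:linear-form-equidistribution}}(\epsilon',d)$, each $\calB(\bfP)$-consistent tuple is attained with the ``expected'' probability up to $\pm\epsilon'$, and since the number of such tuples is at most $\|\calB\|^N\le p^{dCN}$, summing the atomwise bounds shows that the joint distribution $\nu_\bfP$ of $(\bfP(A(y)))_{y\in\Fp^m}$ is within total variation $\epsilon/4+o_n(1)$ of the ``expected'' distribution $\nu^*$ supported on the $(\bfd,\bfh)$-consistent tuples.

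The consistency condition (Definition~\ref{def:consitency}) depends only on $\bfL$ and on $(\bfd,\bfh)$, not on the specific polynomial sequence; therefore $\nu^*$ is the \emph{same} distribution whether derived from $\bfP$ or from $\bfQ$. Define $\mu^*$ on $\bit^{\Fp^m}$ by sampling $(b_y)_{y\in\Fp^m}\sim\nu^*$ and then independently drawing $v(y)$ as a Bernoulli variable with mean $\Gamma(b_y)$ for each $y$. Now $\mu_{\Gamma\circ\bfP,m}$ is obtained from $\nu_\bfP$ by precisely this Bernoulli post-processing, and post-processing cannot increase total variation distance, so $\dtv(\mu_{\Gamma\circ\bfP,m},\mu^*)\le \epsilon/4+o_n(1)$, and symmetrically for $\bfQ$. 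Setting $r(C)=r_{\ref{the:linear-form-equidistribution}}(\Theta(\epsilon/p^{dCp^m}),d)$ (which also forces $n$ to be large, so that the $o_n(1)$ error is absorbed, since a sequence of rank $r(C)$ can only exist in sufficiently high dimension) and invoking the triangle inequality yields $\dtv(\mu_{\Gamma\circ\bfP,m},\mu_{\Gamma\circ\bfQ,m})\le\epsilon$.

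The main technical obstacle is the doubly-exponential blow-up in the number of joint atoms: with up to $p^{dCp^m}$ of them, the accuracy demanded from Lemma~\ref{the:linear-form-equidistribution} must shrink correspondingly, and the required rank $r(C)$ becomes a tower-type function of $m$ and $C$. This growth is acceptable because the statement allows $r$ to depend arbitrarily on $\epsilon$, $d$, and $m$. A secondary technicality is the gap between a random affine embedding and a uniform $(m+1)$-tuple over $\Fp^n$, handled by the $o_n(1)$ remark above; one cannot use Lemma~\ref{lem:Gamma-decides-gowers-norm} and Lemma~\ref{lem:small-gowers-norm->small-statistical-distance} directly because $\bfP$ and $\bfQ$ may live on different ambient spaces, which is exactly the reason for routing through the equidistribution lemma in this way.
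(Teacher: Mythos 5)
The paper does not prove this lemma; it is cited verbatim as Lemma~3.5 of Hatami--Lovett~\cite{Hatami:2013ux}, so there is no in-paper proof to compare against. Your route---parametrizing a random affine embedding by the $p^m$ linear forms $L_y$, invoking the equidistribution Lemma~\ref{the:linear-form-equidistribution} to pin the joint law of $(\bfP(A(y)))_y$ to a canonical $\nu^*$ determined only by $(\bfd,\bfh,\bfL)$, noting that $\nu^*$ is common to $\bfP$ and $\bfQ$, pushing through the Bernoulli channel, and closing with a triangle inequality---is exactly the kind of argument one would expect, and the bookkeeping on the error budget (accuracy $\Theta(\epsilon/p^{dCp^m})$ against $\le p^{dCp^m}$ atoms) and the observation that post-processing cannot increase total variation are both correct. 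The observation that this cannot be rerouted through Lemma~\ref{lem:Gamma-decides-gowers-norm} and Lemma~\ref{lem:small-gowers-norm->small-statistical-distance}, because $\bfP$ and $\bfQ$ may live over different $\Fp^n$, is also right.

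There is, however, a genuine gap where you absorb the $o_n(1)$ term. You condition the uniformly random $(x_1,\ldots,x_m,c)\in(\Fp^n)^{m+1}$ on $x_1,\ldots,x_m$ being linearly independent, and assert this costs only $o_n(1)$ in total variation, justified by the claim that ``a sequence of rank $r(C)$ can only exist in sufficiently high dimension.'' That claim is false when every $P_i$ has degree $1$: for a linear, depth-$0$ polynomial, the definitions in Section~\ref{sec:preliminaries} give $\rank_1$ equal to $\infty$ for any non-constant polynomial, so a sequence such as $\bfP=(x_1,x_2)$ on $\Fp^2$ has infinite factor rank regardless of $n$. The rank hypothesis therefore places no lower bound on $n$ in the degree-$1$ case, and the conditioning loss is of order $p^{m-n}$, which is a constant when $n$ is comparable to $m$. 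This is not merely a bookkeeping nuisance: with $\bfP=(x_1,x_2)$ on $\Fp^2$, $\bfQ=(z_1,z_2)$ on a high-dimensional space, $m=2$, and $\Gamma(a,b)=\mathbf{1}[a=b]$, the two distributions $\mu_{\Gamma\circ\bfP,2}$ and $\mu_{\Gamma\circ\bfQ,2}$ differ by about $p^{-2}$ in total variation (the first is supported only on indicators of affine lines; the second also puts $\Theta(p^{-2})$ mass on the all-$0$ and all-$1$ patterns), so the conclusion fails for $\epsilon<p^{-2}$. To make the argument airtight one must either add (or surface the implicit) hypothesis that $n\ge n_0(\epsilon,d,m,C)$, or restrict attention to polynomial sequences of maximum degree $\ge 2$, for which taking $\lambda=e_i$ in the factor-rank definition does yield $\rank(\bfP)\le \rank_{d_i}(P_i)\le n$ and hence the dimension lower bound you wanted.
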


We can show a similar lemma if,
instead of replacing $\bfP$ by $\bfQ$,
we replace $\Gamma$ with a similar structure function.
\begin{lemma}\label{lem:even-similar-gamma-is-ok}
  For any $\epsilon > 0$, and $m \in \bbN$,
  there exist $d = d_{\ref{lem:even-similar-gamma-is-ok}}(\epsilon,m)$ and $\tau = \tau_{\ref{lem:even-similar-gamma-is-ok}}(\epsilon,m)$ with the following property.
  Let $\bfP$ be a polynomial sequence with complexity $C$ and degree at most $d$.
  Let $\Gamma,\widetilde{\Gamma}: \bbT^C \to [0,1]$ be functions with $\|\Gamma - \widetilde{\Gamma}\|_\infty \leq \tau$.
  Then, $\dtv(\mu_{\Gamma \circ \bfP,m}, \mu_{\widetilde{\Gamma} \circ \bfP,m}) \leq \epsilon$ holds.
\end{lemma}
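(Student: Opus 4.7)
The plan is to reduce this lemma to Lemma~\ref{lem:small-gowers-norm->small-statistical-distance}, which already gives a clean passage from closeness in Gowers norm to closeness in the distribution of restrictions to random affine subspaces. I would first invoke Lemma~\ref{lem:small-gowers-norm->small-statistical-distance} with parameters $\epsilon$ and $m$ to obtain $d = d_{\ref{lem:small-gowers-norm->small-statistical-distance}}(\epsilon,m)$ and $\rho = \rho_{\ref{lem:small-gowers-norm->small-statistical-distance}}(\epsilon,m)$, and then set $\tau = \rho^{2^d}$. These are the $d$ and $\tau$ that the current lemma promises.

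Next I would verify that $\|\Gamma\circ \bfP - \widetilde{\Gamma}\circ \bfP\|_{U^d} \le \rho$ for any polynomial sequence $\bfP$. The point is that the pointwise bound is trivial: for every $x \in \Fp^n$ we have $|\Gamma(\bfP(x)) - \widetilde{\Gamma}(\bfP(x))| \le \|\Gamma - \widetilde{\Gamma}\|_\infty \le \tau$, so the difference function $h := \Gamma\circ \bfP - \widetilde{\Gamma}\circ \bfP$ takes values in $[-\tau,\tau] \subseteq [-1,1]$ and in particular $\|h\|_1 \le \tau$. Applying Lemma~\ref{lem:bound-gowers-norm-by-l1-norm} to $h$ then gives $\|h\|_{U^d} \le \|h\|_1^{1/2^d} \le \tau^{1/2^d} = \rho$.

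Finally, since both $\Gamma\circ \bfP$ and $\widetilde{\Gamma}\circ \bfP$ have range contained in $[0,1]$, Lemma~\ref{lem:small-gowers-norm->small-statistical-distance} applies and yields $\dtv(\mu_{\Gamma\circ \bfP,m}, \mu_{\widetilde{\Gamma}\circ \bfP,m}) \le \epsilon$, which is the desired conclusion.

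There is essentially no obstacle here: the whole argument is a two-line chain passing through Lemma~\ref{lem:bound-gowers-norm-by-l1-norm} and Lemma~\ref{lem:small-gowers-norm->small-statistical-distance}. The only thing to be mildly careful about is the order of quantifiers, namely that $d$ must be chosen \emph{before} $\tau$ so that the exponent $1/2^d$ in the Gowers-norm-from-$L_1$ bound is known when we set $\tau$; the degree-at-most-$d$ hypothesis on $\bfP$ in the statement is then only used to match the Gowers norm order to the one supplied by Lemma~\ref{lem:small-gowers-norm->small-statistical-distance}, but is not otherwise exploited in the proof.
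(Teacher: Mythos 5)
Your argument is correct and follows the same overall route as the paper's proof: bound the Gowers norm of $h := \Gamma\circ\bfP - \widetilde\Gamma\circ\bfP$ and then apply Lemma~\ref{lem:small-gowers-norm->small-statistical-distance}. The only difference is in how the Gowers-norm bound is obtained. You bound $\|h\|_1 \le \tau$ pointwise and then invoke Lemma~\ref{lem:bound-gowers-norm-by-l1-norm} to get $\|h\|_{U^d} \le \tau^{1/2^d}$, which forces the choice $\tau = \rho^{2^d}$. The paper instead expands $\|h\|_{U^d}^{2^d}$ directly from the definition, bounds each of the $2^d$ factors in the averaged product by $\tau$ pointwise, and so obtains the tighter $\|h\|_{U^d}\le\tau$, which would permit the larger $\tau=\rho$. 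Either is a legitimate constant depending only on $\epsilon$ and $m$, so the small loss in your version is harmless. Worth noting: the paper's printed choice is $\tau=\rho^{1/2^d}$, which for $\rho<1$ is \emph{larger} than $\rho$ and does not actually match its own computation (the deduction $\|h\|_{U^d}\le\rho$ requires $\tau\le\rho$); your more conservative $\tau=\rho^{2^d}$ sidesteps that slip entirely. Your closing observation that the degree-at-most-$d$ hypothesis on $\bfP$ is not used anywhere in this proof is also accurate and matches the paper.
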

\begin{proof}
  We set $\rho = \rho_{\ref{lem:small-gowers-norm->small-statistical-distance}}(\epsilon,m)$, $d_{\ref{lem:even-similar-gamma-is-ok}}(\epsilon,m) = d_{\ref{lem:small-gowers-norm->small-statistical-distance}}(\epsilon,m)$, and $\tau_{\ref{lem:even-similar-gamma-is-ok}}(\epsilon,m) = \rho^{1/2^d}$.
  We have
  \[
    \|\Gamma \circ \bfP - \widetilde{\Gamma} \circ \bfP\|_{U^d}^{2^d}
    = \left|\E_{x,y_1,\ldots,y_d}\prod_{I \subseteq [d]} ((\Gamma\circ \bfP)(x + \sum_{i \in I}y_i) - (\widetilde{\Gamma}\circ \bfP)(x + \sum_{i \in I}y_i)) \right|
    \leq \E_{x,y_1,\ldots,y_d} \tau^{2^d} = \tau^{2^d}.
  \]
  Hence $\|\Gamma \circ \bfP - \widetilde{\Gamma} \circ \bfP\|_{U^d} \leq \rho$ holds.
  From Lemma~\ref{lem:small-gowers-norm->small-statistical-distance},
  we have $\dtv(\mu_{\Gamma \circ \bfP,m}, \mu_{\widetilde{\Gamma} \circ \bfP,m}) \leq \epsilon$.
\end{proof}

Combining Lemmas~\ref{lem:only-gamma-matters} and~\ref{lem:even-similar-gamma-is-ok},
we have the following.
\begin{corollary}\label{cor:only-gamma-matters-combined}
  For any $\epsilon > 0$ and $m \in \bbN$,
  there exist $\tau = \tau_{\ref{cor:only-gamma-matters-combined}}(\epsilon,m)$,
  $r = r_{\ref{cor:only-gamma-matters-combined}}^{(\epsilon,m)}: \bbN \to \bbN$, and 
  $d = d_{\ref{cor:only-gamma-matters-combined}}(\epsilon,m)$ with the following property.
  Let $\bfP$ and $\bfQ$ be polynomial sequences (possibly on different numbers of variables) with the same complexity $C$, the same degree at most $d$, the same depth, and ranks at least $r(C)$.
  Let $\Gamma,\widetilde{\Gamma}: \bbT^C \to [0,1]$ be functions with $\|\Gamma - \widetilde{\Gamma}\|_\infty \leq \tau$.
  Then we have $\dtv(\mu_{\Gamma \circ \bfP,m}, \mu_{\widetilde{\Gamma} \circ \bfQ,m}) \leq \epsilon$.  
\end{corollary}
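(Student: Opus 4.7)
The plan is to obtain the corollary by a triangle inequality for total variation distance, inserting the hybrid $\mu_{\widetilde{\Gamma} \circ \bfP, m}$ between the two distributions of interest. Passing from $\mu_{\Gamma \circ \bfP, m}$ to $\mu_{\widetilde{\Gamma} \circ \bfP, m}$ only changes the structure function (with the polynomial sequence held fixed), which is exactly the situation handled by Lemma~\ref{lem:even-similar-gamma-is-ok}. Passing from $\mu_{\widetilde{\Gamma} \circ \bfP, m}$ to $\mu_{\widetilde{\Gamma} \circ \bfQ, m}$ only changes the polynomial sequence (with the structure function held fixed), which is exactly the situation handled by Lemma~\ref{lem:only-gamma-matters}. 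Each step will cost $\epsilon/2$, so the triangle inequality gives the desired bound $\epsilon$.

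Concretely, I would set the parameters in the following order, so that the output of each invoked lemma feeds into the hypothesis of the next. First, given $\epsilon$ and $m$, invoke Lemma~\ref{lem:even-similar-gamma-is-ok} with proximity $\epsilon/2$ and dimension $m$ to obtain the degree bound $d_1 = d_{\ref{lem:even-similar-gamma-is-ok}}(\epsilon/2,m)$ and the tolerance $\tau_1 = \tau_{\ref{lem:even-similar-gamma-is-ok}}(\epsilon/2,m)$. Next, invoke Lemma~\ref{lem:only-gamma-matters} with proximity $\epsilon/2$, degree bound $d_1$, and dimension $m$ to obtain the rank function $r_1 = r_{\ref{lem:only-gamma-matters}}^{(\epsilon/2,d_1,m)}$. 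Then define $d_{\ref{cor:only-gamma-matters-combined}}(\epsilon,m) := d_1$, $\tau_{\ref{cor:only-gamma-matters-combined}}(\epsilon,m) := \tau_1$, and $r_{\ref{cor:only-gamma-matters-combined}}^{(\epsilon,m)} := r_1$.

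With these choices, the hypotheses of both lemmas are satisfied: the common degree is at most $d_1$, the common depth matches, the ranks are at least $r_1(C)$, and $\|\Gamma - \widetilde{\Gamma}\|_\infty \leq \tau_1$. Applying Lemma~\ref{lem:even-similar-gamma-is-ok} to $\Gamma, \widetilde{\Gamma}$ and $\bfP$ yields $\dtv(\mu_{\Gamma \circ \bfP, m}, \mu_{\widetilde{\Gamma} \circ \bfP, m}) \leq \epsilon/2$, and applying Lemma~\ref{lem:only-gamma-matters} to $\widetilde{\Gamma}, \bfP, \bfQ$ (valid because $\widetilde{\Gamma}$ still has range $[0,1]$ and $\bfP, \bfQ$ share complexity, degree, and depth while having sufficient rank) yields $\dtv(\mu_{\widetilde{\Gamma} \circ \bfP, m}, \mu_{\widetilde{\Gamma} \circ \bfQ, m}) \leq \epsilon/2$. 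The triangle inequality then closes the argument:
\[
\dtv(\mu_{\Gamma \circ \bfP, m}, \mu_{\widetilde{\Gamma} \circ \bfQ, m}) \leq \dtv(\mu_{\Gamma \circ \bfP, m}, \mu_{\widetilde{\Gamma} \circ \bfP, m}) + \dtv(\mu_{\widetilde{\Gamma} \circ \bfP, m}, \mu_{\widetilde{\Gamma} \circ \bfQ, m}) \leq \epsilon.
\]

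There is no genuine obstacle here; the only care required is in the ordering of parameter choices, since the degree bound $d_1$ produced by Lemma~\ref{lem:even-similar-gamma-is-ok} must be threaded into Lemma~\ref{lem:only-gamma-matters} before the rank function is determined, and the tolerance $\tau_1$ must be used as the final $\tau$ in the corollary. Splitting the budget as $\epsilon/2 + \epsilon/2$ is the natural choice, but any partition $\epsilon_1 + \epsilon_2 \leq \epsilon$ would work equally well.
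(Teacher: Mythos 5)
Your hybrid argument is exactly the intended proof: the paper states the corollary as an immediate consequence of combining Lemmas~\ref{lem:only-gamma-matters} and~\ref{lem:even-similar-gamma-is-ok}, and your insertion of the intermediate distribution $\mu_{\widetilde{\Gamma} \circ \bfP, m}$ together with the triangle inequality is the natural way to do this. Your ordering of parameter choices (degree and tolerance from Lemma~\ref{lem:even-similar-gamma-is-ok} first, then the rank function from Lemma~\ref{lem:only-gamma-matters} at that degree) is also correct and matches what the combined statement requires.
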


% \begin{corollary}
%   For any $\epsilon > 0$ and $d, m \in \bbN$, 
%   there exist $r = r_{\ref{cor:only-gamma-and-sigma-matters}}^{(\epsilon,d,m)}: \bbN \to \bbN$ and 
%   $\tau = \tau_{\ref{cor:only-gamma-and-sigma-matters}}(\epsilon,m)$ with the following property.
%   Let $\bfP$ and $\bfQ$ be polynomial sequences with the same degree at most $d$, the same depth $\bfh = (h_1,\ldots,h_C)$ and ranks at least $r(C)$.
%   Let $\Gamma,\widetilde{\Gamma}: \prod_{i\in [C]} \bbU_{h_i+1} \to [0,1]$ be functions with $\|\Gamma - \widetilde{\Gamma}\|_{\infty} \leq \tau$.
%   Then, $\dtv(\mu_{\Gamma \circ \bfP,m}, \mu_{\widetilde{\Gamma} \circ \bfQ,m}) \leq \epsilon$,
% \end{corollary}
% \begin{proof}
%   We define $h(x) = \Gamma(\bfQ(x))$.
%   By setting $r \geq r_{\ref{lem:only-gamma-matters}}^{(\epsilon/2,d,m)}$,
%   we have $\dtv(\mu_{f,m},\mu_{h,m}) \leq \epsilon / 2$ from Lemma~\ref{lem:only-gamma-matters}.

%   Now it suffices to show $\dtv(\mu_{g,m},\mu_{h,m}) \leq \epsilon /2$.
%   To see the difference between $\mu_{g,m}$ and $\mu_{h,m}$,
%   we fix an affine embedding $A:\Fp^m \to \Fp^n$.
%   Let $g' = g \circ A$ and $h' = h \circ A$.
%   Recall that we can think of $g'$ and $h'$ as distributions of functions whose ranges are $\bit$.
%   Since $\|g' - h'\|_\infty \leq \|\Gamma - \Sigma\|_\infty \leq \gamma$,
%   the probability we sample $v : 2^m \to \bit$ differs by at most $1 - (1-\gamma)^{p^m}$ between $g'$ and $h'$.
%   By setting $\tau = O(1/p^m \cdot \log 1/\epsilon)$,
%   the difference is bounded by at most $\epsilon/2$.
% \end{proof}

The lemma above motivates us to define a typical distribution obtained from functions satisfying a regularity-instance as follows.
\begin{definition}
  Let $I = (\gamma,\Gamma,C,d,\bfd,\bfh,r)$ be a regularity-instance.
  Then, we define a distribution $\mu_{I,m}$ over $\Fp^m \to \bit$ as
  \begin{align*}
    \mu_{I,m}[v] = \mu_{\Gamma \circ \bfP,m}[v]
  \end{align*}
  for each $v: \Fp^m \to \bit$, where $\bfP$ is an arbitrary polynomial sequence with complexity $C$, $\deg(\bfP) = \bfd$, $\depth(\bfP) = \bfh$, and $\rank(\bfP) \geq r$.
\end{definition}

\begin{proof}[Proof of Theorem~\ref{the:testable->regular-reducible}]
  Suppose $\delta < 1/6$ (if $\delta \geq 1/6$, we set it to, say $1/10$).
  Let $m = m(\delta)$ and $\calV = \calV(\delta) \subseteq \{\Fp^m \to \bit\}$ 
  given by Proposition~\ref{pro:canonical-tester} with the proximity parameter $\delta$.
  We first set $d = \max\{d_{\ref{lem:small-gowers-norm->small-statistical-distance}}(\delta/2,m),d_{\ref{cor:only-gamma-matters-combined}}(\delta/2,m)\}$.
  Then, we choose $\gamma = \gamma_{\ref{lem:small-gowers-norm->small-statistical-distance}}(\delta/2,m)$,
  $\tau = \min\{\tau_{\ref{lem:small-perturbation}}(\gamma,\delta,d),\tau_{\ref{cor:only-gamma-matters-combined}}(\delta/2,m)\}$,
  $\rho = \rho_{\ref{lem:small-gowers-norm->small-statistical-distance}}(\delta/2,m)$.
  We also choose $\zeta \leq (\rho/2)^{2^d}$ and $\eta:\bbN \to \bbN$ as $\eta(D) \leq \rho/2$ for any $D \in \bbN$.
  Finally, we define $r:\bbN \to \bbN$ as $r(D) \geq \max\{r_{\ref{the:regularity-instance->testable}}(\gamma,\delta/8,D,d),$ $ r_{\ref{lem:small-perturbation}}(\gamma,\delta,D,d),$ $ r_{\ref{cor:only-gamma-matters-combined}}^{(\delta/2,m)}(D)\}$ for any $D \in \bbN$, and $\overline{C} = C_{\ref{the:regularity-lemma}}(\eta, \zeta, 0, d, r)$.

  For any $1 \leq C \leq \overline{C}$,
  consider all of the (finitely many) regularity-instances $I$ such that each value of the structure function of $I$ is a multiple of $\tau$,
  the degree-bound parameter is $d$,
  the rank parameter is at least $r(C)$ for its complexity parameter $C$,
  and its complexity is at most $\max(1/\gamma,\overline{C}, d, r(\overline{C}))$.
  Let $\calR$ be the union of all these regularity-instances.
  Note that, all the above constants, as well as the size of $\calR$ are determined as a function of $\delta$ only (and the property $\calP$).
  Also from the choice of $r$, every instance in $\calR$ is of high rank with respect to $\delta$.

  We claim that we can take $\calI$ in Definition~\ref{def:regular-reducible} to be
  \begin{align*}
    \calI = \left\{I \in \calR : \sum_{v \in \calV}\mu_{I,m}[v] \geq \frac 12 \right\}.
  \end{align*}

  Suppose that a function $f$ satisfies $\calP$.
  We decompose $f$ as $f = f_1 + f_2+f_3$ using Lemma~\ref{the:regularity-lemma} with parameters $\eta$, $\zeta$, $0$ (corresponding to $C_0$), $d$, and $r$.
  Note that $f_1 $ can be expressed as $\Gamma \circ \bfP$,
  where $\bfP = (P_1,\ldots,P_C)$ is a polynomial sequence with $C \leq \overline{C}$, degrees less than $d$, and $\rank(\bfP) \geq r(C)$,
  and $\Gamma:\prod_{i \in [C]}\bbU_{\depth(P_i)+1} \to [0,1]$ is a function.

  By the construction of $\calR$,
  some regularity-instance $I \in \calR$ has a structure function $\Gamma_I$ with $\|\Gamma_I - \Gamma\|_\infty \leq \tau$.
  From Lemma~\ref{cor:only-gamma-matters-combined},
  we have $\dtv(\mu_{f_1,m}, \mu_{I,m}) \leq \delta/2$.
  Also from the choice of $\eta$ and $\zeta$,
  \begin{align*}
    \|f - f_1\|_{U^d}
    \leq
    \|f_2\|_{U^d} + \|f_3\|_{U^d} 
    \leq 
    \|f_2\|_{2}^{1/2^d} + \|f_3\|_{U^d} 
    \leq
    \zeta^{1/2^d} + \eta(C)
    \leq
    \rho/2 + \rho/2 = 
    \rho.
  \end{align*}
  From the choice of $\rho$ and $d$,
  by Lemma~\ref{lem:small-gowers-norm->small-statistical-distance}, 
  we have $\dtv(\mu_{f,m},\mu_{f_1,m}) \leq \delta/2$.
  It follows that $\dtv(\mu_{f,m},\mu_{I,m}) \leq \delta$.
  Recall that Lemma~\ref{pro:canonical-tester} indicates that $\sum_{v \in \calV}\mu_{f,m}[v] \geq \frac{2}{3}$.
  Hence, $\sum_{v \in \calV}\mu_{I,m}[v] \geq 2/3 - \delta/2 \geq 1/2$ (here we use $\delta < 1/6$), and we have $I \in \calI$.
  Also from Lemma~\ref{lem:small-perturbation} and the choice of $\tau$,
  we have $f$ is $\delta$-close to satisfying $I$.
  Hence, $f$ is indeed $\delta$-close to satisfying one of regularity-instances in $\calI$.

  Suppose now that a function $f$ is $\epsilon$-far from satisfying $\calP$.
  If $\delta \geq \epsilon$, then there is nothing to prove.
  So assume that $\delta < \epsilon$.
  If $f$ is $(\epsilon-\delta)$-close to satisfying a regularity-instance $I = (\gamma,\Gamma,C,d,\bfd,\bfh,r)\in \calI$.
  Then, there exists a polynomial sequence $\bfP$ with $\deg(\bfP) = \bfd$ and $\depth(\bfP) = \bfh$, and $\rank(\bfP) \geq r$ such that $f$ is $(\epsilon-\delta)$-close to a function $g(x) = \Gamma(\bfP(x)) + \Upsilon(x)$ with $\|\Upsilon\|_{U^d} \leq \gamma$.
  From the choice of $\gamma$ and $d$,
  by Lemma~\ref{lem:small-gowers-norm->small-statistical-distance},
  we have $\dtv(\mu_{g,m},\mu_{\Gamma\circ \bfP,m}) \leq \delta/2$.
  Also, we have $\dtv(\mu_{\Gamma \circ \bfP},\mu_I) \leq \delta/2$ from Lemma~\ref{cor:only-gamma-matters-combined}.
  Hence $\sum_{v \in \calV}\mu_{g,m}[v] \geq 1/2-\delta > 1/3$,
  and it follows that the tester accepts $g$ with probability more than $1/3$.
  This implies that $g$ is $\delta$-close to satisfying $\calP$.
  However, since $f$ is $\epsilon$-far from satisfying $\calP$,
  any function that is $(\epsilon-\delta)$-close to $f$ must be $\delta$-far from satisfying $\calP$,
  a contradiction.
\end{proof}

\section{Any Regular-Reducible Property is Locally Testable}\label{sec:reducible->testable}
In this section, we prove Theorem~\ref{the:regular-reducible->testable} using Theorem~\ref{the:regularity-instance->testable}.
Let us introduce the concept of tolerant testers.
\begin{definition}[Tolerant testers]\label{def:tolerant-tester}
  An algorithm is called an $(\epsilon_1,\epsilon_2)$-tester for a property $\calP$ if,
  given a query access to a function $f:\Fp^n \to \bit$,
  with probability at least $2/3$,
  it accepts when $f$ is $\epsilon_1$-close to $\calP$,
  and rejects when $f$ is $\epsilon_2$-far from $\calP$.
\end{definition}

The following theorem says that,
if a property is locally testable,
then we can estimate the distance to the property with a constant query complexity.
\begin{theorem}[\cite{Hatami:2013ux}]\label{the:testable->estimable}
  Let $\calP$ be an affine-invariant locally testable property.
  Then, for every $0 \leq \epsilon_1 < \epsilon_2 \leq 1$,
  there is an $(\epsilon_1,\epsilon_2)$-tester for $\calP$ whose query complexity only depends on $\epsilon_2 - \epsilon_1$ (and $\calP$),
  which is independent of the input size.
\end{theorem}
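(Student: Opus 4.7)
The plan is to combine two pillars: (i) the canonical tester characterization (Proposition~\ref{pro:canonical-tester}), which reduces membership in $\calP$ to features of the distribution $\mu_{f,m}$ of restrictions of $f$ to random affine subspaces of a constant dimension $m$; and (ii) the regularity lemma (Theorem~\ref{the:regularity-lemma}), which shows that $\mu_{f,m}$ is essentially controlled by a finite amount of structural data, namely the structure function $\Gamma$ and the degree/depth data of a high-rank polynomial sequence $\bfP$ decomposing $f$. The key leverage we gain over a non-tolerant tester is that, once we can \emph{estimate} $\mu_{f,m}$, we can search offline (with no further queries) for the closest structured witness that satisfies $\calP$.

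Given $0 \leq \epsilon_1 < \epsilon_2 \leq 1$, set $\delta = (\epsilon_2-\epsilon_1)/10$ and let $m = m(\delta)$ be as guaranteed by Proposition~\ref{pro:canonical-tester} applied with proximity parameter $\delta$. The $(\epsilon_1,\epsilon_2)$-tester proceeds as follows: sample $s = \mathrm{poly}(1/\delta, p^m)$ random affine embeddings $A_1,\ldots,A_s:\Fp^m \to \Fp^n$ and form the empirical distribution $\hat\mu$ of $\{f \circ A_i\}$; by a standard Chernoff bound, $\dtv(\hat\mu, \mu_{f,m}) \leq \delta$ with high probability. The tester then accepts iff there exists some function $g:\Fp^m \to \bit$ that is $(\epsilon_1 + O(\delta))$-close to $\calP$ on $\Fp^m$ with $\dtv(\hat\mu, \mu_{g,m})$ suitably small. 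Since $\Fp^m$ has constant size, enumerating such candidate $g$ requires no further queries to $f$.

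For completeness, if $f$ is $\epsilon_1$-close to $\calP$, then Lemma~\ref{lem:small-gowers-norm->small-statistical-distance} together with a direct averaging argument shows that $\mu_{f,m}$ is close to $\mu_{g,m}$ for some $g$ on $\Fp^m$ that is $(\epsilon_1 + O(\delta))$-close to $\calP$, so the acceptance criterion is met. For soundness, I would argue by contradiction: suppose $f$ is $\epsilon_2$-far from $\calP$ yet the tester accepts with probability $>1/3$. Then $\hat\mu \approx \mu_{g',m}$ for some $g'$ on $\Fp^m$ that is $(\epsilon_1 + O(\delta))$-close to $\calP$. Apply the regularity lemma to $f$ to obtain $f = \Gamma \circ \bfP + f_2 + f_3$ with $\bfP$ of high rank, and use a lifting argument paralleling Section~\ref{subsec:regularity-instance-soundness}: refine the polynomial factors via Lemma~\ref{lem:polynomial-regularity-lemma} so that the local structure $g'$ on $\Fp^m$ can be transported back through an affine pseudo-inverse $A^+$ to yield a modified structure function $\widetilde\Gamma$ on $\Fp^n$ with $\widetilde\Gamma \circ \bfP$ close to $\calP$ globally. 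An application of Lemma~\ref{lem:small-perturbation} then produces a function in $\calP$ at distance $\epsilon_1 + O(\delta) < \epsilon_2$ from $f$, contradicting $\epsilon_2$-farness.

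The main obstacle is the soundness direction, specifically the lifting step: given a low-dimensional near-$\calP$ witness $g'$ on $\Fp^m$, we must manufacture a high-dimensional function on $\Fp^n$ that is simultaneously close to $\calP$ and close to $f$. This mirrors the construction in Section~\ref{subsec:regularity-instance-soundness} almost verbatim, with the key technical steps being (a) invoking the polynomial regularity lemma to align the local factor arising from $g'$ with the global regularity factor of $f$, (b) using Lemma~\ref{lem:only-gamma-matters} to argue that the structural data alone determines $\mu_{\,\cdot\,,m}$ up to $o(\delta)$, and (c) Lemma~\ref{lem:small-perturbation} to convert the resulting Gowers-norm bound on the residual into an $L_1$ distance bound on the perturbation. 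All auxiliary parameters must be chosen carefully so that the cumulative slack stays below $\epsilon_2 - \epsilon_1$, which is where the quantitative dependence of the query complexity on $\epsilon_2 - \epsilon_1$ originates.
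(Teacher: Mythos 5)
First, note that the paper does not prove this statement at all: it is imported verbatim from Hatami and Lovett~\cite{Hatami:2013ux}, so there is no in-paper proof to match your approach against. Judged on its own merits, your proposal has a genuine gap at its core, in both the acceptance criterion and the completeness argument.

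The acceptance criterion ``accept iff $\dtv(\hat\mu,\mu_{g,m})$ is small for some $g:\Fp^m\to\bit$ that is $(\epsilon_1+O(\delta))$-close to $\calP$'' cannot work as stated. For a single function $g$ on $\Fp^m$, the distribution $\mu_{g,m}$ is supported on the affine orbit of $g$, which has at most $p^{O(m^2)}$ elements, whereas $\mu_{f,m}$ for a typical $f$ (even one satisfying $\calP$, e.g.\ when $\calP$ is trivial) is spread over nearly all $2^{p^m}$ functions; no single low-dimensional $g$ matches it. More fundamentally, the completeness step fails: if $f$ is $\epsilon_1$-close to some $h\in\calP$, the best general bound is $\dtv(\mu_{f,m},\mu_{h,m})\le p^m\epsilon_1$ (a union bound over the $p^m$ sampled points), and since $\epsilon_1$ is an arbitrary constant this is not $O(\delta)$. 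An $\epsilon_1$-fraction of corruptions can completely scramble the local view of $f$, which is exactly why tolerant testing does not follow from the canonical tester plus sampling --- yet your completeness argument asserts that Hamming-closeness to $\calP$ forces the local distribution to look like that of a near-witness. In effect you are assuming the conclusion: the statement that the distance to $\calP$ is (approximately) determined by constant-size local data \emph{is} the theorem. The actual proof in~\cite{Hatami:2013ux} instead shows that the distance to $\calP$ is approximately a function of the \emph{structural} data $(\Gamma,\deg(\bfP),\depth(\bfP))$ of a high-rank decomposition $f=\Gamma\circ\bfP+f_2+f_3$ --- proved by transporting a witness $h\in\calP$ close to one function into a witness close to any other function with the same structural data --- and then estimates that structural data algorithmically. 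Your soundness sketch gestures at the right transport machinery (Lemma~\ref{lem:polynomial-regularity-lemma}, Lemma~\ref{lem:only-gamma-matters}, Lemma~\ref{lem:small-perturbation}), but it is anchored to a low-dimensional witness $g'$ that generically does not exist, so the argument does not get off the ground.
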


We want to apply Theorem~\ref{the:testable->estimable} to the property of satisfying regularity-instances.
For a regularity-instance $I = (\gamma,\Gamma,C,d,\bfd,\bfh,r)$,
let $\calP_I$ be the property of satisfying $I$.
An issue here is that $\calP_I$ is locally testable with error parameter $\epsilon$ only when $r \geq r_{\ref{the:regularity-instance->testable}}(\gamma,\epsilon,C,d)$.
Hence, we cannot simply say that $\calP_I$ is locally testable regardless of $\epsilon$,
and apparently we cannot apply Theorem~\ref{the:testable->estimable}.
However, closely looking at the proof of Theorem~\ref{the:testable->estimable},
to estimate the distance to a property with parameters $\epsilon_1$ and $\epsilon_2$,
we only need that the property is locally testable with a proximity parameter $(\epsilon_2-\epsilon_1)/8$.
Hence, we have the following corollary.
\begin{corollary}\label{cor:regularity-instance-estimable}
  Let $\gamma > 0$,
  $0 \leq \epsilon_1 < \epsilon_2 \leq 1$, and $C, d \in \bbN$.
  For any regularity-instance $I = (\gamma,\Gamma,C,d,\bfd,\bfh,r)$ with rank at least $r_{\ref{the:regularity-instance->testable}}(\gamma,(\epsilon_2-\epsilon_1)/8,C,d)$,
  there is an $(\epsilon_1,\epsilon_2)$-tester for the property of satisfying $I$ with query complexity that depends only on $\epsilon_2 - \epsilon_1$ and $I$.
\end{corollary}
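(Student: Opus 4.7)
The plan is to deduce the corollary directly from Theorem~\ref{the:regularity-instance->testable} together with a quantitatively refined reading of Theorem~\ref{the:testable->estimable}, as already anticipated in the paragraph preceding the statement. The rank lower bound in the hypothesis has been engineered precisely so that the local testability guaranteed by Theorem~\ref{the:regularity-instance->testable} holds at the exact scale needed to feed the Hatami--Lovett distance-estimation machinery.

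First I would fix the regularity-instance $I = (\gamma,\Gamma,C,d,\bfd,\bfh,r)$ and let $\calP_I$ denote the property of satisfying $I$. Observe that $\calP_I$ is affine-invariant: the exact degree and depth of each $P_i$, the rank of the sequence $\bfP$, and the Gowers norm of $\Upsilon$ are all preserved by non-singular affine transformations, so if $f$ satisfies $I$ then so does $f \circ A$ for every non-singular affine $A$.

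Next, since $r \geq r_{\ref{the:regularity-instance->testable}}(\gamma,(\epsilon_2-\epsilon_1)/8,C,d)$ by hypothesis, I would apply Theorem~\ref{the:regularity-instance->testable} with proximity parameter $(\epsilon_2-\epsilon_1)/8$ to produce an $((\epsilon_2-\epsilon_1)/8)$-tester for $\calP_I$ whose query complexity depends only on $I$ and on $\epsilon_2-\epsilon_1$. This is the only place where the hypothesis on $r$ is used.

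Finally, I would invoke the proof (not merely the statement) of Theorem~\ref{the:testable->estimable} from~\cite{Hatami:2013ux}. Inspection of their argument shows that the conversion of a local tester into an $(\epsilon_1,\epsilon_2)$-tolerant tester calls the internal local tester only at proximity scales bounded below by $(\epsilon_2-\epsilon_1)/8$; hence the tester obtained in the previous step is sufficient to drive the Hatami--Lovett construction, and the resulting $(\epsilon_1,\epsilon_2)$-tester for $\calP_I$ inherits query complexity depending only on $\epsilon_2-\epsilon_1$ and $I$. The main (really the only) obstacle is this last quantitative bookkeeping: one must reopen~\cite{Hatami:2013ux} and verify that $(\epsilon_2-\epsilon_1)/8$ is indeed the critical scale at which its construction queries its internal tester. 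Granting this, the corollary follows immediately, without any new analytic input beyond what is already available from Theorem~\ref{the:regularity-instance->testable}.
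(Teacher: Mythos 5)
Your argument matches the paper's: apply Theorem~\ref{the:regularity-instance->testable} at proximity parameter $(\epsilon_2-\epsilon_1)/8$ (the rank hypothesis is calibrated so this is available), then observe, by inspecting the proof of Theorem~\ref{the:testable->estimable} in~\cite{Hatami:2013ux}, that it only ever invokes its internal local tester at proximity scale $(\epsilon_2-\epsilon_1)/8$, which is precisely the bookkeeping the paper also points to. Your explicit check that $\calP_I$ is affine-invariant is a sound and worthwhile addition, since Theorem~\ref{the:testable->estimable} requires affine-invariance as a hypothesis, though the paper records this fact only back in the introduction rather than at the corollary.
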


\begin{proof}[Proof of Theorem~\ref{the:regular-reducible->testable}]
  Suppose that a property $\calP$ is regular-reducible as per Definition~\ref{def:regular-reducible}.
  Let us fix $n$ and $\epsilon$.
  Put $s = s(\epsilon/4)$, and let $\calI$ be the corresponding set of regularity-instances for $\delta = \epsilon/4$ as in Definition~\ref{def:regular-reducible}.
  Recall that Definition~\ref{def:regular-reducible} guarantees that the number and the complexity of the regularity-instances in $\calI$ are bounded by a function of $\delta$.

  Since each regularity-instance in $\calI$ is of high rank with respect to $\delta/8$ and $\delta \leq (\epsilon - \delta) - \delta$,
  by Theorem~\ref{the:testable->estimable}, for any such $I$,
  there is a $(\delta,\epsilon-\delta)$-tester for the property of satisfying $I$ with query complexity that depends only on $\epsilon$ (and $I$).
  In particular, by repeating the algorithm of Theorem~\ref{the:testable->estimable} an appropriate number of times (that depends only on $s$),
  and taking the majority vote, 
  we get an algorithm for distinguishing between the above two cases,
  whose query complexity is a function of $\epsilon$ and $s$,
  which succeeds with probability of at least $1-1/3s$.
  As $s$ itself is bounded by a function of $\epsilon$,
  the number of queries of this algorithm is bounded by a function of $\epsilon$ only.

  We are now ready to describe our tester for $\calP$:
  Given a function $f:\Fp^n \to \bit$ and $\epsilon > 0$,
  for every $I \in \calI$,
  the algorithm uses the version of Theorem~\ref{the:testable->estimable} described in the previous paragraph,
  which succeeds with probability at least $1- 1/3s$ in distinguishing between the case that $f$ is $\delta$-close to satisfying $I$ from the case that it is $(\epsilon-\delta)$-far from satisfying it.
  If it finds that $f$ is $\delta$-close to satisfying some $I \in \calI$,
  then it accepts $f$;
  otherwise it rejects $f$.

  Observe that, as there are at most $s$ regularity-instances in $\calI$, 
  we get by the union bound that with probability at least $2/3$,
  the subroutine for estimating how far is $f$ from satisfying some $I \in \calI$ never errs.
  We now prove that the above algorithm is indeed a tester for $\calP$. 
  Suppose first that $f$ satisfies $\calP$.
  As we set $\delta = \epsilon/4$ and $\calP$ is regular-reducible to $\calI$, 
  the function $f$ must be $\delta$-close to satisfying some regularity-instance $I \in \calI$. 
  Suppose now that $G$ is $\epsilon$-far from satisfying $\calP$.
  Again, as we assume that $\calP$ is regular-reducible to $\calI$, 
  we conclude that $f$ must be $(\epsilon-\delta)$-far from satisfying all of the regularity-instances $I \in \calI$.
  We get that if $f$ satisfies $\calP$, then with probability at least $2/3$,
  the algorithm will find that $f$ is $\delta$-close to satisfying some $I \in \calI$, 
  while if $f$ is $\epsilon$-far from satisfying $\calP$,
  then with probability at least $2/3$,
  the algorithm will find that $f$ is $(\epsilon-\delta)$-far from all $I \in \calI$.
  By the definition of the algorithm, we get that with probability at least $2/3$ it distinguishes between functions satisfying $\calP$ from those that are $\epsilon$-far from satisfying $\calP$. 
  This means that the algorithm is indeed an $\epsilon$-tester for $\calP$.
\end{proof}

\section{Rank-Oblivious Regular-Reducibility}\label{sec:rank-oblivious-reducibility}
In this section, we prove Theorem~\ref{the:rank-obliviously-reducible->reducible}, and then apply it to show that the property of being a classical low-degree polynomial is locally testable.
\begin{proof}[Proof of Theorem~\ref{the:rank-obliviously-reducible->reducible}]
  Fix $\delta > 0$ and $n \in \bbN$.
  Set $\delta' = \delta / 3$.
  Let $\calI' = \calI'(\delta')$ be the set of rank-oblivious regularity-instances as in Definition~\ref{def:rank-obliviously-regular-reducible} with the parameter $\delta'$.
  The complexity of $\calI'$ is bounded from above by $s' = s'(\delta')$.

  We now describe how to construct a set of regularity-instances $\calI$ for Definition~\ref{def:regular-reducible}.
  Let $\rho = \rho_{\ref{lem:small-gowers-norm->small-statistical-distance}}(\delta',1)$ and $d = d_{\ref{lem:small-gowers-norm->small-statistical-distance}}(\delta',1)$.
  For $\gamma = \rho/3$,
  we set $r:\bbN \to \bbN$ so that $r(D) = \max\{r_{\ref{the:regularity-instance->testable}}(\delta/8,D,d), $ $
  r_{\ref{lem:Gamma-decides-gowers-norm}}(\gamma,D,d) \}$ for any $D \in \bbN$.

  For any function $f:\Fp^n \to \bit$ satisfying $\calP$,
  there exists a rank-oblivious regularity-instance $I_L = (\gamma_L,\Gamma_L,C_L,d_L,\bfd_L,\bfh_L) \in \calI_L$ such that $f$ is $\delta'$-close to satisfying $I_L$ ($L$ stands for ``low rank'').
  Hence, $f(x)$ is $\delta'$-close to a function $f_L:\Fp^n \to \bit$ of the form $f_L(x) = \Gamma_L(\bfP_L(x)) + \Upsilon_L(x)$ for a polynomial sequence $\bfP_L$ with degree $\bfd_L$ and depth $\bfh_L$, and a function $\Upsilon_L$ with $\|\Upsilon_L\|_{U^{d_L}} \leq \gamma_L$.
%  Let $r = \max\{r_{\ref{the:regularity-instance->testable}}(\delta'/8,C_L,d_L), r_{\ref{lem:Gamma-decides-gowers-norm}}(\rho/3,C_L,d_L) \}$.
  Let $\zeta_L = (\gamma/2)^{2^d}$ and $\eta_L:\bbN \to \bbN$ so that $\eta_L(D) \leq \gamma/2$ for every $D \in \bbN$.
  Then using Lemma~\ref{the:regularity-lemma},
  we have that 
  $f_L(x) = f_{L,1}(x) + f_{L,2}(x) + f_{L,3}(x)$ such that
  $f_{L,1}(x) = \Gamma_H(\bfP_H)$ for a polynomial sequence $\bfP_H$ of complexity $C_H \leq C_{\ref{lem:polynomial-regularity-lemma}}(\eta_L,\zeta_L,C_L,d,r)$,
  degree less than $d$, and rank at least $r$, and a function $\Gamma_H:\bbT^{C_H} \to [0,1]$ ($H$ stands for ``high rank'').
  Also, $\|f_{L,2}\|_2 \leq \zeta_H$ and $\|f_{L,3}\|_{U^{d}} \leq \eta(C_H)$.
  We note that $\|f_{L,2} + f_{L,3}\|_{U^d} \leq \zeta_H^{1/2^d} + \eta(C_H) \leq \gamma$.
  We add to $\calI$ the regularity-instance $I_H = (\gamma,\Gamma_H,C_H,d,\deg(\bfP_H),\depth(\bfP_H),\rank(\bfP_H))$.

  Now we see that $\calI$ satisfies the condition of Definition~\ref{def:regular-reducible}.
  First, the complexity of any regularity-instance in $\calI$ is bounded from above by a function of $\delta$.
  Also, any regularity-instance in $\calI$ is of high rank with respect to $\delta$.

  If a function $f$ satisfies $\calP$,
  then there is some $I_H \in \calI$ such that $f$ is $\delta'$-close to satisfying $I_H$ from the construction of $\calI$.
  Since $\delta' \leq \delta$,
  the function $f$ is $\delta$-close to satisfying $I_H$.

  Suppose that a function $f$ is $\epsilon$-far from satisfying $\calP$.
  Assume that $f$ is $(\epsilon-\delta)$-close to satisfying some regularity-instance $I_H = (\gamma,\Gamma_H,C_H,d,\bfd_H,\bfh_H,r_H) \in \calI$.
  Then, $f$ is $(\epsilon-\delta)$-close to a function $f_H$ of the form $f_H(x) = \Gamma_H(\bfP_H(x)) + \Upsilon_H(x)$ for a polynomial sequence $\bfP_H$ with complexity $C_H$, $\deg(\bfP_H) = \bfd_H$, $\depth(\bfP_H) = \bfh_H$, and $\rank(\bfP_H) \geq r_H \geq r$, and a function $\Upsilon_H:\Fp^n \to [-1,1]$ with $\|\Upsilon_H\|_{U^{d}} \leq \gamma$.
  From the construction of $\calI$, there is a function $g$ satisfying $\calP$ such that $g$ is $\delta'$-close to a function $g_H$ of the form $g_H = \Gamma_H(\bfQ_H(x)) + \Upsilon'_H(x)$ for a polynomial sequence $\bfQ_H$ with complexity $C_H$, $\deg(\bfQ_H) = \bfd_H$, $\depth(\bfQ_H) = \bfh_H$, and $\rank(\bfQ_H) \geq r$, and a function $\Upsilon'_H:\Fp^n \to [-1,1]$ with $\|\Upsilon'_H\|_{U^d} \leq \gamma$.
  Note that $\|f_H - g_H\|_{U^d} \leq \|\Gamma \circ \bfP_H - \Gamma \circ \bfQ_H\|_{U^d} + \|\Upsilon_H\|_{U^d} + \|\Upsilon'_H\|_{U^d}$.
  From Lemma~\ref{lem:Gamma-decides-gowers-norm},
  we have $\|\Gamma \circ \bfP_H - \Gamma \circ \bfQ_H\|_{U^d} \leq \gamma$.
  Hence, $\|f_H - g_H\|_{U^d} \leq 3\gamma = \rho$.
  From Lemma~\ref{lem:small-gowers-norm->small-statistical-distance},
  we have $\dtv(\mu_{f_H,1},\mu_{g_H,1}) \leq \delta'$.
  In particular, $\|f_H - g_H\|_1 \leq \delta'$ holds.
  This means that the distance between $f$ and $g$ is at most $\|f - f_H\|_1 + \|f_H - g_H\|_1 + \|g_H - g\|_1 \leq  \epsilon-\delta+\delta'+\delta' = \epsilon - \delta/3$,
  which is contradicting that $f$ is $\epsilon$-far from $\calP$.
  Hence, $f$ is $(\epsilon-\delta)$-far from satisfying any regularity-instance in $\calI$.
\end{proof}

We apply Theorem~\ref{the:rank-obliviously-reducible->reducible} to obtain the following, which is already known~\cite{Alon:2005jl}.
\begin{corollary}
  For any $d \in \bbN$,
  the property of being a classical degree-$d$ polynomial is locally testable.
\end{corollary}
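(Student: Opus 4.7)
The plan is to verify that the property $\calP$ of being a classical degree-$d$ polynomial is rank-obliviously regular-reducible, and then invoke Theorem~\ref{the:rank-obliviously-reducible->reducible}. Fix $\delta > 0$, and choose a Gowers-uniformity threshold $\gamma > 0$, a sufficiently fast-growing non-decreasing rank function $r:\bbN \to \bbN$, and a complexity bound $C_{\max}$ depending on $\delta$, $d$, and $p$. The family $\calI$ will consist of every rank-oblivious regularity-instance $I = (\gamma, \Gamma, C, d+1, \bfd, \bfh)$ with $C \leq C_{\max}$, entries of $\bfd$ and $\bfh$ bounded by $d$ with $h_i < d_i$, and with structure function $\Gamma:\prod_{i=1}^{C} \bbU_{h_i+1} \to \bit$ having the property that $\iota \circ \Gamma \circ \bfR_0$ is a non-classical polynomial of degree at most $d$ and depth $0$ for some high-rank polynomial sequence $\bfR_0$ of degrees $\bfd$ and depths $\bfh$. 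Since each such $\Gamma$ has finite domain, $|\calI|$ is finite and each instance has complexity bounded by a function of $\delta$.

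For completeness, given a classical degree-$d$ polynomial $f:\Fp^n \to \bit$, I would view $\iota \circ f$ as a classical polynomial $\Fp^n \to \bbT$ of degree at most $d$ and apply Lemma~\ref{lem:polynomial-regularity-lemma} to the singleton factor $\calB(\iota \circ f)$ with rank function $r$. This produces an $r$-regular semantic refinement $\bfP$ of complexity $C \leq C_{\max}$ whose defining polynomials have degree at most $d$. Semantic refinement makes $\iota \circ f$ measurable with respect to $\calB(\bfP)$, so $\iota \circ f = \Gamma' \circ \bfP$ for some $\Gamma'$ taking values in $\iota(\bit) \subseteq \bbU_1$. Setting $\Gamma = \iota^{-1} \circ \Gamma'$ yields $f = \Gamma \circ \bfP$ exactly, and the associated instance (with $\bfR_0 = \bfP$ as its high-rank witness) lies in $\calI$; in particular, $f$ satisfies it with $\Upsilon \equiv 0$.

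For soundness, suppose that $g:\Fp^n \to \bit$ satisfies some $I \in \calI$, that is, $g = \Gamma \circ \bfR + \Upsilon$ with $\bfR$ of degrees $\bfd$, depths $\bfh$, and $\|\Upsilon\|_{U^{d+1}} \leq \gamma$. I would apply Lemma~\ref{lem:degree-non-increase} with $\bfR_0$ (the high-rank witness built into the instance) and $\bfQ = \bfR$: since both sequences share the same degree and depth profile and $\bfR_0$ has rank at least $r_{\ref{lem:degree-non-increase}}(d)$, we obtain $\deg(\iota \circ \Gamma \circ \bfR) \leq \deg(\iota \circ \Gamma \circ \bfR_0) \leq d$. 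Because $\Gamma$ takes values in $\bit$, the image of $\iota \circ \Gamma \circ \bfR$ lies in $\iota(\bit) \subseteq \bbU_1$, forcing depth $0$. Hence $h := \Gamma \circ \bfR$ is itself a classical polynomial of degree at most $d$, so $h \in \calP$.

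The main obstacle is concluding $\|g - h\|_1 = \|\Upsilon\|_1 \leq \delta$, since the Gowers norm bound on $\Upsilon$ does not directly control its $L_1$ norm. To handle this, I would choose $\gamma$ small enough that Lemma~\ref{lem:small-gowers-norm->small-statistical-distance}, applied with $m = 1$ and degree parameter $d+1$, ensures $\dtv(\mu_{g,1}, \mu_{h,1}) \leq \delta/2$; since $g$ and $h$ are both $\bit$-valued, this statistical-distance bound converts to $\|g - h\|_1 \leq \delta$, exactly as in the closing step of the proof of Theorem~\ref{the:rank-obliviously-reducible->reducible}. Combining the two directions establishes the rank-oblivious regular-reducibility of $\calP$, and the corollary then follows directly from Theorem~\ref{the:rank-obliviously-reducible->reducible}.
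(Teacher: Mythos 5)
Your high-level strategy -- establish rank-oblivious regular-reducibility and invoke Theorem~\ref{the:rank-obliviously-reducible->reducible} -- matches the paper's, but your construction of $\calI$ is substantially over-engineered, and it contains one concrete error in the choice of the degree-bound parameter.

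On the over-engineering: you regularize $\iota\circ f$ via Lemma~\ref{lem:polynomial-regularity-lemma} to obtain a high-complexity polynomial sequence, and then for soundness you need Lemma~\ref{lem:degree-non-increase} together with a ``high-rank witness'' $\bfR_0$ baked into the definition of which instances are admitted to $\calI$. None of this is needed, precisely because rank-oblivious instances impose no rank condition: the trivial complexity-$1$ witness $\bfP=(\iota\circ f)$ already satisfies an instance of the form $(\rho,\Gamma,1,\overline{d},(k),(0))$ with the single polynomial being $\iota\circ f$ itself, degree $k$, depth $0$. This is what the paper does, with one instance $I_k$ per possible degree $k$. With this choice, soundness is immediate without Lemma~\ref{lem:degree-non-increase}: a degree-$k$, depth-$0$ polynomial $P_1$ is automatically a classical polynomial, so $\Gamma\circ P_1$ already lies in $\calP$. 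The regularization you carry out by hand is effectively redone inside the proof of Theorem~\ref{the:rank-obliviously-reducible->reducible}, so you gain nothing by doing it in the reduction, and you incur extra obligations (verifying $\iota\circ\Gamma\circ\bfR$ has degree at most $d$, existence of a high-rank witness on $\Fp^n$, etc.).

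On the bug: you set the degree-bound of each instance to $d+1$, so the pseudo-random part $\Upsilon$ is controlled only in the $U^{d+1}$ norm. But Lemma~\ref{lem:small-gowers-norm->small-statistical-distance} specifies its own Gowers order $d'=d_{\ref{lem:small-gowers-norm->small-statistical-distance}}(\delta/2,1)$, which depends on $\delta$ and $m$ and may well exceed $d+1$; it is not a free parameter you can set to $d+1$. Since for bounded functions the Gowers norms are monotone nondecreasing in the order, a bound on $\|\Upsilon\|_{U^{d+1}}$ gives no bound on $\|\Upsilon\|_{U^{d'}}$ when $d'>d+1$, and the lemma cannot be invoked as you write. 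The fix is to take the degree-bound parameter of the instance to be $\overline{d}:=d_{\ref{lem:small-gowers-norm->small-statistical-distance}}(\delta/2,1)$ (the degree constraints $d_i\le d<\overline{d}$ are then automatic), which is exactly the choice the paper makes. With this correction and the simplification above, your argument collapses to the paper's.
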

\begin{proof}
  We show that the property is rank-obliviously regular-reducible.
  Fix $\delta > 0$ and $n \in \bbN$.
  We set $\rho = \rho_{\ref{lem:small-gowers-norm->small-statistical-distance}}(\delta/2,1)$ and $\overline{d} = d_{\ref{lem:small-gowers-norm->small-statistical-distance}}(\delta/2,1)$.
  For each $k \in \{0,\ldots,d-1\}$,
  we define a rank-oblivious regularity-instance $I_k = (\rho,\mathrm{id},1,\overline{d},(k),(0))$.
  Then, we choose $\calI = \{I_0,\ldots,I_d\}$ as the set of rank-oblivious regularity-instances to which we reduce $\calP$.
  Now we check that $\calI$ satisfies the condition of Definition~\ref{def:rank-obliviously-regular-reducible}.

  Suppose that a function $f:\Fp^n \to \bit$ is a classical polynomial of degree $k < d$.
  Then, it is clear that $f$ satisfies the rank-oblivious regularity-instance $I_k$.

  Suppose that $f:\Fp^n \to \bit$ is $\epsilon$-far from classical polynomials of degree less than $d$.
  Assume for contradiction that $f$ is $(\epsilon-\delta)$-close to a rank-oblivious regularity-instance $I_k$ for some $k \in \{0,\ldots,d\}$.
  Then, $f$ is $(\epsilon-\delta)$-close to a function $f':\Fp^n \to \bit$ of the form $f'(x) = P(x) + \Upsilon(x)$,
  where $P:\Fp^n \to \bit$ is a classical polynomial of degree $k$ and $\Upsilon:\Fp^n \to [-1,1]$ is a function with $\|\Upsilon\|_{U^{\overline{d}}} \leq \rho$.
  From Lemma~\ref{lem:small-gowers-norm->small-statistical-distance},
  the distance between $f'$ and $P(x)$ is at most $\delta/2$.
  This implies the distance between $f$ and $P$ is $\epsilon - \delta + \delta/ 2 = \epsilon - \delta/2 < \epsilon$,
  which is a contradiction.
\end{proof}

\section*{Acknowledgments}
The author would like to thank Arnab Bhattacharyya for valuable discussions.

\end{document}